 \newif\ifabstract
 \newif\iffull
\renewcommand{\orcidID}[1]{\href{https://orcid.org/#1}{\includegraphics[scale=.03]{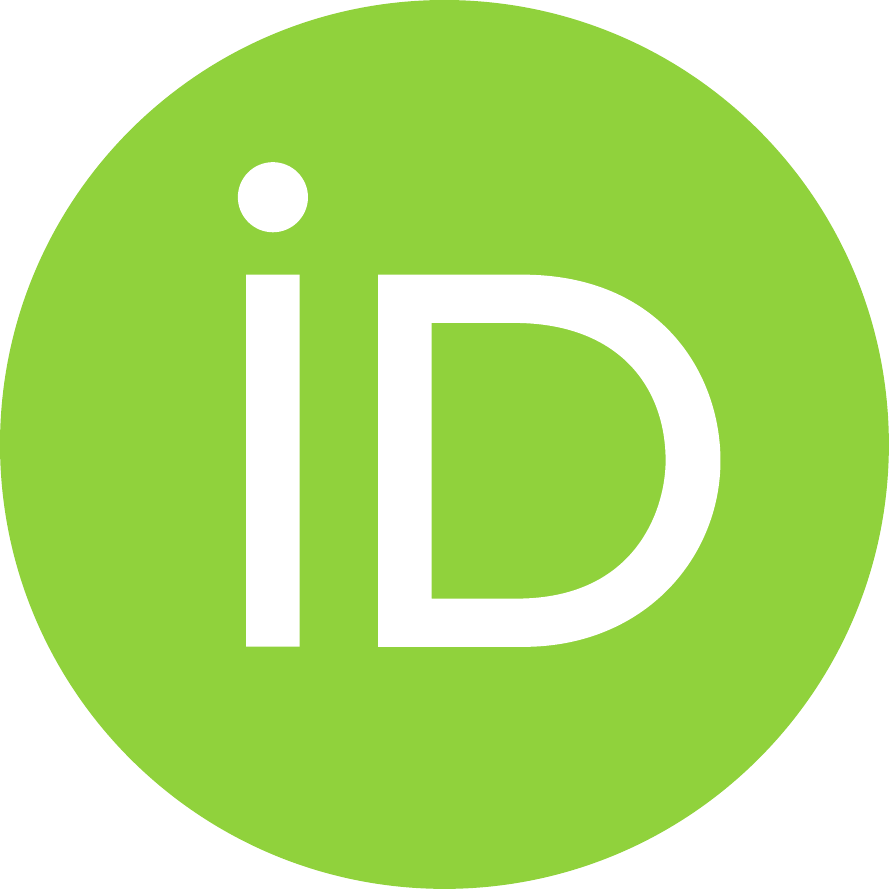}}}
\newcounter{section-preserve}
\newcounter{theorem-preserve}
\newcommand{\blank}[1]{}
\newtoks\magicAppendix
\newtoks\magictoks
\newif\iflater
\long\def\later#1{\magictoks={#1}%
	\edef\magictodo{\noexpand\magicAppendix={\the\magicAppendix 
			\the\magictoks%
	}}
	\magictodo}
\long\def\both#1{\magictoks={#1}%
	\edef\magictodo{\noexpand\magicAppendix={\the\magicAppendix 
			\noexpand\setcounter{theorem-preserve}{\noexpand\arabic{theorem}}%
			\noexpand\setcounter{theorem}{\arabic{theorem}}%
			\noexpand\setcounter{section-preserve}{\noexpand\arabic{section}}%
			\noexpand\setcounter{section}{\arabic{section}}%
			\noexpand\let\noexpand\oldsection=\noexpand\thesection
			\noexpand\def\noexpand\thesection{\thesection}
			\noexpand\let\noexpand\oldlabel=\noexpand\label
			\noexpand\let\noexpand\label=\noexpand\blank
			\the\magictoks%
			\noexpand\setcounter{theorem}{\noexpand\arabic{theorem-preserve}}%
			\noexpand\setcounter{section}{\noexpand\arabic{section-preserve}}%
			\noexpand\let\noexpand\thesection=\noexpand\oldsection
			\noexpand\let\noexpand\label=\noexpand\oldlabel
	}}
	\magictodo
	\the\magictoks}
\def\magicappendix{\latertrue \the\magicAppendix}
\def\abstractlater#1{\ifabstract\later{#1}\fi}
\long\def\onlyfull#1{}
\long\def\onlyabstract#1{#1}  
\long\def\both#1{{\color{blue}#1}}
\long\def\later#1{{\color{brown}#1}}
\long\def\onlyfull#1{{\color{green!60!black}#1}}
\long\def\onlyabstract#1{}
\def\magicappendix{}
\newtheorem{observation}{Observation}
\newcommand{\skel}[1]{{\ensuremath{\overline{#1}}}}
\newcommand{\dual}[1]{{\ensuremath{\overline{#1^*}}}}
\newcommand{\leaveout}[1]{}
\begin{document}

\setcounter{page}{1}

\title{Optimal-area visibility representations of outer-1-plane graphs\thanks{Work of TB supported by NSERC; FRN RGPIN-2020-03958. Work of GL supported by MIUR, grant 20174LF3T8 ``AHeAD: efficient Algorithms for HArnessing networked Data''. Work of FM supported by Dipartimento di Ingegneria,  Universit\`a degli Studi di Perugia, grant RICBA19FM.}}

\author{Therese Biedl\inst{1}\orcidID{0000-0002-9003-3783}({\small \Letter}) \and
Giuseppe Liotta\inst{2}\orcidID{0000-0002-2886-9694} \and \\
Jayson Lynch\inst{1}\orcidID{0000-0003-0801-1671}
\and 
Fabrizio Montecchiani\inst{2}\orcidID{0000-0002-0543-8912}}
\authorrunning{T. Biedl et al.}
%
\institute{David R.~Cheriton School of Computer Science, University of Waterloo, \email{\{biedl,jayson.lynch\}@uwaterloo.ca} \and
Department of Engineering, University of Perugia, \email{\{giuseppe.liotta,fabrizio.montecchiani\}@unipg.it}}
\maketitle  

\begin{abstract}

This paper studies optimal-area visibility representations of $n$-vertex outer-1-plane graphs, i.e. graphs with a given embedding where all vertices are on the boundary of the outer face and each edge is crossed at most once. We show that any graph of this family admits an embedding-preserving visibility representation whose area is $O(n^{1.5})$ and prove that this area bound is worst-case optimal. We also show that $O(n^{1.48})$ area can be achieved if we represent the vertices as L-shaped orthogonal polygons or if we do not respect the embedding but still have at most one crossing per edge. We also extend the study to other representation models and, among other results, construct asymptotically optimal $O(n\, pw(G))$ area  bar-1-visibility representations, where $pw(G)\in O(\log n)$ is the pathwidth of the outer-1-planar graph $G$.

\keywords{Visibility Representations \and Outer-$1$-plane Graphs \and Optimal Area}
\end{abstract}

\section{Introduction} \label{se:introduction}

Visibility representations are one of the oldest topics studied in
graph drawing:  Otten and van Wijk showed in 1978 that every planar graph 
has a visibility representation~\cite{OW78}. 
A {\em rectangle visibility representation} consists of an assignment of disjoint axis-parallel boxes to vertices, and axis-parallel segments to edges in such a way that edge-segments end at the vertex boxes of their endpoints and do not intersect any other vertex boxes. (They can hence be viewed as lines-of-sight, though not every line-of-sight needs to give rise to an edge.)  

Vertex-boxes are permitted to be degenerated into a segment or a point (in our pictures we thicken them slightly for readability).
The construction by Otten and van Wijk
is also {\em uni-directional} (all
edges are vertical) and
all vertices are {\em bars} (horizontal segments
or points).  Multiple other papers studied uni-directional bar-visibility
representations 
and showed that these exist if and only if the
graph is planar~\cite{DHVM83,Wis85,RT86,TT86}. 

Unless otherwise specified, we assume throughout this paper that any visibility
representation $\Gamma$  (as well as the generalizations we list below) are on an \emph{integer grid}. 
This means that all corners of vertex polygons, as well as all \emph{attachment points} (places where edge-segments end at vertex polygons) have integer coordinates. 
The \emph{height} [\emph{width}] of  $\Gamma$ is the number of grid rows [columns] that intersect $\Gamma$. The \emph{area} of $\Gamma$ is its width times its height. 
Any visibility representation can be assumed to have 
\onlyfull{width and height $O(n)$, hence }area $O(n^2)$ (see also Obs.~\ref{obs:widthHeightUpperTrivial}).
Efforts have been made to obtain small constants factors%
\iffull
, with the current best bound (to our knowledge) area $\tfrac{4}{3}n^2
+o(n^2)$~\cite{FLLY07} and $\tfrac{1}{2}n^2+o(n^2)$ for 4-connected planar graphs~\cite{HZ08}.
\else
~\cite{FLLY07,HZ08}.
\fi

In this paper, we focus on {\em bi-directional} rectangle visibility 
representations, i.e., both horizontally and vertically drawn edges are 
allowed.    For brevity we drop `bi-directional' and `rectangle' from now
on.  
\onlyfull{Such visibility representations can exist only if the graph is the union of two planar graphs (and so has $O(n)$ edges), but this is not a characterization~\cite{HSV99}.}
Recognizing graphs that have a visibility representation is \NP-hard
\cite{Shermer96a}.  
\iffull
(If some aspects of the visibility representation
are specified, e.g.~the edge-directions or the planarization of the
drawing, then graphs that have such a representation  can
be recognized in polynomial time~\cite{BFL-DCG,SW03}.)
\fi
Planar graphs have visibility representations where the area is at most $n^2$, and $\Omega(n^2)$ area is sometimes required~\cite{FKK-GD96}.
For special graph classes, $o(n^2)$ area can be achieved, such as
$O(n{\cdot}pw(G))$ area for outer-planar graphs~\cite{Bie-WAOA12} (here
$pw(G)$ denotes the {\em pathwidth} of $G$, defined later), and $O(n^{1.5})$
area for series-parallel graphs~\cite{Bie-DCG11}.    
The latter two results do not give embedding-preserving drawings (defined below).

\paragraph{Variations of visibility representations.}  For graphs that
do not have visibility representations (or where the area-requirements are
larger than desired), other models have been introduced that are similar
but more general.    One option is to increase the dimension, 
see e.g.~\cite{BEF+93,DBLP:journals/tcs/AngeliniBKM19,ABD+18}.   We will not do this here, and instead allow more complex shapes for vertices or edges.  Define an \emph{orthogonal polygon [polyline]} to be a polygon [polygonal line] whose segments are horizontal or vertical.   We use OP as convenient shortcut for `orthogonal polygon'.  All variations that we study below are what we call \emph{OP-$\infty$-orthogonal drawings}.%
\footnote{We do not propose actually drawing graphs in this model (its readability would not be good), but it is convenient as a name for ``all drawing models that we study here''.}  
Such a drawing is an assignment of disjoint orthogonal polygons $P(\cdot)$ to vertices and orthogonal poly-lines to edges such that the poly-line of edge $(u,v)$ connects $P(u)$ and $P(v)$.  Edges can intersect each other, and they are specifically \emph{allowed} to intersect arbitrarily many vertex-polygons (hence the ``$\infty$''), but no two edge-segments are allowed to overlap each other.    The \emph{vertex complexity} is the maximum number of reflex corners in a 
\iffull
vertex-polygon (after slightly thickening polygons so that every corner of
the polygon is adjacent to points in the strict interior and ``reflex corner''
is well-defined).  The
\else
vertex-polygon, and the
\fi
\emph{bend complexity} is the maximum number of bends in an edge-poly-line.
\todo[inline]{TB: To be consistent, we should either renamed vertex complexity to polygon complexity, or bend complexity to edge complexity.  I'd prefer the former.}

One variation that has been studied is 
\emph{bar-$(k,j)$-visibility representation}, where vertices are bars, edges are vertical line segments, edges may intersect up to $k$ bars that are not their endpoints, and any vertex-bar is intersected by at most $j$ edges that do not end there.
\onlyfull{Note that bar-$(k,j)$-visibility representations are
by definition uni-directional (the concept could be generalized to 
bi-directional, but this appears not to have been studied).  }
Bar-$(k,\infty)$-visibility representations  were introduced by Dean et al.~\cite{DEG07}, and testing whether a graph has one is
\NP-hard~\cite{BHKN15}.  All 1-planar graphs have a bar-$(1,1)$-visibility
representation~\cite{Bra14,EKL+14}.  In this paper, 
we will use \emph{bar-$1$-visibility representation} 
as a convenient shortcut for ``unidirectional bar-$(1,1)$-visibility representation''.

Another variation is \emph{OP visibility representation}, 
\todo{TB: insert ``which are OP-$\infty$-orthogonal drawings''}
where edges must be horizontal or vertical segments that do not intersect vertices except at their endpoints.
OP visibility representations were introduced
by Di Giacomo et al.~\cite{DDE+18} and they exist for all 1-planar graphs.    
There are further studies, considering the vertex complexity that may be required in such drawings~\cite{DBLP:journals/comgeo/Brandenburg18,DDE+18,DBLP:journals/tcs/EvansLM16,DBLP:journals/ipl/LiottaM16,DBLP:journals/tcs/LiottaMT21}.   

Finally, there are \emph{orthogonal box-drawings}, where vertices must be boxes and edges do not intersect vertices except at their endpoints. We will not review the (vast) literature on orthogonal box-drawings (see e.g.~\cite{BK-ESA97,DBLP:conf/esa/BlasiusBR14} and the references therein), but they exist for all graphs.

All OP-$\infty$-orthogonal drawings can be assumed to have area $O(n^2)$ (assuming constant complexity and $O(n)$ edges), see also Obs.~\ref{obs:widthHeightUpperTrivial}.  
We are not aware of any prior work that tries to reduce the area to $o(n^2)$ for specific graph~classes.

\paragraph{Drawing outer-1-planar graphs. }  An \emph{outer-1-planar graph} (first defined by Eggleton~\cite{Eggleton}) is a graph that has a drawing $\Gamma$ in the plane such that all vertices are on the infinite region of $\Gamma$ and every edge has at most one crossing.  \onlyfull{(More detailed definitions are below.)  }
We will not review the (extensive) literature on their superclass of 1-planar graphs here; see e.g.~\cite{KLM17} or \cite{DBLP:journals/csur/DidimoLM19,BeyondPlanarBook} for even more related graph classes. Outer-1-planar graphs can be recognized in linear time~\cite{HEK+15,Auer16}.
 All outer-1-planar graphs are planar~\cite{Auer16}, and so can be drawn in $O(n^2)$ area, albeit not embedding-preserving.    

Very little is know about drawing outer-1-planar graphs in area $o(n^2)$.
Auer et al.~\cite{Auer16} claimed to construct planar visibility representations of area $O(n\log n)$, but this turns out to be incorrect~\cite{Bie-GD20} since some outer-1-planar graphs require $\Omega(n^2)$ area in planar drawings.     Outer-1-planar graphs do have orthogonal
box-drawings with bend complexity 2 in $O(n\log n)$ area~\cite{Bie-GD20}.

\renewcommand{\arraystretch}{1.2}
\begin{table}[b]
\iffull
\centering
\begin{tabular}{|c|c|c|c|}
\hline
drawing-style & e-p & lower bound & upper bound  \\
\hline\hline
visibility representation & \ding{51} & $\Omega(n^{1.5})$ [Thm.~\ref{thm:lowerGeneral}] & $O(n^{1.5})$ [Thm.~\ref{thm:sqrt}] \\
\hline
complexity-1 OP vis.repr.& \ding{51} & $\Omega(n \,pw(G))$ [Thm.~\ref{thm:pwLowerBounds}] & $O(n^{1.48})$ [Thm.~\ref{thm:breaking}]\\
\hline
 1-bend orth.~box-drawing & \ding{51} & $\Omega(n \,pw(G))$ [Thm.~\ref{thm:pwLowerBounds}]& $O(n^{1.48})$ [Thm.~\ref{thm:breaking}]\\
\hline
visibility representation & \ding{55} & $\Omega(n \,pw(G))$ [Thm.~\ref{thm:pwLowerBounds}]& $O(n^{1.48})$ [Thm.~\ref{thm:breaking}]\\
\hline
bar visibility representation  & \ding{51} & $\Omega(n^2)$ [Thm.~\ref{thm:lowerBVR}]& $O(n^2)$ [Thm.~\ref{thm:barVR}]\\
\hline
bar-1-visibility representation & \ding{55} & $\Omega(n \,pw(G))$ [Thm.~\ref{thm:pwLowerBounds}]& $O(n\,pw(G))$ [Thm.~\ref{thm:pathDrawing}]\\
\hline
planar visibility representation & \ding{55} & $\Omega(n (pw(G){+}\chi(G)))$ [Thm.\ref{thm:pwLowerBounds}\&\ref{thm:crossingLowerBounds}]& $O(n(pw(G){+}\chi(G)))$ [Thm.\ref{thm:pathDrawing}]\\
\hline 
complexity-4 OP vis.repr.& \ding{51} & $\Omega(n \,pw(G))$ [Thm.~\ref{thm:pwLowerBounds}] & $O(n\,pw(G)))$ [Thm.~\ref{thm:complexity4}]\\
 \multicolumn{4}{c}{\todo[inline]{if we add those results?}} \\
\hline
\end{tabular}
\else
\begin{scriptsize}
\begin{tabular}{|c|c|c|c|}
\hline
drawing-style & e-p & lower bound & upper bound  \\
\hline\hline
visibility representation & \ding{51} & $\Omega(n^{1.5})$ [Thm.~\ref{thm:lowerGeneral}] & $O(n^{1.5})$ [Thm.~\ref{thm:sqrt}] \\
\hline
complexity-1 OP vis.repr.& \ding{51} & $\Omega(n \,pw(G))$ [Thm.~\ref{thm:pwLowerBounds}] & $O(n^{1.48})$ [Thm.~\ref{thm:breaking}]\\
\hline
 1-bend orth.~box-drawing & \ding{51} & $\Omega(n \,pw(G))$ [Thm.~\ref{thm:pwLowerBounds}]& $O(n^{1.48})$ [Thm.~\ref{thm:breaking}]\\
\hline
visibility representation & \ding{55} & $\Omega(n \,pw(G))$ [Thm.~\ref{thm:pwLowerBounds}]& $O(n^{1.48})$ [Thm.~\ref{thm:breaking}]\\
\hline
bar visibility representation  & \ding{51} & $\Omega(n^2)$ [Thm.~\ref{thm:lowerBVR}]& $O(n^2)$ [Thm.~\ref{thm:barVR}]\\
\hline
bar-1-visibility representation & \ding{55} & $\Omega(n \,pw(G))$ [Thm.~\ref{thm:pwLowerBounds}]& $O(n\,pw(G))$ [Thm.~\ref{thm:pathDrawing}]\\
\hline
planar visibility representation & \ding{55} & $\Omega(n (pw(G){+}\chi(G)))$ [Thm.\ref{thm:pwLowerBounds}\&\ref{thm:crossingLowerBounds}]& $O(n(pw(G){+}\chi(G)))$ [Thm.\ref{thm:pathDrawing}]\\
\hline 
\end{tabular}
\end{scriptsize}
\fi  

\vspace{1mm}
\caption{Upper and lower bound on the area achieved in various drawing styles in this paper. The column title e-p stands for embedding-preserving, $pw(G)$ denotes the pathwidth of $G$, $\chi(G)$ denotes the number of crossings in the 1-planar embedding.}
\label{ta:overview}
\end{table}

\paragraph{Our results.}  We study visibility representations (and variants) of outer-1-planar graphs, especially drawings that preserve the given outer-1-planar embedding.  Table~\ref{ta:overview} gives an overview of all results that we achieve. As our main result, we give tight upper and lower bounds on the area of embedding-preserving visibility representations (Section~\ref{sec:lower} and \ref{sec:sqrt}): It is $\Theta(n^{1.5})$. 
We find it especially interesting that the lower bound is neither $\Theta(n\log n)$ nor $\Theta(n^2)$ (the most common area lower bounds in graph drawing results). 
\ifabstract
Also, a tight area bound is not known for embedding-preserving visibility representations of outerplanar~graphs. 
\fi
\onlyfull{
Also, the bounds match. 
In contrast, even for the very well-studied class of outer-planar graphs, no such matching bounds were known: The lower bound is $\Omega(n\log n)$ (obtained by combining a complete binary tree with a node of degree $n$, see also Observation~\ref{obs:widthHeightLowerTrivial} and Theorem~\ref{thm:pwLowerBounds})
but the smallest-area embedding-preserving visibility representations have, to our knowledge, $O(n^{1+\varepsilon})$ for arbitrary $\varepsilon>0$.  (This result has not been stated explicitly, but follows by taking an embedding-preserving straight-line drawing of width $O(n^\varepsilon)$ \cite{FPR20}, rotating it by 90$^\circ$, and then transforming it to a visibility representation with the same height and embedding \cite{Bie-GD14}.  This then has height $O(n^\varepsilon)$ and width $O(n)$.)
} 

We also show in Section~\ref{sec:breaking} that the $\Omega(n^{1.5})$ area bound can be undercut if we relax the drawing-model slightly, and show that area $O(n^{1.48})$ can be achieved in three other drawing models. 
Finally we give further area-optimal results in other drawing models in Section~\ref{sec:bar}.  To this end, we generalize a well-known lower bound using the pathwidth to \emph{all} OP-$\infty$-orthogonal drawings, and also develop an area lower bound for the planar visibility representations of outer-1-planar graphs based on the number of crossings in an outer-1-planar embedding. Then we give constructions that show that these can be matched asymptotically. We conclude in Section~\ref{sec:conclusion} with open problems.

\onlyabstract{For space reasons we only sketch the proofs of most theorems; a $(\star)$ symbol indicates that further details can be found in the appendix.}
\section{Preliminaries}
\label{sec:preliminaries}
\abstractlater{\section{Missing details from Section~\ref{sec:preliminaries}}}

We assume familiarity with standard graph drawing terminology~\cite{DBLP:books/ph/BattistaETT99}.   Throughout the paper, $n$ and $m$ denotes the number of vertices and edges.  \todo{Ideally add: ``In all drawings, no three edges cross in a point.}

A planar drawing of a graph subdivides the plane into topologically connected regions, called \emph{faces}. The unbounded region is called the \emph{outer-face}. 
\onlyfull{
A \emph{planar embedding} $\mathcal E(G)$ of a planar graph $G$ is an equivalence class of planar drawings that define the same set of circuits that bound faces. A planar embedding is uniquely defined by the circular list of the edges around each vertex together with the choice of the outer-face. The concept of 
embedding can be extended drawings with crossings as follows. }
An \emph{embedding} $\mathcal E(G)$ of a graph $G$ is an equivalence class of drawings whose \emph{planarizations} (i.e., planar drawings obtained after replacing crossing points by dummy vertices) define the same set of circuits that bound faces. An \emph{outer-1-planar drawing} is a drawing with at most one crossing per edge and all vertices on the outer-face. An \emph{outer-1-planar graph} is a graph admitting an outer-$1$-planar drawing. An \emph{outer-1-plane graph} $G$ is 
a graph with a given outer-1-planar embedding $\mathcal E(G)$. 
We use $\chi(G)$ for the number of crossings in $\mathcal E(G)$.
An outer-$1$-plane graph $G$ is \emph{plane-maximal} if it is not possible to add any uncrossed edge without losing outer-$1$-planarity or simplicity.  The \emph{planar skeleton} of an outer-$1$-plane graph $G$, denoted by $\skel{G}$, is the graph induced by its uncrossed edges.
If $G$ is plane-maximal, 
then $\skel{G}$ is a 2-connected graph whose interior faces have degree 3 or 4  \cite{DBLP:journals/ijcga/DehkordiE12}.
Let $\dual{G}$ be the weak dual of $\skel{G}$ \todo{ideally add definition of weak dual; this is not so standard} and call it the \emph{inner tree} of $G$. Since $\skel{G}$ is outer-plane, $\dual{G}$ is a tree (as the name suggests), and since each face of $\skel{G}$ has degree 3 or 4, every vertex of $\dual{G}$ has degree at most 4.  An {\em outer-$1$-path} $P$ is an outer-$1$-plane graph whose inner tree $\dual{P}$ is a path. 

Consider a graph $G$ with a fixed embedding $\mathcal{E}(G)$.
An OP-$\infty$-orthogonal drawing is
\emph{embedding-preserving} if (1) walking around each vertex-polygon we encounter the incident edges in the same cyclic order as in $\mathcal{E}(G)$, and (2) no edge crosses a vertex, and the planarization of the OP-$\infty$-orthogonal drawing has the same set of faces as $\mathcal{E}(G)$.
Note that bar-1-visibility representations by definition violate (2), but we call them embedding-preserving if (1) holds.


Our results will only consider the smaller dimension of the drawing (up to rotation the height), because the other dimension does not matter (much): 

\both{
\begin{observation}
\label{obs:widthHeightUpperTrivial}
Let $\Gamma$ be an OP-$\infty$-orthogonal drawing with constant vertex and bend complexity.  Then we may assume that the width and height is $O(n+m)$.
\end{observation}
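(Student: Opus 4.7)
The plan is to use a standard \emph{compaction} argument on $\Gamma$. Call a grid row \emph{significant} if it contains a corner of a vertex polygon, a bend of an edge polyline, or an attachment point of an edge at a vertex polygon, and define significant columns symmetrically. First I would bound the number of significant rows and columns by $O(n+m)$; then I would collapse every maximal empty strip between two consecutive significant rows (and symmetrically for columns) without altering the combinatorial structure of~$\Gamma$.

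The counting step is immediate from the hypotheses. The constant vertex-complexity assumption gives $O(1)$ corners per vertex polygon, hence $O(n)$ corners in total; the constant bend-complexity assumption gives $O(1)$ bends per edge polyline, and each edge also contributes two attachment points, hence $O(m)$ edge features in total. Since each such feature lies on exactly one row and one column, there are at most $O(n+m)$ significant rows and at most $O(n+m)$ significant columns.

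For the compaction step, consider a maximal horizontal strip $S$ lying strictly between two consecutive significant rows. By the choice of ``significant'', $S$ contains no polygon corner, no edge bend, and no attachment point, so its intersection with $\Gamma$ consists only of polygon interiors that extend straight across $S$ and of vertical edge-segments running from the top of $S$ to its bottom at fixed, pairwise distinct $x$-coordinates. Collapsing $S$ to unit height preserves all these vertical pieces together with their left-to-right order, and hence preserves every crossing, every face of the planarization, and the cyclic order of edges around each vertex (the attachment points themselves sit on significant rows and columns, which are never contracted). The symmetric argument applies to vertical strips between consecutive significant columns. Leaving one unit of slack between significant lines then bounds both height and width by a constant times the number of significant lines, giving the claimed $O(n+m)$.

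The step I expect to require the most care is verifying that the topology is genuinely preserved in the OP-$\infty$-orthogonal model, where edges may traverse arbitrarily many vertex polygons and where two vertex polygons can lie arbitrarily close together. The resolution is the observation just used: every combinatorially meaningful feature---a crossing, an attachment point, a polygon corner, or the local cyclic order of edges around a vertex---is witnessed on a significant row or column, and these are exactly the lines that the compaction keeps intact; the empty strips only carry pairwise disjoint parallel pieces whose identities and relative orders survive the collapse.
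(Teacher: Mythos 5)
Your proposal is correct and follows essentially the same route as the paper's proof: the paper likewise counts the $O(n+m)$ horizontal and vertical segments guaranteed by the constant vertex/bend complexity and then deletes (collapses) empty rows and columns, charging at most one row per horizontal segment and one column per vertical segment. Your version merely spells out the compaction and topology-preservation details that the paper leaves implicit.
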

}
\later{
\begin{proof}
Due to the restrictions on the complexity/bends, we have $O(n{+}m)$ segments in edge-polylines or vertex-polygons.
Since all segments are horizontal and vertical, we can delete empty rows/columns and so need at most one row [column] per horizontal [vertical] segment (and often less).
\end{proof}
}

\onlyfull{(Obs.~\ref{obs:widthHeightUpperTrivial} was mentioned for visibility representations in \cite{Bie-GD14}).  }

\both{
\begin{observation}
\label{obs:widthHeightLowerTrivial}
Let $\Gamma$ be an OP-$\infty$-orthogonal drawing with constant vertex complexity in a $W\times H$-grid.  Then $\max\{W,H\}\in \Omega(\text{maximum degree of $G$})$.
\end{observation}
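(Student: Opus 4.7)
The plan is to count how many edge-endpoints can attach to the boundary of a single vertex-polygon in a bounded grid. Let $v$ be a vertex of maximum degree $\Delta$, and let $P(v)$ be its orthogonal polygon in $\Gamma$.

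First, I would argue that $P(v)$ has $O(1)$ sides. By assumption the vertex complexity is constant, so $P(v)$ has $O(1)$ reflex corners; since $P(v)$ is a simple orthogonal polygon, the number of convex corners exceeds the number of reflex corners by exactly $4$, so the total number of corners, and hence the number of horizontal and vertical boundary segments, is $O(1)$. Because $\Gamma$ lies in a $W \times H$ grid, each boundary segment has length at most $\max\{W,H\}$, so the perimeter of $P(v)$ is $O(\max\{W,H\})$, and the number of grid points on $\partial P(v)$ is $O(\max\{W,H\})$ as well.

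Next I would bound the number of edges incident to $v$ using these grid points. Each edge incident to $v$ is an orthogonal poly-line ending at an attachment point on $\partial P(v)$, and attachment points lie on the integer grid. At any single grid point $p$ on $\partial P(v)$, only the (at most) four axis-aligned directions are available, and those pointing into $P(v)$ are unusable, so at most a constant number of edge-segments can leave $p$ without overlapping each other; the no-overlap condition in the definition of OP-$\infty$-orthogonal drawings therefore permits only $O(1)$ edges to attach at any one grid point. Consequently the total number of edges incident to $v$ is $O(\max\{W,H\})$, giving $\Delta = O(\max\{W,H\})$ and hence $\max\{W,H\} = \Omega(\Delta)$.

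The main (and essentially only) subtlety is handling attachment points that are shared by several edges leaving in different directions: the constant-complexity assumption controls the number of sides, and the no-overlap condition controls the number of edges per grid point, but both ingredients are needed to pass cleanly from ``number of grid points on $\partial P(v)$'' to ``number of edges at $v$''. Everything else is a direct length count on the boundary of $P(v)$.
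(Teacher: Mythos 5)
Your proof is correct and rests on the same idea as the paper's: a constant-complexity polygon has $O(1)$ boundary segments, each accommodating only $O(\max\{W,H\})$ integer attachment points with $O(1)$ edges per point, so the maximum degree is $O(\max\{W,H\})$. The paper phrases this as a contradiction argument splitting vertically- and horizontally-attaching edges between horizontal and vertical segments of $P(v)$, whereas you bound all boundary grid points at once, but the counting ingredient is essentially identical.
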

}
\later{
\begin{proof}
After rotation we may assume that $W\geq H$.
Let $v$ be the vertex with maximum degree $\Delta$, and let $k\in O(1)$ be the complexity of polygon $P(v)$.
If $W\geq \Delta/(8k+8) \in \Omega(\Delta)$ then we are done, so assume not.  
Consider a horizontal segment $s$ of polygon $P(v)$.
This intersects at most $\Delta/(8k+8)$ columns, hence at most $\Delta/(4k+4)$ edges can end vertically at $s$.  (This bound could actually be improved to $\Delta/(2k+2)$, but this makes no difference asymptotically.)  
Polygon $P(v)$ has $2k+2$ horizontal segments, so at most $(2k+2)\cdot \Delta/(4k+4)=\Delta/2$ edges can end vertically at $P(v)$.
So there are at least $\deg(v)-\Delta/2=\Delta/2$ edges that attach horizontally  at $P(v)$.  Since $P(v)$ has $2k+2$ horizontal edges, at least one of them hence has length $\Delta/(8k+8)$ or more. Therefore $H\geq \Delta/(8k+8)>W$, a contradiction.
\end{proof}
} 

\onlyabstract{Obs.~\ref{obs:widthHeightUpperTrivial} holds because we can delete empty rows and columns (and was mentioned for visibility representations in \cite{Bie-GD14}); Obs.~\ref{obs:widthHeightLowerTrivial} holds since some  vertex-polygons must have sufficient width or height for its incident edges. $(\star)$} 

\paragraph{Remark:}
In consequence of Obs.~\ref{obs:widthHeightLowerTrivial}, if we know a lower bound $f(G)$ on the width and height of a drawing, then (after adding degree-1 vertices to achieve maximum degree $\Theta(n)$) we know that any drawing of the resulting graph $G'$ has (up to rotation) width $\Omega(n)$ and height $\Omega(f(G))$, so area $\Omega(n\, f(G))$.  This is assuming $G'$ is within the same graph class and $f(G')\in \Theta(f(G))$;
both hold when we apply this below.

\section{Lower bound on the height}
\label{sec:lower}

\abstractlater{\section{Missing details from Section~\ref{sec:lower}}}

In this section, we show that embedding-preserving
visibility representations must have height $\Omega(\sqrt{n})$ for some outer-1-plane graphs.
A crucial ingredient is a lemma that studies the case where the height of vertex-boxes is restricted.

\def\movedProof{
Fix an arbitrary embedding-preserving visibility representation $\Gamma$ of $G_{h,\ell}$ where all vertex-boxes intersect at most $h$ rows.
Consider one copy $H$ of
$H_{h,\ell}$ and the box $P(v_0)$ of $v_0$.  
We say that $H$ {\em is right of $v_0$} if some part of the right side of $P(v_0)$ belongs to the interior of the induced drawing of $H$.
Since $P(v_0)$ intersects at most $h$ rows, there are at most $h$ edges that attach horizontally at the right side of $P(v_0)$.  If $H$ is right of $v_0$, then these attachment points are entirely used up by edges from $v_0$ into $H$ and/or edges from $v_0$ to leaves that come before and after $H$ at $v_0$.    (There are $h-1$ leaves on each side of $H$, which uses up all such points since the embedding is respected.)

So at most one copy of $H_{h,\ell}$ is right of $v_0$, and symmetrically at most one
copy of $H_{h,\ell}$ is left of $v_0$.
So there exists a copy $H$ of $H_{h,\ell}$ that is neither left nor right of $v_0$, and so the interior of the induced drawing of $H$ uses only the top side of $P(v)$ or only the bottom side of $P(v)$.  Up to symmetry, we may hence assume that  all edges from $v_0$ to $H$ go downward 
from $v_0$.  Since the drawing respects the embedding, edge $(v_0,a_1)$ must be the leftmost among the edges from $v_0$ into $H$.
}  

\begin{lemma}
\label{le:lower-2}
For any $h,\ell>0$ there exists an outer-1-plane graph $G_{h,\ell}$ 
with $O(h \cdot \ell)$ vertices such that  any  embedding-preserving visibility representation in which each vertex-box intersects at most $h$ rows, has width and height $\Omega(\ell)$.
\end{lemma}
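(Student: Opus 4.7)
The plan is to prove the lemma by induction on $\ell$, building $G_{h,\ell}$ recursively around a hub vertex $v_0$ that has three copies of a smaller gadget $H_{h,\ell-1}$ attached as pendant subgraphs, together with $h-1$ leaf-neighbors of $v_0$ inserted in the cyclic embedding between every two consecutive copies and at both boundary positions. To keep the vertex count at $O(h\cdot\ell)$ in spite of three arms, I would make only one arm a full recursive copy and let the other two be shallow ``spoiler'' gadgets whose sole job is to saturate the embedding; the recurrence $|V(G_{h,\ell})|\le |V(G_{h,\ell-1})|+O(h)$ then solves to $O(h\cdot\ell)$. The base case $\ell=1$ is handled by taking $G_{h,1}$ to be a star on $\Theta(h)$ vertices, for which the claimed bounds are trivial.

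For the inductive step, given any embedding-preserving visibility representation $\Gamma$ of $G_{h,\ell}$ in which every vertex-box occupies at most $h$ rows, I would use exactly the pigeonhole already sketched in the \movedProof\ block: the right side of $P(v_0)$ has at most $h$ attachment points, and if any copy of $H_{h,\ell-1}$ is right of $v_0$ then those slots are entirely consumed by the edges from $v_0$ into that copy plus the $h-1$ flanking leaves on each side in the embedding. Hence at most one copy can lie right of $v_0$ and, symmetrically, at most one left; with three arms at least one copy $H$ must therefore be neither left nor right. WLOG $H$ lies below $v_0$, and the leftmost edge $(v_0,a_1)$ attaches at a position pinned by the embedding.

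Next I would invoke the induction hypothesis on $H$. The portion of $\Gamma$ strictly below the row of $v_0$ forms an embedding-preserving visibility representation of $H_{h,\ell-1}$ whose vertex-boxes still have height at most $h$, so by induction its width and height are each $\Omega(\ell-1)$. Adding the row containing $v_0$ lifts the height bound of the full drawing to $\Omega(\ell)$. For the matching width bound of $\Omega(\ell)$, I would rerun the pigeonhole on the top and bottom sides of $P(v_0)$: by the same counting at most one arm lies above $v_0$ and at most one below, so at least one of the three arms must actually lie to the left or right of $v_0$, and its inductive width bound $\Omega(\ell-1)$ translates into width $\Omega(\ell)$ for the whole drawing.

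The main obstacle I anticipate is arguing rigorously that the $h-1$ flanking leaves of a right-positioned copy $H$ really attach on the right side of $P(v_0)$ and do not wrap around onto its top or bottom side, since without this property the pigeonhole collapses. This is where the height-at-most-$h$ restriction on $P(v_0)$ and the embedding-preserving cyclic order have to be combined carefully: a leaf attached near the top-right corner but technically on the top side would subvert the count, so the argument must quantify how many leaves can sneak into the top/bottom neighborhood of the $H$-attachments and either absorb them into the constant or slightly over-provision the number of flanking leaves (say $2h$ rather than $h-1$) to compensate.
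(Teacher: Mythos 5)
There is a genuine gap, and it starts with the construction itself: your gadget has no crossings. The graph you describe (hub, recursively attached arms, flanking leaves, star base case) is tree-like, hence outerplanar, and nothing in it forces any edge to be drawn horizontally. The paper's $G_{h,\ell}$ is built from a path $v_0,\dots,v_\ell$ with \emph{two crossed $K_4$'s} attached at every path edge; the crossings are the engine of the lower bound, because a vertical edge $(v_0,b_1)$ crossed by $(a_1,v_1)$ forces $(a_1,v_1)$ to be horizontal, which pins attachments on the left and right sides of $P(v_1)$, and then the $h-1$ intermediate leaf edges together with the height-$\le h$ constraint on $P(v_1)$ force the next level to attach on the bottom side, i.e.\ the drawing is forced to cascade downward (one new row per level) and simultaneously widen (each $P(v_i)$ at least $4$ wider than $P(v_{i+1})$). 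Without crossings there is no such forcing, and your inductive step already shows the symptom: the assertion that ``the portion of $\Gamma$ strictly below the row of $v_0$ forms an embedding-preserving visibility representation of $H_{h,\ell-1}$'' is unjustified --- even if all edges from $v_0$ into the chosen arm attach at the bottom of $P(v_0)$, the arm's drawing can double back sideways and upward and reuse rows, so you cannot simply invoke the induction hypothesis on the region below $v_0$. The paper avoids this by proving a stronger, carefully conditioned claim (all edges at $P(v_0)$ go downward with a specified edge leftmost $\Rightarrow$ at least $\ell+1$ rows and $P(v_0)$ of width $\ge 4\ell+1$) and re-establishing that condition at every level via the crossing structure.

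Two further problems. First, the ``spoiler arms'' defeat your own pigeonhole: if only one arm is a full recursive copy, the copy that ends up neither left nor right of $v_0$ may be a spoiler, and then there is nothing to recurse on. The paper needs no such trick: the three copies are all full copies of the \emph{path-shaped} gadget $H_{h,\ell}$ (so the count is $\Theta(h\ell)$ with just a factor $3$), and the three-way pigeonhole is used only once, at the top level; the descent inside a single copy is driven by the crossings, not by repeated branching. Second, your width argument is a non sequitur: from ``at most one arm right and at most one left'' it does not follow that some arm \emph{is} left or right --- all three could attach via the top and bottom of $P(v_0)$. The paper instead gets the width bound from the same cascade, via the $4\ell+1$ lower bound on the width of $P(v_0)$.
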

\begin{proof}
\begin{figure}[t]
\centering
\subfigure[$H_{1,\ell}$ (top), $H_{h,\ell}$ (bottom) ]{\includegraphics[width=0.44\textwidth,page=9]{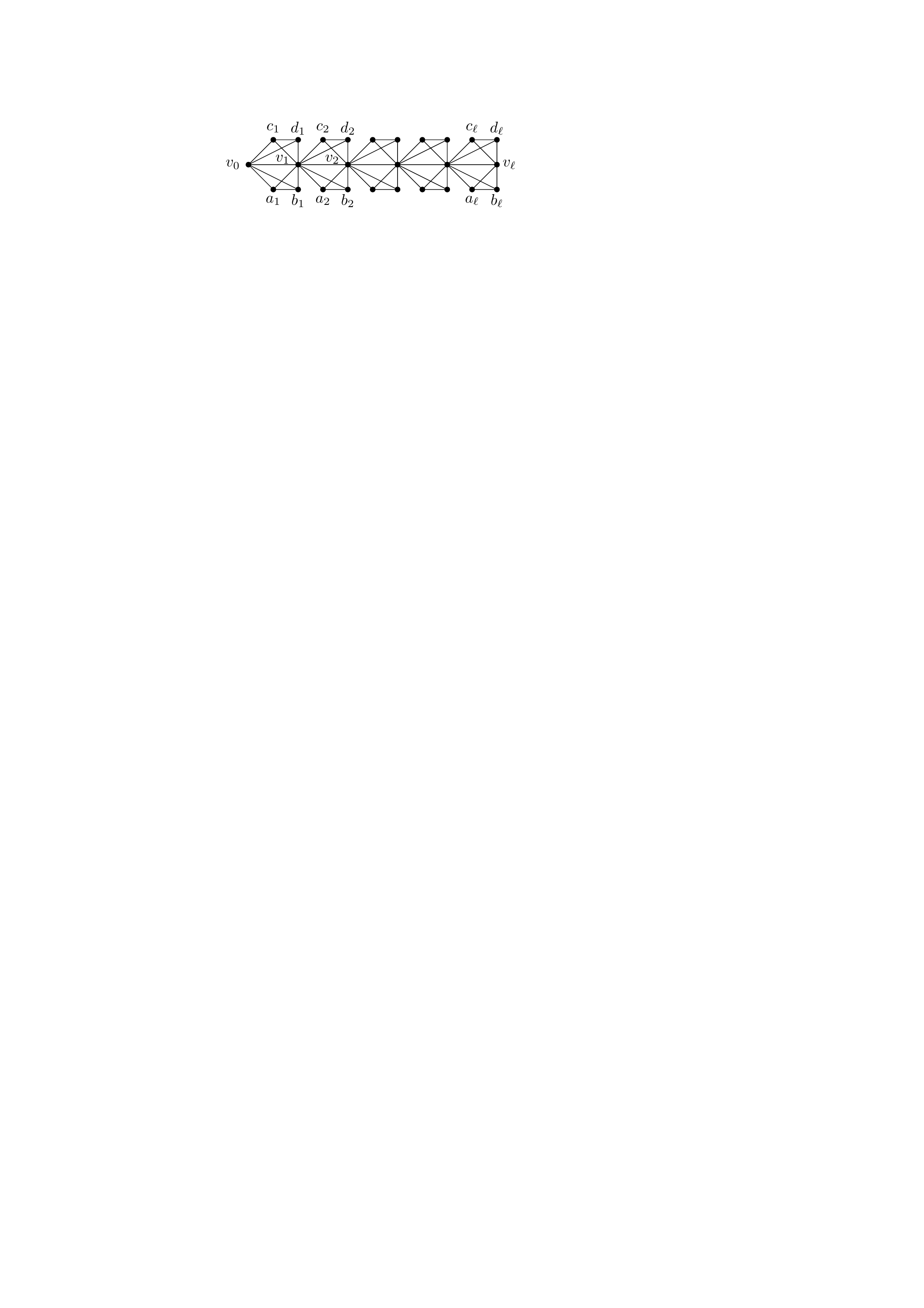}\label{fig:H1-2}}
\subfigure[$G_{h,\ell}$]{\includegraphics[width=0.44\textwidth,page=8]{visibilityBad.pdf}\label{fig:H3-2}}
\caption{The graph for Lemma~\ref{le:lower-2}.}
\label{fig:H-2}
\end{figure}
To build graph $G_{h,\ell}$, 
we first need a graph $H_{1,\ell}$ with $5\ell+1$ vertices, depicted in Fig.~\ref{fig:H1-2} (top).  This graph consists of
a path $v_0,\dots,v_\ell$ such that at each edge $(v_{i-1},v_i)$
(for $1\leq i\leq \ell$) there are two attached $K_4$'s
$\{v_{i-1},a_i,b_i,v_i\}$ and
$\{v_{i-1},c_i,d_i,v_i\}$
(drawn such that $(v_{i-1},b_i)$ crosses $(v_i,a_i)$ and $(v_{i-1},d_i)$ crosses $(v_i,c_i)$).
\todo{TB: reinserted this since the embedding is crucial}
  
Next define $H_{h,\ell}$ for $h\geq 2, \ell\geq 1$ by
taking $H_{1,\ell}$ and adding $2(\ell-1)(h-2)$ vertices of degree
1 (we call these {\em leaves}), as shown in Fig.~\ref{fig:H1-2} (bottom). 
Namely, at each vertex $v_i$ for $0<i<\ell$, we add $h-2$ leaves between $b_i$ and $a_{i+1}$ (in the
order around $v_i$), and another $h-2$ leaves between $c_{i+1}$ 
and $d_i$. 
Clearly graph $H_{h,\ell}$ is outer-1-planar.  

Graph
$G_{h,\ell}$ consists of three copies of $H_{h,\ell}$, with the three 
vertices $v_0$ combined into one, see also Fig.~\ref{fig:H3-2}.    
Furthermore, add $h-1$ leaves at $v_0$ between any two copies, i.e., 
between $c_1$ of one copy and $a_1$ of the next copy. 
Graph $G_{h,\ell}$ has $n=15\ell+1+6(\ell-1)(h-2)+ 3(h-1) \in \Theta(\ell h)$ 
vertices.%
\iffull
\footnote{Graph $G_{h,\ell}$ is not maximal-planar, but we can make it maximal-planar (hence 2-connected) by adding edges, and the same lower bound holds.}
\fi

\begin{figure}[ht]
\centering
\includegraphics[scale=0.8,page=10]{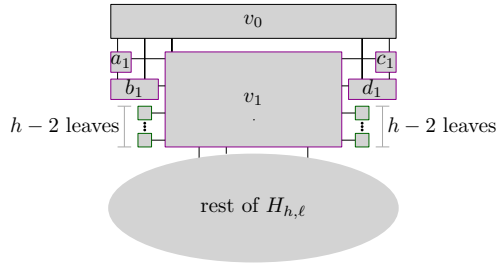}
\caption{Illustration for the proof of Lemma~\ref{le:lower-2}.}
\label{fig:H4-2}
\end{figure}

\onlyfull{
\movedProof
This is enough to prove that the  width and height is $\Omega(\ell)$,  due to the following.
}

\onlyabstract{One can argue $(\star)$ that inside any embedding-preserving visibility representation $\Gamma$ of $G_{h,\ell}$, there exists a copy of $H_{h,\ell}$ whose drawing $\Gamma_H$ satisfies (up to symmetry) the premise of the following claim.}

\begin{claim}\label{cl:gamma}
Let $\Gamma_H$ be an embedding-preserving visibility representation of $H_{h,\ell}$
such that all edges at box $P(v_0)$ go downward, with
edge $(v_0,a_1)$ leftmost among them.  Assume that 
all boxes of $\Gamma_H$  intersect at most $h$ rows. Then $\Gamma_H$ uses at least 
$\ell+1$ rows 
and $P(v_0)$ has  width at least $4\ell+1$.
\end{claim}
\begin{proof}
We proceed by induction on $\ell$.  In the base case ($\ell=1$) we have 
five vertical downward edges at $v_0$; this means that the height is at least 2 and $P(v_0)$  must have width at least 5 as required.

Now assume $\ell\geq 2$ and study the five downward edges from $v_0$ 
to $a_1,b_1,v_1,c_1,\allowbreak d_1$, see also Fig.~\ref{fig:H4-2}.
The vertical edges $(v_0,b_1)$ and $(v_0,d_1)$ are crossed by 
edges $(a_1,v_1)$ and $(v_1,c_1)$, which means that the latter two edges 
must be horizontal.    Since $(v_0,a_1)$ is leftmost, and the embedding
is preserved, edge $(a_1,v_1)$ attaches on the left side of $P(v_1)$ while $(v_1,c_1)$
attaches on the right side. 

The counter-clockwise order of edges at $v_1$ contains 
$h-1$ edges (to $b_1$ and leaves) between $a_1$ and $a_2$.
Since $P(v_1)$ intersects at most $h$ rows, and $(v_1,a_1)$ attaches
on its left side, therefore $(v_1,a_2)$ can {\em not} attach on its left side.
Likewise $(v_1,c_2)$ can {\em not} attach on the right side of $P(v_1)$.  To preserve the embedding, therefore
the edges from $v_1$
to the rest of $H_{h,\ell}$ must be drawn downward from $P(v_1)$, with $(v_1,a_2)$ leftmost.
Also observe that 
$H_{h,\ell}\setminus \{v_0,a_1,b_1,c_1,d_1\}$ 
contains a copy of $H_{h,{\ell-1}}$.  
Applying induction, there are at least $\ell$ rows below $P(v_1)$, and  $P(v_1)$ has width at least $4\ell-3$.  Adding at least one row for $P(v_0)$, and
observing that $P(v_0)$ must be at least four units wider than $P(v_1)$ proves the claim.
\end{proof}

So the claim holds, and $\Gamma_H$ (and with it $\Gamma$)
has width and height $\Omega(\ell)$.  
\end{proof}
\abstractlater{
\paragraph{Missing part of the proof of Lemma~\ref{le:lower-2}:} 
\movedProof
}

\onlyfull{
\noindent As a first consequence, we obtain a lower bound for any visibility representation.
}
\onlyabstract{
\noindent As a consequence, we obtain two lower bound for visibility representation.
}
\onlyfull{
\todo{The spacing between the theorems and the proofs below is quite bad.
This has probably something to do with the `both' and `later' that's used
here, but even throwing in \% doesn't fix it.  Probably remove those commands 
once the GD version is finalized.}
}
\both{\begin{theorem}
\label{thm:lowerGeneral}
For any $N$ there is an $n$-vertex outer-1-plane graph with $n\geq N$
such that any embedding-preserving visibility representation has area 
$\Omega(n^{1.5})$.
\end{theorem}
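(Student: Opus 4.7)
The plan is to invoke Lemma~\ref{le:lower-2} on $G_{h,\ell}$ with $h=\ell=\Theta(\sqrt{n})$ and then lift the resulting bound via the Remark at the end of Section~\ref{sec:preliminaries}. First, I would fix $h=\ell=\lfloor c\sqrt{n}\rfloor$ for a small constant $c$, so that $G_{h,\ell}$ has at most $n$ vertices and lies in the outer-1-plane class. The intermediate target is: every embedding-preserving visibility representation $\Gamma$ of $G_{h,\ell}$ has both width $W$ and height $H$ in $\Omega(\sqrt{n})$. Once that is established, the Remark (add degree-1 leaves at $v_0$ to push the maximum degree up to $\Theta(n)$ while staying outer-1-planar) combined with Obs.~\ref{obs:widthHeightLowerTrivial} gives $\max(W,H)\in\Omega(n)$, while $\min(W,H)\in\Omega(\sqrt{n})$ is inherited from the induced embedding-preserving sub-drawing of $G_{h,\ell}$. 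Hence area $\geq W\cdot H\in\Omega(n\cdot\sqrt{n})=\Omega(n^{1.5})$ for the resulting $n$-vertex graph.

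The bound $\min(W,H)\in\Omega(\sqrt{n})$ for $G_{h,\ell}$ I would prove by a two-layer case analysis keyed to the hypothesis of Lemma~\ref{le:lower-2}. If every vertex-box of $\Gamma$ intersects at most $h$ rows, the lemma directly yields $W,H\in\Omega(\ell)=\Omega(\sqrt{n})$ and we are done. Otherwise some vertex-box intersects more than $h$ rows, so $H\geq h=\Omega(\sqrt{n})$ for free, but $W$ still needs a lower bound. For this I would rotate $\Gamma$ by $90^\circ$: since an outer-1-planar embedding is specified by the cyclic orders at the vertices together with the outer face, and rotation preserves both, the rotated drawing is again an embedding-preserving visibility representation of $G_{h,\ell}$, only with rows and columns swapped. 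Split once more: if every vertex-box of $\Gamma$ has width at most $h$, apply Lemma~\ref{le:lower-2} to the rotated drawing to get $W\in\Omega(\ell)$; otherwise some vertex-box has width exceeding $h$, yielding $W\geq h$ immediately. Combining the two layers, $\min(W,H)\in\Omega(\sqrt{n})$ in every case.

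The main obstacle is the asymmetry of Lemma~\ref{le:lower-2}: it controls both dimensions only under a hypothesis on the \emph{vertical} extent of the vertex-boxes, and so cannot by itself rule out a drawing that is wide and short in which one box is made taller than $h$ to escape the hypothesis. Reusing the lemma after a $90^\circ$ rotation is the natural workaround, and the only subtlety is checking that the rotated picture is still an embedding-preserving visibility representation of the same graph, which is immediate from the definition of an outer-1-plane embedding. With that in place, the theorem follows from the Remark without any further computation.
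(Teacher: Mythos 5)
Your proposal is correct and follows essentially the same route as the paper: take $G_{h,\ell}$ with $h=\ell=\Theta(\sqrt{N})$, add $\Theta(n)$ leaves at $v_0$ so that Obs.~\ref{obs:widthHeightLowerTrivial} forces one dimension to be $\Omega(n)$, and split on whether some vertex-box is taller than $h$ (giving height at least $h$) or all boxes have height at most $h$ (so Lemma~\ref{le:lower-2} applies). The only cosmetic difference is that you bound \emph{both} dimensions of the induced $G_{h,\ell}$-drawing by $\Omega(\sqrt{n})$ via a $90^\circ$ rotation, whereas the paper simply assumes $W\geq H$ up to symmetry and bounds the height alone.
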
}%
\later{%
\begin{proof}%
Set $h=\ell=\lceil \sqrt{N} \rceil$ and define graph $G$ to be 
of $G_{h,\ell}$ with $\ell^2$ further leaves added at $v_0$ (at arbitrary places); this has $n=\Theta(h\ell)=\Theta(N)$ vertices.
Fix an arbitrary embedding-preserving visibility 
representation of $G$ in a $W\times H$-grid.  Up to symmetry, assume
$W\geq H$.  By Obs.~\ref{obs:widthHeightLowerTrivial} we have $W\geq \deg(v_0)\geq \ell^2\in \Omega(n)$.
%
If any vertex-box has
height more than $h$, then this alone requires $H\geq h= \ell \in \Omega(\sqrt{n})$
and we are done.  If all vertex-boxes have height at most $h$, then by
Lemma~\ref{le:lower-2} the height is $\Omega(\ell)=\Omega(\sqrt{n})$, and again we are done.
\end{proof}

\todo[inline]{TB: It would be nice to work out the exact constant in the $\Omega$, at least for some values of $N$.    I wouldn't put the proof itself here, but some sentence such as ``One can show that with a more careful choice of $h,\ell$ the lower bound is actually $c n^{1.5}$.''  (where $c$ gets replaced by whatever).  I think $c$ is not too too small (maybe around $\tfrac{1}{25}$?), so this lower bound is not as useless as others. }
}

\onlyfull{
\noindent As a second consequence, we can get a stronger lower-bound for bar visibility representations.
}

\both{
\begin{theorem}
\label{thm:lowerBVR}
For any $N$ there is an $n$-vertex outer-1-plane graph with $n\geq N$ such that any embedding-preserving bar visibility representation has area $\Omega(n^2)$.
\end{theorem}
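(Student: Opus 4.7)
The plan is to specialize Lemma~\ref{le:lower-2} to the case $h=1$. In any bar visibility representation every vertex-polygon is a horizontal bar (a horizontal segment or a point), so it has height $1$ and intersects exactly one grid row. Hence the premise ``every vertex-box intersects at most $h$ rows'' of Lemma~\ref{le:lower-2} is automatically satisfied with $h=1$ by any embedding-preserving bar visibility representation, with no extra combinatorial work required on our part.

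Given $N$, take $\ell$ to be any integer with $15\ell+1 \geq N$ and let $G := G_{1,\ell}$. The construction of Lemma~\ref{le:lower-2}, applied with $h=1$, produces an outer-$1$-plane graph on $n = 15\ell+1 = \Theta(N)$ vertices: three copies of $H_{1,\ell}$ glued at $v_0$, with no extra leaves between copies since $h-1=0$. The lemma then guarantees that any embedding-preserving visibility representation of $G$ in which each vertex-box intersects at most one row has \emph{both} width and height $\Omega(\ell) = \Omega(n)$; recall that the two dimension bounds arise from the two simultaneous conclusions of Claim~\ref{cl:gamma} (at least $\ell+1$ rows below $v_0$, and $P(v_0)$ of width at least $4\ell+1$). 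Multiplying these two bounds yields area $\Omega(n)\cdot\Omega(n) = \Omega(n^2)$ for every embedding-preserving bar visibility representation of $G$, as claimed.

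I do not foresee any real obstacle here: all of the combinatorial content has already been absorbed into Lemma~\ref{le:lower-2} and Claim~\ref{cl:gamma}, so Theorem~\ref{thm:lowerBVR} is essentially an immediate corollary of Lemma~\ref{le:lower-2} with $h$ set to its smallest meaningful value. The only point that may deserve a sentence of justification is the reduction from ``bar'' to $h=1$, which relies on the standard convention, inherited from the Otten--van Wijk construction cited in the introduction, that a bar is horizontal; this is consistent with the paper's treatment of visibility representations as bi-directional with degenerate, row-aligned vertex-boxes.
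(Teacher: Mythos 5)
Your proposal is correct and matches the paper's own proof: both instantiate Lemma~\ref{le:lower-2} with $h=1$ (the paper sets $\ell=N$, you pick $\ell$ with $15\ell+1\geq N$, an immaterial difference), use the fact that bars intersect only one row to satisfy the lemma's hypothesis, and multiply the resulting $\Omega(\ell)=\Omega(n)$ bounds on width and height to get area $\Omega(n^2)$. No gaps.
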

}
\later{
\begin{proof}
Set $h=1$ and $\ell=N$ and apply Lemma~\ref{le:lower-2} to $G_{h,\ell}$.  This graph has $\Theta(\ell)=\Theta(N)$ vertices and requires width and
height $\Omega(\ell)=\Omega(n)$ in any embedding-preserving visibility
representation where boxes intersect only one row.
\end{proof}
}

\onlyabstract{Roughly speaking, Theorem~\ref{thm:lowerGeneral} uses $G_{\sqrt{N},\sqrt{N}}$, with leaves added to have maximum degree $\Theta(n)$, while Theorem~\ref{thm:lowerBVR} uses $G_{1,N}$.  $(\star)$. The bounds then hold by Lemma~\ref{le:lower-2} and Obs.~\ref{obs:widthHeightLowerTrivial}.}

\abstractlater{Figure~\ref{fig:linear} shows drawings of the lower-bound graph with linear area in other drawing-models.}
\later{
\begin{figure}[p]
\centering
\subfigure{\includegraphics[width=0.65\textwidth,page=11]{visibilityBad.pdf}}
\subfigure{\includegraphics[width=0.65\textwidth,page=5]{visibilityBad.pdf}}
\subfigure{\includegraphics[width=0.65\textwidth,page=4]{visibilityBad.pdf}}
\subfigure{\includegraphics[width=0.65\textwidth,page=12]{visibilityBad.pdf}}
\caption{Drawings of $G_{h,l}$ with constant height, from top to bottom: 
an embedding-preserving ortho-polygon visibility representation with vertex complexity $1$, 
a bar-visibility representation for a changed embedding (see the red edges),
an embedding-preserving bar-1-visibility representation, 
and
an embedding-preserving 1-bend orthogonal box-drawing (the red edges have a bend). 
\label{fig:linear}}

\end{figure}
} 

\iffull
Both theorems are tight, as we will show below.
We remark that the graphs in the proof of Theorem~\ref{thm:lowerGeneral} admit drawings of height $\Theta(1)$ (hence area $\Theta(n)$) if we adopt different drawing paradigms; see Fig.~\ref{fig:linear}.
\fi

\section{Optimal Area Drawings}

\abstractlater{\section{Missing details from Section~\ref{sec:sqrt}}}
\label{sec:sqrt}

In this section we show how to compute an embedding-preserving visibility representation  of area $O(n^{1.5})$ which is tight by Thm.~\ref{thm:lowerGeneral}.   
By Obs.~\ref{obs:widthHeightUpperTrivial} it suffices to construct a drawing of height $O(\sqrt{n})$.
\onlyfull{Our construction is quite lengthy and outlined as follows.      After some preliminaries (Section~\ref{sec:sqrt_prelim}) we discuss how to choose a path $\pi$ in $\dual{G}$ (Section~\ref{sec:sqrt_pi}).  This path has properties that were crucial for so-called \emph{LR-drawings} of trees \cite{Chan02}; our drawing approach is inspired by doing LR-drawings of the inner dual, but of course we must draw the primal graph.  
Consider first how to draw a graph $P_\pi$ whose inner dual is a path (Section~\ref{sec:sqrt_path}).  This is easy if we assume that each ``hanging subgraph'' (i.e., a maximum subgraph in $G\setminus P_\pi$) has a drawing where the two vertices in common with $P_\pi$ are in a specific position.  Unfortunately  our path-drawing does not satisfy this condition.  So we change the approach and first draw a larger subgraph that includes $P_\pi$.  This subgraph consists of a 
``cap'' (Section~\ref{sec:sqrt_cap}), which we can draw easily to satisfy the position-requirement, and the ``handle'', for which we use the path-drawing.  By increasing the height proportionally to the degree of some vertex, we can combine the cap-drawing with the path-drawing (Section~\ref{sec:sqrt_umbrella}).
\todo{Renamed roof to cap and change $R$ to $C$.  Undoubtedly a few got overlooked; watch out for this during proofreading.}
To remove the dependency on the degree, we actually draw multiple caps before drawing the rest (Section~\ref{sec:sqrt_kumbrella}) and can then guarantee $O(\sqrt{n})$ height and hence $O(n^{1.5})$ area.
} 

\onlyabstract{Our construction is quite lengthy, so we mostly sketch it here via figures.
We assume that $G$ is maximal-planar and a \emph{reference-edge} $(s,t)$ on the outer-face of $G$ is fixed, and first choose a path $\pi$ in dual tree $\dual{G}$ (rooted at the face incident to $(s,t)$).  Let $F\in \Theta(n)$ be the size of $\dual{G}$ (hence the number of inner faces of $\skel{G}$).  As shown by Chan for binary trees \cite{Chan02} and generalized by us to arbitrary trees \cite{biedl2021generalized}, $\pi$ can be chosen such that $\alpha^p+\beta^p\leq (1-\delta)F^p$, where $\alpha$ [$\beta$] is the maximum size of a left [right] subtree of $\pi$, $p=0.48$ and $\delta>0$ is a constant.
Define a recursive function $h(F)=\max_{\alpha,\beta} \{h(\alpha)+h(\beta)\}+O(\sqrt{F})$
(with appropriate constants and base cases). Here the maximum is over all choices of $\alpha,\beta$ that satisfy the inequality.  
We construct a drawing of height $h(F)$.


So we first discuss how to draw the outer-1-path $P_\pi$ whose inner dual is $\pi$, 
plus all its \emph{hanging subgraphs} (i.e., maximum subgraphs in $G\setminus P_\pi$).  To do so, first create a visibility representation of $P_\pi$ on 5 rows such that edges with attached hanging subgraphs are drawn horizontally in the top or bottom row 
(see Fig.~\ref{fig:drawPathAbstract}).
Assume that each hanging subgraph $H$ has a 
\emph{$TC_{\sigma,\tau}$-drawing} (for $\{\sigma,\tau\}=\{1,2\}$), i.e., a drawing where the endpoints of the reference-edge occupy the $T$op $C$orners and have height $\sigma$ and $\tau$.  Then we can easily merge all hanging subgraphs, after expanding some boxes of $P_\pi$ one row outward.  The resulting drawing has height $h(\alpha)+h(\beta)+O(1)$ since all hanging subgraphs below [above] correspond to left [right] subtrees of $\pi$.\footnotemark

\begin{figure}[htp!]
\centering
{\includegraphics[page=6,scale=0.43,trim=0 0 100 0,clip]{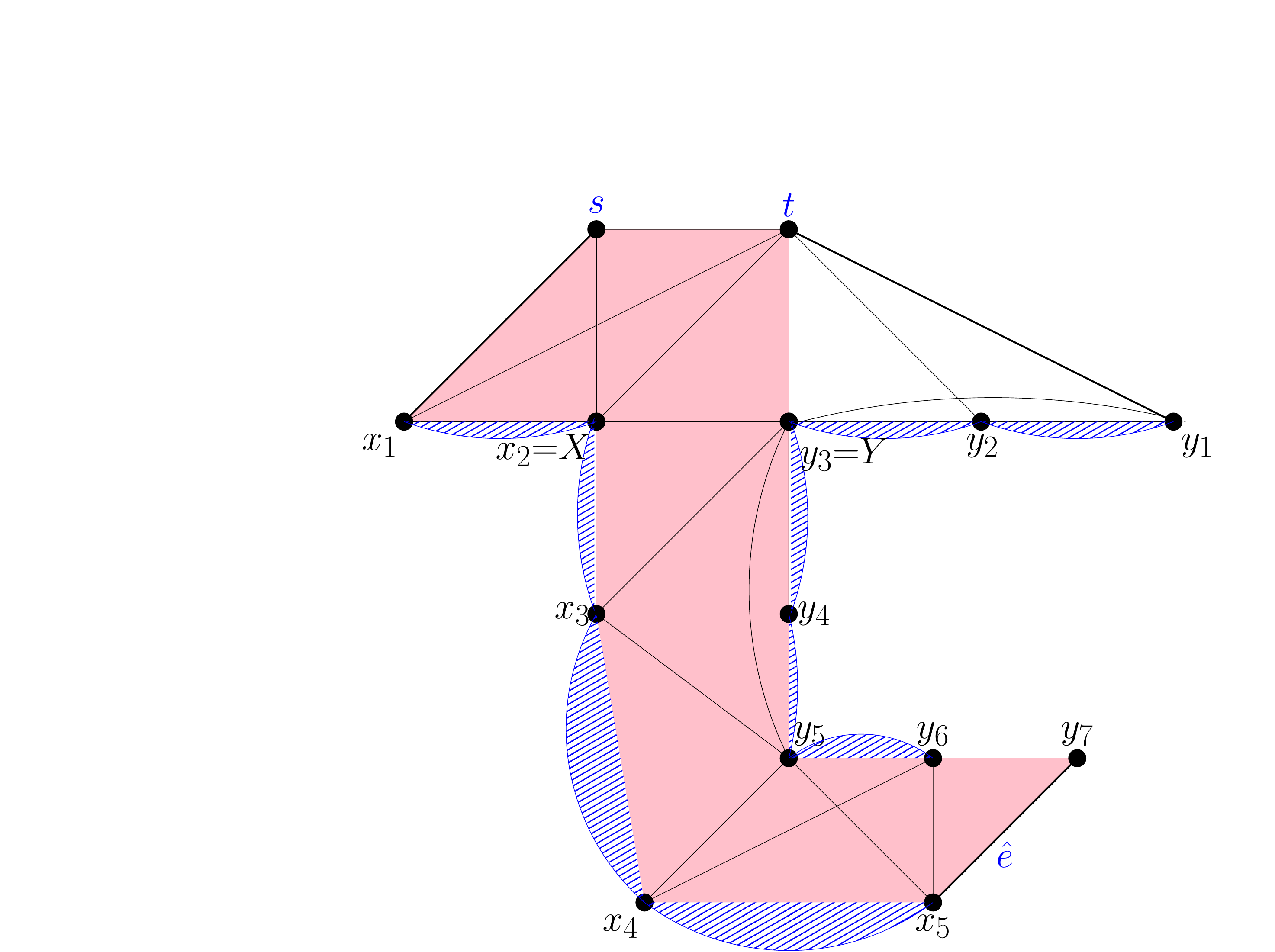}} \\
\caption{Drawing an outer-1-path (dark gray) and merging hanging subgraphs (blue striped) after expanding some vertex-boxes (light gray).}
\label{fig:drawPathAbstract}
\end{figure}

Alas, this path-drawing is not a $TC_{\sigma,\tau}$-drawing as required for the recursion.
So we change the approach and first draw a larger subgraph that includes $P_\pi$.   First extract the \emph{cap} $C_1$, consisting of all neighbours of $s$ and $t$, see Fig.~\ref{fig:2umbrellaAbstract}.
This is an outer-1-path, so we can use a path-drawing and get a $TC_{\sigma,\tau}$-drawing of the cap (see the corresponding part in Fig.~\ref{fig:draw2umbrellaAbstract}).  
The part of $P_\pi$ not in $C_1$  could be drawn as for outer-1-paths, but instead we first extract another cap $C_2$ at the edge common to $C_1$ and the rest of $P_\pi$.  We draw $C_2$ as a path and place it (after suitable expansion of the vertices of $C_1$) below the drawing of $C_1$.  This repeats $k$ times for some parameter $k$ of our choice ($k=2$ in the example). Then we draw the rest of $P_\pi$ (which we call \emph{handle}) as a path.

A major difficulty is combining the drawing of the caps with the handle-drawing.  Let $(x_i,y_j)$ be the edge common to caps and handle.  It is not too difficult to change the boxes of $x_i$ and $y_j$ to combine the two boxes that represented them in the two drawings, see also Fig.~\ref{fig:draw2umbrellaAbstract}.  The main challenge is that for the two hanging subgraphs incident to $y_j$, there is no suitable place to merge a $TC_{\sigma,\tau}$-drawing.
To resolve this, we split these hanging subgraphs further, and can then merge all their parts after adding $d(y_j)$ more rows, where $d(y_j)$ is the number of edges that $y_j$ has in these subgraphs (Fig.~\ref{fig:closeup}).  

So the goal is to choose the parameter $k$ such that $d(y_j)$ is small, because we need $d(y_j)$ additional rows 
beyond the $h(\alpha)+h(\beta)$ that we budget for hanging subgraphs.
Each extra cap also requires $O(1)$ additional rows 
but changes which vertex will take on the role of $y_j$.  
Crucially, the vertices $Y_1,\dots,Y_k$ that take on the role of $y_j$ have disjoint edge sets that count for $d(\cdot)$.  
Since there are $O(n)$ edges in total, there exists a $k\in O(\sqrt{n})$ such that $d(Y_k)\in O(\sqrt{n})$.  
With this choice of $k$, the recursive formula for the height hence becomes $h(\alpha)+h(\beta)+O(\sqrt{n})$, which by $F\in \Theta(n)$ and $\alpha^p+\beta^p\leq (1-\delta)n^p$ resolves to $O(\sqrt{n})$.  $(\star)$

\begin{figure}[htp!]
\centering
{\scalebox{1}[1]{\includegraphics[scale=0.35,page=2]{example.pdf}}}
\caption{Our running example.}
\label{fig:2umbrellaAbstract}
\end{figure}
} 

\later{
\subsection{Preliminaries}
\label{sec:sqrt_prelim}
We first need a few definitions and assumptions which will also be used in Section~\ref{sec:breaking} and~\ref{sec:bar}.
We shall assume without loss of generality that the input graph is plane-maximal: If not, we can always augment it with dummy edges which will be removed at the end of the construction. 

%
Let $G$ be a plane-maximal outer-$1$-plane graph. 
We assume throughout that a \emph{reference-edge} $(s,t)$ has been fixed, which is an edge on the outer-face, with $s$ before $t$ in the clockwise order of vertices along the outer-face.
Recall that $\dual{G}$ denotes the inner dual of the planar skeleton of $G$;
we consider $\dual{G}$ to be rooted at the face $f_{st}$ of $\skel{G}$ that is incident to $(s,t)$. A {\em root-to-leaf} path $\pi$ in $\dual{G}$ is a path that begins at $f_{st}$ and ends at a leaf of $\dual{G}$. 
Any maximal subtree of $\dual{G}\setminus V(\pi)$ is a rooted subtree that can 
be classified as \emph{left} or \emph{right subtree} depending on whether its 
root is left or right of the path-child at its parent
in $\pi$.

The \emph{$\pi$-path} $P_\pi$ of $G$ is the plane-maximal outer-1-path $P_\pi$ whose inner dual is $\pi$.   Since $\pi$ is a root-to-leaf path, $P_\pi$ contains $(s,t)$ on its outer-face, and we can pick one outer-face edge $\hat{e}\neq (s,t)$ at the face of $\skel{P_i}$ corresponding to the leaf of $\pi$.  We call these the \emph{end-edges} of $P_\pi$.

As outlined, we create our drawings by first drawing a subgraph $U$ that contains all of $P_\pi$  and then merging the rest. 
Enumerate the outer-face of $U$ in counter-clockwise direction as $s{=}x_0,x_1,\dots,x_\ell,y_r,y_{r-1},\dots,y_0{=}t$, where $(x_\ell,y_r)$ is the end-edge $\hat{e}$ of $P_\pi$.
(See e.g.~Fig.~\ref{fig:sqrt_umbrella}.)
We call $\langle x_0,x_1,\dots,x_\ell\rangle$ and $\langle y_0,y_1,\dots,y_r \rangle$ the {\em left} and \emph{right boundaries} (with respect to the end-edges). 
For $0\leq j<r$, edge $(y_j,y_{j+1})$ is on the outer-face of $U$, but need not be on the outer-face of $G$.  If it is not, then define the \emph{hanging subgraph} $H_{y_jy_{j+1}}$ of $U$ to be the subgraph induced by the vertices along the path on the outer-face between $y_j$ and $y_{j+1}$ that excludes $(s,t)$.  Since $U$ includes $P_\pi$, the inner dual of $H_{y_iy_{j+1}}$ is part of a right subtree of $\pi$; correspondingly we call $H_{y_jy_{j+1}}$ a \emph{right hanging subgraph}. The \emph{left hanging subgraph} $H_{x_ix_{i+1}}$ (for $0\leq i<\ell$) is defined symmetrically.

\subsection{Path $\pi$, function $h$ and drawing-types}
\label{sec:sqrt_pi}

We want a root-to-leaf path $\pi$ in $\dual{G}$ for which the sizes of the left
and right subtrees are balanced in some sense.  Chan \cite{Chan02} proved the 
existence of such paths for binary trees, and we recently generalized  this to arbitrary rooted trees:

\begin{lemma}[\cite{biedl2021generalized}]
\label{lem:pathTernary}
Let $p=0.48$.  Given any rooted tree $T$ of $n$ vertices, there exists a root-to-leaf
path $\pi$ such that for any left subtree $\alpha$ and any right subtree $\beta$
of $\pi$, $|\alpha|^p+|\beta|^p \leq (1-\delta)n^p$ for some constant
$\delta>0$.
\end{lemma}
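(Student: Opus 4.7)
The plan is strong induction on $n = |T|$. For the inductive step, let the root have children $c_1, \dots, c_k$ in left-to-right order with subtree sizes $s_1, \dots, s_k$. Choosing $c_j$ as the path-child makes $\{c_i : i<j\}$ the left-subtree-roots and $\{c_i : i>j\}$ the right-subtree-roots; writing $\pi_j$ for the continuation inside $T_{c_j}$, the resulting path $\pi$ satisfies
\[ A(\pi) = \max\bigl(\max_{i<j} s_i,\; A(\pi_j)\bigr), \qquad B(\pi) = \max\bigl(\max_{i>j} s_i,\; B(\pi_j)\bigr), \]
and the lemma statement is equivalent to finding $j$ and $\pi_j$ with $A(\pi)^p + B(\pi)^p \le (1-\delta)n^p$.

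I would split on a size threshold $\lambda\in(0,1)$ to be tuned. In the \textbf{heavy-child case} some $s_i \ge (1-\lambda)n$: pick that $c_i$ as path-child and descend, iterating as long as heavy children exist. During this descent every newly introduced left or right sibling has size at most $\lambda$ times the size of the node it hangs from, and in particular at most $\lambda n$ in absolute terms, so no sibling-subtree born during descent can drive $A$ or $B$ above $\lambda n$. In the \textbf{balanced case} every $s_i \le (1-\lambda)n$; then pick the largest child as path-child, so that both the maximum left and maximum right sibling at this step are bounded by $(1-\lambda)n$, and apply induction to the chosen subtree. Crucially, because $p<1$ the max absorbs nicely: $\max(x,y)^p = \max(x^p,y^p)$, so the contribution of the current level does not add to the recursive bound when the recursive $A(\pi_j)$ or $B(\pi_j)$ already dominates.

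The main obstacle is that the two thresholds are in tension: the heavy-child case wants $\lambda$ small (so sibling contributions stay negligible), while the balanced case wants $(1-\lambda)$ small enough that $2(1-\lambda)^p \le 1-\delta$, which pushes $\lambda$ close to $1$. Reconciling these appears to require either a two-level threshold scheme (``heavy'' vs.~``super-heavy'') or, as in~\cite{biedl2021generalized}, a refined inductive invariant that carries separate information about $A$ and $B$ rather than their $p$-power sum alone, so that a heavy-child descent is charged against one of the two while the other keeps its full budget for the balanced step. The specific exponent $p=0.48$ is dictated by the critical regime in which these inequalities first admit a strictly positive slack $\delta$; as $p\to\tfrac12^-$ the feasibility gap collapses to zero, which is exactly why one must pick $p$ strictly below $1/2$.
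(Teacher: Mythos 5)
First, note that the paper does not prove this lemma at all: it is imported verbatim from \cite{biedl2021generalized}, which generalizes Chan's analysis for binary trees \cite{Chan02} to trees of arbitrary degree. So the comparison here is between your sketch and that (nontrivial) external argument, and your sketch has a genuine gap: the heavy/balanced threshold scheme you set up does not close, and you say so yourself. Concretely, in the balanced case the induction forces $2(1-\lambda)^p\le 1-\delta$, i.e.\ $1-\lambda\le\bigl(\tfrac{1-\delta}{2}\bigr)^{1/p}\approx 2^{-1/0.48}\approx 0.24$, so $\lambda\gtrsim 0.76$; but then the heavy-child case tolerates sibling subtrees of size up to $\lambda n\approx 0.76\,n$, whose contribution $\lambda^p n^p\approx 0.88\,n^p$ already nearly exhausts the budget $(1-\delta)n^p$, while the other side of the path can still pick up subtrees of non-negligible size deeper in the recursion (the two maxima $A$ and $B$ are attained at different depths, so the invariant must control the pair jointly, not level by level). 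This tension is exactly the hard part of the lemma; resolving it is the actual content of Chan's $n^{0.48}$ analysis (a case analysis along the path with a carefully chosen split point, whose numerical balancing yields the exponent) and of its extension in \cite{biedl2021generalized}, and it is precisely the part your proposal leaves as ``either a two-level threshold scheme or a refined inductive invariant.'' As written, the proof does not go through.

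A secondary point: your closing explanation of the exponent is backwards. For $x\in(0,1)$ the map $p\mapsto x^p$ is \emph{decreasing}, so the requirement $|\alpha|^p+|\beta|^p\le(1-\delta)n^p$ gets \emph{harder} as $p$ decreases (and is impossible as $p\to 0$), while it is easier for larger $p$; there is no feasibility collapse at $p=\tfrac12$ and nothing forces $p<\tfrac12$ for the lemma itself. The reason $p=0.48$ appears is that the application (Theorem~\ref{thm:breaking}, breaking the $\sqrt{n}$ height barrier) needs some $p<\tfrac12$, and $0.48$ is the exponent that Chan's case analysis happens to deliver; with $p=\tfrac12$ or larger the lemma would be a weaker, easier statement.
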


Fix this path $\pi$
for the rest of this section.
Slightly abusing notation, we now use $\alpha$ and $\beta$ for the {\em size}
of the maximum left/right subtree of $\pi$ in $\dual{G}$ (rather than
the trees themselves).   Going over
to graph $G$, we will measure its size by the number $F$ of inner faces in $\skel{G}$.  Then $\alpha$ and $\beta$ are the maximum size of a left/right hanging subgraph of $P_\pi$.  Since we draw a super-graph $U$ of
$P_\pi$, $\alpha$ and $\beta$ are upper bounds on the size of a left/right hanging subgraph of $U$.

Let $h(F)$ be the recursive function that satisfies $h(1)=3$ and
$$h(F)=\max_{\alpha^p+\beta^p\leq (1-\delta)F^p} h(\alpha)+h(\beta)+11\sqrt{F}+7$$
for $F>1$, 
\footnote{We made no attempts to optimize the constants; they could like be improved with a more careful analysis.}
where $p=0.48$ and $\delta>0$ are as in Lemma~\ref{lem:pathTernary}.
One can easily show that $h(F)\leq c\sqrt{F}-7$,
where $c=\max\{10,12/\delta\}$.
%
It hence
suffices to find drawings of height $h(\alpha)+h(\beta)+11\sqrt{F}+7$ to
obtain drawings of height $O(\sqrt{F})=O(\sqrt{n})$
(observe that $F\in \Theta(n)$).

As outlined, we restrict drawings of hanging subgraphs as follows.
For integers $\sigma,\tau$,  let a 
\emph{$(\sigma,\tau)$-Top-Corner drawing} ($TC_{\sigma,\tau}$-drawing for short)
be a drawing of $G$ where $s$ and $t$ occupy the top corners, and the boxes of $s$ and $t$ have height $\sigma$ and $\tau$, respectively.
See Fig.~\ref{fig:drawingTypes}.  

For later merging-steps, we briefly mention here that these drawings can be modified to satisfy other properties.
Assume we have a $TC_{2,1}$-drawing.
Since $(s,t)$ is an edge, it must necessarily be drawn horizontally between the boxes of $s$ and $t$. Since the box of $t$ has height 1, no other vertex occupies the top row.  Without changing the height, we can therefore change $t$ into a bar that spans the entire top row and retract $s$ to be a bar on the left end of the second row; see Fig.~\ref{fig:drawingTypes}.    We call the result a \emph{$(2,1)$-Top-Bar drawing} ($TB_{2,1}$-drawing for short).
We can also transform the drawing into a $TC_{\sigma,\tau}$-drawing, for any $\tau\geq 1$ and $\sigma\geq \tau+1$, by inserting $\tau-1$ and $\sigma-\tau-1$ rows above and below the top row, respectively, and extending the vertex-boxes.
Similar transformations can be applied to a $TC_{1,2}$-drawing.

With this, we can state our overall goal:

\begin{figure}[htp!]
\centering
\subfigure{\includegraphics[page=1,scale=0.9]{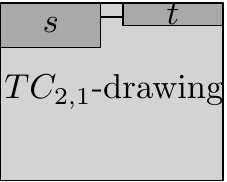}}\hfill
\subfigure{\includegraphics[page=2,scale=0.9]{types}}\hfill
\subfigure{\includegraphics[page=6,scale=0.9]{types}}\hfill
\subfigure{\includegraphics[page=4,scale=0.9]{types}}\hfill
\subfigure{\includegraphics[page=5,scale=0.9]{types}}\hfill
\caption{Drawing-types (the rightmost two are needed in Section~\ref{sec:breaking}).}
\label{fig:drawingTypes}
\end{figure}

\begin{lemma}
\label{lem:sqrt}
Let $G$ be an outer-1-plane graph with reference-edge $(s,t)$.
Then for any $\{\sigma,\tau\}\in \{1,2\}$, $G$ has an embedding-preserving visibility representation that is a $TC_{\sigma,\tau}$-drawing with height at most
$h(F)$. 
\end{lemma}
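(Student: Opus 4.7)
The plan is to proceed by strong induction on $F$, the number of inner faces of $\skel{G}$. In the base case $F=1$, the graph $G$ consists of a single triangular or quadrilateral face (with at most one crossing inside), and a direct case analysis yields a $TC_{\sigma,\tau}$-drawing of height $3 = h(1)$ for either choice of $(\sigma,\tau)\in\{(1,2),(2,1)\}$. For the inductive step, first apply Lemma~\ref{lem:pathTernary} to $\dual{G}$ rooted at $f_{st}$ to obtain a root-to-leaf path $\pi$ whose maximum left- and right-subtree sizes $\alpha,\beta$ satisfy $\alpha^p+\beta^p\leq (1-\delta)F^p$. Each hanging subgraph of the $\pi$-path $P_\pi$ inherits a natural reference-edge from the outer-face of $P_\pi$ and has at most $\alpha$ (resp.\ $\beta$) inner faces, so by induction each admits a $TC_{\sigma',\tau'}$-drawing of height at most $h(\alpha)$ or $h(\beta)$; these drawings will be stitched in at the end.

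Next, enlarge $P_\pi$ to an intermediate subgraph $U$ by successively extracting caps $C_1,C_2,\dots,C_k$ from the top. Here $C_1$ is the subgraph induced by $s$, $t$, and their common neighbours along $P_\pi$ (an outer-$1$-path), and each $C_{i+1}$ is defined analogously at the bottom edge of $C_i\cap P_\pi$. Choose $k$ as the smallest index for which the distinguished vertex $Y_k$ (where the handle meets the last cap on one side) satisfies $d(Y_k)\in O(\sqrt{F})$, where $d(\cdot)$ counts edges going into the two hanging subgraphs of $U$ incident to that vertex. Because the edge-sets counted by $d(Y_1),\dots,d(Y_k)$ are pairwise disjoint and total at most $O(F)$, such a $k\in O(\sqrt{F})$ exists.

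Now draw $U$ as follows. Each cap is an outer-$1$-path and admits a $5$-row visibility representation in which every edge incident to a hanging subgraph is drawn horizontally along the top or bottom row. Stack the $k$ cap-drawings vertically, expanding the boxes of the current top/bottom edge-endpoints to merge adjacent caps, producing a $TC_{\sigma,\tau}$-shaped drawing of $U\setminus\text{handle}$ of height $O(k)\subseteq O(\sqrt{F})$. Draw the handle (itself an outer-$1$-path) by the same path-construction, and merge it to the cap-stack at the common edge $(x_\ell,y_r)$ by identifying the two boxes representing each endpoint. At $Y_k$ the two hanging subgraphs on the same side must be split further and spliced individually, inserting $O(d(Y_k))=O(\sqrt{F})$ extra rows. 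Finally, attach each hanging subgraph's inductive $TC_{\sigma',\tau'}$-drawing (after one further row of box-expansion so that the horizontal attachment interval on $U$ spans its top side); since left and right hanging subgraphs lie on opposite sides of $U$, the total height is bounded by $h(\alpha)+h(\beta)+O(\sqrt{F})$, which for suitable constants is at most $h(\alpha)+h(\beta)+11\sqrt{F}+7=h(F)$. To realise both orientations $(\sigma,\tau)\in\{(1,2),(2,1)\}$, use the Top-Bar transformation of the cap-drawing to adjust the heights of the boxes of $s,t$ without increasing the total height.

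The main obstacle is the bookkeeping in the third paragraph: ensuring that every edge of $U$ incident to a hanging subgraph is drawn horizontally in a row of sufficient width, that the outer-$1$-planar embedding is preserved across every cap-to-cap and cap-to-handle joint, and above all that the two exceptional hanging subgraphs meeting at $Y_k$ can be spliced in with only $O(d(Y_k))=O(\sqrt{F})$ additional rows while keeping $s$ and $t$ in the top corners with the prescribed heights $\sigma,\tau$. All other costs (constantly many rows per cap for $k\in O(\sqrt{F})$ caps, $O(\sqrt{F})$ rows for the handle, and one extra expansion row per hanging subgraph absorbed into the $+7$ term) are fairly mechanical once the combinatorial framework is in place.
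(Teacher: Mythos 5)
Your proposal follows essentially the same route as the paper's proof: the path from Lemma~\ref{lem:pathTernary}, repeated cap extraction with the remaining handle drawn as an outer-1-path, the pigeonhole argument over pairwise-disjoint edge sets to find $k\in O(\sqrt{F})$ caps after which the transition vertex has only $O(\sqrt{F})$ incident edges in its hanging subgraphs, and $O(d(Y_k))$ extra rows to splice those two exceptional subgraphs, giving the recursion $h(\alpha)+h(\beta)+O(\sqrt{F})\leq h(F)$. The one step you set aside as bookkeeping is exactly where the paper does its real technical work (it lays out the neighbours of the transition vertex as a staircase of bars and merges the resulting sub-subgraphs via $TB_{2,1}$-drawings, and it must budget $\max\{d(X_k),d(Y_k)\}$ rather than just $d(Y_k)$ because the merge may have to proceed leftward), but your outline otherwise reproduces the paper's argument.
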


The proof of this lemma is by induction on $F$. Since $G$ is plane-maximal, if $F=1$ then $G$ is either a triangle or a $K_4$ drawn with one crossing.  Either way we can easily find a $TC_{\sigma,\tau}$-drawing of height at most $3=h(1)$.
The remainder of this section will prove the induction step.
Fix $\pi,\alpha,\beta$ as explained above.  

\subsection{Drawing a path. }
\label{sec:sqrt_path}

As a first step, we will give a simple construction that unfortunately
does not place $s,t$ where we need them, but places them on the left
side instead.  

\begin{lemma}
\label{lem:drawPath}
Fix $\{\sigma',\tau'\}=\{2,3\}$.
$G$ has an embedding-preserving
visibility representation $\Gamma$ of height $h(\alpha)+h(\beta)+3$.
Furthermore,  $P(s)$ and $P(t)$ have height $\sigma'$ and $\tau'$, respectively, both abut the left side of the bounding box, and there are $h(\alpha)-1$ rows below $P(s)$ and $h(\beta)-1$ rows above $P(t)$.
\end{lemma}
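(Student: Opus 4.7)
The plan is to construct $\Gamma$ in two phases: first lay out the outer-1-path $P_\pi$ on exactly five rows with $s$ and $t$ as the leftmost vertex-boxes of the required heights, and then recursively draw the hanging subgraphs of $P_\pi$ and merge them above and below. Since $G$ is plane-maximal, every face of $\skel{P_\pi}$ is either a triangle or a $K_4$-with-one-crossing quadrilateral, and every edge of $G$ that is not in $P_\pi$ belongs to some hanging subgraph $H_{x_ix_{i+1}}$ (of $\pi$-size at most $\alpha$) or $H_{y_jy_{j+1}}$ (of $\pi$-size at most $\beta$).

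First I would walk along $\pi$ starting from $f_{st}$ and build a visibility representation $\Gamma_0$ of $P_\pi$ on exactly five rows, using the standard ``zig-zag'' layout for outer-1-paths. The two endpoints of the current path-edge occupy adjacent columns of the strip, the next face attaches on the right, and its remaining vertex (or remaining two vertices, for a crossed quadrilateral) is placed either in the bottom row or in the top row according to whether that face lies below or above $\pi$. In a crossed quadrilateral, the two ``interior'' vertices can be realised as short bars inside rows $2$ and $4$ so that the crossing of their edges happens inside the five-row strip. I choose the initial columns so that $P(s)$ occupies the bottom $\sigma'$ and $P(t)$ the top $\tau'$ of the five rows in the leftmost column, with the edge $(s,t)$ drawn as a short vertical segment between them. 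By construction every outer-face edge of $P_\pi$ on its bottom boundary (hence bounding a left hanging subgraph) sits as a horizontal segment in the bottom row, and every edge on its top boundary sits as a horizontal segment in the top row.

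Next I invoke Lemma~\ref{lem:sqrt} recursively on each hanging subgraph $H$ with its natural reference edge $(x_i,x_{i+1})$ or $(y_j,y_{j+1})$. A left hanging subgraph has $\pi$-size at most $\alpha$, so for any $\{\sigma,\tau\}=\{1,2\}$ it admits a $TC_{\sigma,\tau}$-drawing of height at most $h(\alpha)$; the right hanging subgraphs are treated symmetrically with $\beta$. For each $H$ I pick the $(\sigma,\tau)$-orientation that matches the order of its two contact vertices along the outer face of $P_\pi$, then stack the drawings of all left hanging subgraphs below $\Gamma_0$ side by side and all right hanging subgraphs above $\Gamma_0$ side by side (distinct hanging subgraphs share only single vertices of $P_\pi$, so they do not interfere horizontally). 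Merging at a bottom edge is carried out by extending the boxes of $x_i$ and $x_{i+1}$ downward by one row so that they coincide with the top-corner boxes of $H$'s drawing; the horizontal visibility segment representing $(x_i,x_{i+1})$ in $\Gamma_0$ is re-routed through the extended boxes without disturbing any other edge, and the top edges of $H$ incident to $x_i, x_{i+1}$ are absorbed into the same attachment points. The symmetric construction is done above $\Gamma_0$.

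For the height accounting, the tallest left hanging subgraph below contributes at most $h(\alpha)$ rows, one of which is absorbed into the extended bottom-row boxes of $P_\pi$, so at most $h(\alpha)-1$ new rows appear below $P(s)$; symmetrically at most $h(\beta)-1$ new rows appear above $P(t)$. Together with the $\sigma'+\tau'=5$ rows of $\Gamma_0$ this gives total height $(h(\alpha)-1)+5+(h(\beta)-1)=h(\alpha)+h(\beta)+3$, and $P(s)$, $P(t)$ lie in the leftmost column with the prescribed heights. The main obstacle is the first phase: realising $P_\pi$ on exactly five rows while simultaneously preserving the given outer-1-planar embedding, forcing every outer-face edge bounding a hanging subgraph onto the correct boundary row so merging is possible, and fixing $P(s)$ and $P(t)$ to be the leftmost boxes of heights $\sigma'$ and $\tau'$. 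This requires a careful case analysis over triangular versus crossed-quadrilateral faces along $\pi$, together with an inductive invariant tracking which vertex is currently active on each of the two boundary rows; once this layout is in place the recursion and bookkeeping are routine.
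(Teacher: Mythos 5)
Your overall skeleton is the same as the paper's: draw $P_\pi$ on five rows with $P(s),P(t)$ leftmost of heights $\sigma',\tau'$, recursively obtain $TC_{\sigma,\tau}$-drawings of the hanging subgraphs via Lemma~\ref{lem:sqrt}, merge left ones below and right ones above while sharing the boundary row of the merge edge, and account $5+(h(\alpha)-1)+(h(\beta)-1)=h(\alpha)+h(\beta)+3$. However, the part you defer as ``a careful case analysis \dots once this layout is in place the recursion and bookkeeping are routine'' is exactly the heart of the paper's proof, and the one concrete thing you say about it is wrong. In an outer-$1$-path every vertex of every inner face lies on the outer boundary; a crossed quadrilateral face has no ``two interior vertices''---it is either an opposite-boundary crossing (one new vertex per boundary) or a same-boundary crossing (both new vertices on one boundary). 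Realizing such vertices as short bars strictly inside rows $2$ and $4$ contradicts your own invariant that every outer-face edge of $P_\pi$ lies horizontally in the top or bottom row (these vertices have incident outer-face edges that may carry hanging subgraphs, and one of them is shared with the next face of $\pi$), so the merging step and the continuation of the path both break, as does embedding preservation. The paper avoids this by keeping every bottom-boundary vertex touching row $1$ and every top-boundary vertex touching row $5$, maintaining the invariant that the two active boxes have heights $\{2,3\}$, and realizing each crossing as one vertical edge plus one horizontal edge drawn either in the middle row $3$ (opposite-boundary crossing) or along the bottom row of a height-$3$ box (same-boundary crossing). Without some such scheme your five-row layout is unsubstantiated, and your substitute for it does not work.

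A second, smaller gap is in the merging step: you extend \emph{both} $P(x_i)$ and $P(x_{i+1})$ downward by one row. A $TC_{2,1}$- (or $TC_{1,2}$-) drawing has corner boxes of \emph{different} heights, and the region directly below its height-$1$ corner may be occupied by other vertices or edges of the hanging subgraph, so extending the second endpoint into that row can create overlaps. The paper instead extends every second box along each boundary, which guarantees that each merge edge has exactly one extended endpoint, matching the asymmetric corner heights of the $TC_{2,1}$/$TC_{1,2}$-drawing it receives; your height count ($h(\alpha)-1$ new rows below, sharing the row of the edge $(x_i,x_{i+1})$ itself) only goes through with that alternating choice.
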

\begin{figure}[htp!]
\centering
\includegraphics[page=6,scale=0.5]{example.pdf}
\caption{Drawing an outer-1-path (dark gray) and merging hanging subgraphs (blue striped).}
\label{fig:drawPath}
\end{figure}
\begin{proof}
We first draw $P_\pi$ on five rows 
(see the orange hatched part of Fig.~\ref{fig:drawPath})
and then merge the hanging subgraphs.   

Assume the outer-face of $P_\pi$ is enumerated as $x_0,\dots,x_\ell,y_r,\dots,y_0$ as before.
The boxes of $x_0,\dots,x_\ell$ all intersect row 1 (of our five rows), in
this order from left-to-right, while the boxes of $y_0,\dots,y_r$ all
intersect row 5.  The outer-face edges along these paths can hence be 
realized horizontally.
All vertex-boxes have height 1, 2 or 3, and we determine
this (as well as their $y$-coordinates) by parsing the faces $f_1,\dots,f_k$
of $\skel{P_\pi}$, beginning at $f_{st}$, and extending the drawing rightwards.
The heights of $x_0{=}s$ and $y_0{=}t$ are determined
by $\sigma'$ and $\tau'$.  Assume we have drawn everything up to some face $f_{h-1}$,
which ends at edge $(x_i,y_j)$. Boxes $P(x_i),P(y_j)$ are partially drawn and their heights are $\{2,3\}$ (but we do not know which one has which height).  
The next face $f_h$ can have one of five configurations: 
\begin{itemize}
\item $f_h$ has no crossing (hence it is a triangle).  One of $\{x_i,y_j\}$ also
	belongs to $f_{h+1}$ (or is in $\{x_\ell,y_r\}$ if $h=k$). Let us assume that
	this is $y_j$, the other case is symmetric.
	See e.g.~$\{x_0,x_1,y_0\}$ in Fig.~\ref{fig:drawPath}.
	In this case, $P(x_i)$ ends, $P(x_{i+1})$ begins (with the same height as $P(x_i)$)
	and $P(y_j)$ extends further rightwards.  Edge $(x_{i+1},y_j)$ can be inserted
	vertically.
\item $f_h$ has a crossing and neither $x_i$ nor $y_j$ belongs to $f_{h+1}$ 
	(or is in $\{x_\ell,y_r\}$ if $h=k$).   
	See e.g.~$\{x_3,x_4,y_2,y_3\}$.  We call such a crossing an \emph{opposite-boundary crossing} since both crossing edges connect opposite boundaries.  Let us assume that $P(x_i)$ has height 3 and $P(y_j)$ has height 2; the other case is symmetric.  In this case, $P(x_i)$ ends and $P(x_{i+1})$ begins with height 2, which means that we can draw $(x_{i+1},y_{j})$ vertically.  Then $P(y_j)$ ends and $P(y_{j+1})$ begins with height 3, which means that we can draw $(x_i,y_{j+1})$ horizontally along row 3.
\item $f_h$ has a crossing and one of $\{x_i,y_j\}$ belongs to $f_{h+1}$ 
	(or is in $\{x_\ell,y_r\}$ if $h=k$).   
	Let us assume that this is $x_i$, the other
	case is symmetric.  See e.g.~$\{x_1,y_0,y_1,y_2\}$.  
	   We call such a crossing a \emph{same-boundary crossing} since one of the crossing edges connects
	    two vertices on the same boundary.
	In this case, $P(y_j)$ ends and $P(y_{j+1})$ begins (with height one less than the height of $P(y_j)$), which means that we can draw $(y_{j+1},x_i)$ vertically.  Then $P(y_{j+1})$ ends and $P(y_{j+2})$ begins (with the same height as $P(y_j)$), which means that we can draw $(y_j,y_{j+2})$ horizontally along the bottom row of $P(y_j)$ and $P(y_{j+2})$.
\end{itemize}

Extending the drawing one face at a time gives us an embedding-preserving visibility representation $\Gamma_P$ of $P_\pi$.
Before merging hanging subgraphs, we first parse along the top and bottom row from left to right and
extend every second box by one row ``outward'' (i.e., away from $\Gamma_P$).
Consider a left hanging subgraph $H_{x_ix_{i+1}}$.   Edge $(x_i,x_{i+1})$
is drawn horizontally along the bottom row of $\Gamma_P$. After the extension,
one of $x_i,x_{i+1}$ (say $x_i$) extends one row further down.  This means that we
can merge a (recursively obtained) $TC_{2,1}$-drawing $\Gamma_{H}$
of $H_{x_ix_{i+1}}$  in the
rows below $(x_i,x_{i+1})$.%
\footnote{In all our figures, we assume that the subgraph-drawings have been
scaled horizontally so that they fit.  Put differently, we do not assume that
$x$-coordinates are integers; they can be made integers by inserting columns as needed.  Also, hanging subgraphs do not necessarily all have to exist or have the same height, we show here the maximum height that could be needed.}
This requires at most $h(\alpha)-1$ rows below the 
rows for $P_\pi$ since $\Gamma_{H}$ has height at most $h(\alpha)$
and the row for edge $(x_i,x_{i+1})$ can be used by both $\Gamma_P$ and $\Gamma_{H}$.
Symmetrically we can merge right hanging subgraphs using up to $h(\beta)-1$
rows above $\Gamma_P$.\footnotemark
\end{proof}
} 

\footnotetext{Readers familiar with LR-drawings \cite{Chan02,FPR20,biedl2021generalized} may notice the similarity of constructing the path-drawing with the (rotated) LR-drawing of $\pi$, except  that we draw the outer-1-planar graph rather than its dual tree.}
\later{
\subsection{Drawing a cap}
\label{sec:sqrt_cap}

Define the {\em cap} $C$ to be the outer-1-path that contains $s,t$ and all vertices adjacent to $s$ or $t$, and let the
\emph{umbrella} \cite{BDa-WADS17} be the union of $C$ and $P_\pi$; we denote it by $U_\pi$.  
See also Fig.~\ref{fig:sqrt_umbrella}.
We will later draw all of $U_\pi$, but for now are only concerned with drawing $C$.  Enumerate the outer-face of $U_\pi$ as $x_0{=}s,\dots,x_r,y_\ell,\dots,y_0{=}t$.  Assume that $P_\pi\subsetneq C$ (in the other case we \emph{only} have to draw $C$, which will be easier). Of special interest is then the \emph{transition edge} $(x_i,y_j)$,
which is the edge on the outer-face of the cap that is an inner edge of $P_\pi$.  Sometimes we use the notation $X:=x_i$ and $Y:=y_j$.  The \emph{handle}
$P_h$ is the part of $P_\pi$ not in $C$; thus $P_h$ is the outer-1-path that contains all vertices of $P_\pi$ between $(X,Y)$ and end-edge $\hat{e}$.    
Note that the handle (plus its hanging subgraphs) can be viewed as the hanging subgraph $H_{XY}$ of the cap.
\begin{figure}[htp!] 
\subfigure[\label{fig:sqrt_umbrella}]{\scalebox{1}[0.8]{\includegraphics[scale=0.25,page=1,trim=240 0 90 0,clip]{example.pdf}}}\hspace*{-5mm}
\subfigure[\label{fig:2umbrella}]{\scalebox{1}[0.8]{\includegraphics[scale=0.25,page=2]{example.pdf}}}
\caption{(a) An umbrella $U_\pi$.  $P_\pi$ is shaded.  (b) Our running example with its 2-cap umbrella $U^{(2)}$ and the hanging subgraphs $H_{y_5y_4}$ and $H_{y_6y_5}$ expanded.     Fig.~\ref{fig:drawPath} was a drawing of its handle.
\todo[inline]{Time permitting, redraw this figure to have smaller horizontal spacing and perhaps a different layout at the ``wings'' so that we can make labels bigger.}
}
\end{figure}

\begin{figure}[htp!]
\centering
\includegraphics[page=7,scale=0.4]{example.pdf}
\hfil
\includegraphics[page=8,scale=0.4]{example.pdf}
\caption{A $TC_{1,2}$-drawing of a cap, both when $f_{st}$ has a crossing and otherwise.  The naming is consistent with the example from Fig.~\ref{fig:2umbrella}. \label{fig:draw_cap}  
}
\end{figure}

\begin{lemma}
\label{lem:draw_cap}
Let $C$ be the cap, and let $\{\sigma,\tau\}=\{1,2\}$.  The subgraph
formed by $C$ and its hanging subgraphs (except $H_{XY}$) has an
embedding-preserving visibility representation of type $TC_{\sigma,\tau}$ and 
height $\max\{h(\alpha),h(\beta)\}+3$.
\end{lemma}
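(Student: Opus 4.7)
The plan is to adapt the construction of Lemma~\ref{lem:drawPath} to the cap $C$, exploiting the fact that $C$ has far more structure than a generic outer-1-path: every vertex $x_1,\dots,x_r$ on the left boundary is adjacent to $s$, and every vertex $y_1,\dots,y_\ell$ on the right boundary is adjacent to $t$. Consequently, $s$ and $t$ behave like fan-hubs whose incident edges can all be drawn vertically, which is exactly what is needed to place them at the top corners rather than on the left side.

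First I would draw the cap alone on 3 rows. Place $s$ as a bar of height $\sigma$ occupying the top-left region and $t$ as a bar of height $\tau$ occupying the top-right region, with edge $(s,t)$ drawn horizontally along the topmost row (if $f_{st}$ contains a crossing, one of the crossing edges is routed through the middle row and $s,t$ are slightly shifted inward, giving the second case of Fig.~\ref{fig:draw_cap}). The interior cap vertices are placed as bars in the bottom row in the outer-face order $x_1,\dots,x_r,y_\ell,\dots,y_1$. Edges $(s,x_i)$ and $(t,y_j)$ are drawn as short vertical segments; the outer-boundary edges $(x_i,x_{i+1})$, $(y_{j+1},y_j)$, and the transition edge $(X,Y)=(x_r,y_\ell)$ are drawn horizontally along the bottom row. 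Any further crossing inside $C$ is handled by routing one of the two crossing edges through the middle row, exactly as in the opposite-boundary or same-boundary crossings treated in the proof of Lemma~\ref{lem:drawPath}. Parsing the inner faces of $C$ from $f_{st}$ outward, as was done for $P_\pi$, allows all these edges to be realized without conflict.

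Next I would merge the hanging subgraphs. A left hanging subgraph $H_{x_ix_{i+1}}$ has size at most $\alpha$, so by induction it admits a $TC_{\sigma',\tau'}$-drawing of height at most $h(\alpha)$ with the top corners occupied by $x_i,x_{i+1}$; after (if needed) expanding the box of $x_i$ or $x_{i+1}$ outward by one row as in Lemma~\ref{lem:drawPath}, this sub-drawing plugs in directly below the edge $(x_i,x_{i+1})$. Right hanging subgraphs are treated symmetrically with height bound $h(\beta)$. Crucially, the left hanging subgraphs occupy column ranges strictly below the $x$-vertices and the right hanging subgraphs occupy column ranges strictly below the $y$-vertices, so their rows overlap without interference, contributing only $\max\{h(\alpha),h(\beta)\}$ additional rows. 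Together with the 3 rows of the cap drawing this yields the claimed bound $\max\{h(\alpha),h(\beta)\}+3$.

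The main obstacle will be the bookkeeping of compatibility at the merge-interface: for each hanging subgraph we must guarantee that the heights $\sigma',\tau'$ produced by the inductive call match the exposed heights of $x_i,x_{i+1}$ (respectively $y_j,y_{j+1}$) in the cap drawing, and that no edge of the cap obstructs the plugged-in sub-drawing. This will require a small case analysis based on the heights assigned to consecutive boundary vertices while parsing faces of $C$, together with the flexibility provided by the $TB_{2,1}$/$TB_{1,2}$ variants and the option of expanding a single box outward by one row. A secondary technicality is the two-case split driven by whether $f_{st}$ is a triangle or a crossed $K_4$, which changes the adjacency between $s,t$ and the first boundary vertices and hence the precise placement of $(s,t)$ in the top row, but in either subcase the drawing still fits in 3 rows.
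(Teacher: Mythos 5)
Your overall strategy---draw the cap directly as a two-fan on a constant number of rows with $s,t$ on top, then plug the hanging subgraphs in below---is viable and in spirit close to what the paper's construction ends up producing, but two points in your sketch are wrong or unresolved. First, the structural claim that every left-boundary vertex of $C$ is adjacent to $s$ and every right-boundary vertex to $t$ is false: the left/right split is defined by the transition edge $(X,Y)$, i.e.\ by where the handle leaves the cap, and $\pi$ may leave the cap in the middle of the fan around $s$ (or around $t$), so the boundary on one side of $(X,Y)$ can contain vertices adjacent only to the other hub. (What is true, and what your construction actually needs, is that every inner face of $\skel{C}$ is incident to $s$ or $t$, so the cap is an $s$-fan followed by a $t$-fan meeting at $f_{st}$.) Second, and more seriously, your three-row layout does not exist when the prescribed top-corner height is $2$: the height-2 hub box must span the columns of all its vertically attached neighbours, so it occupies the middle row above its entire fan, and then a crossed chord $(u_a,u_{a+2})$ of that fan---which must be horizontal, since it crosses the vertical edge from the hub to $u_{a+1}$---has no grid row left between the hub box and the bottom row; the interaction with a crossed $f_{st}$ makes this still more delicate. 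This is fixable (insert one extra row below the top row for the thick hub, and recover the budget by letting the hanging subgraphs share the bottom row of the cap, as in the merging step of Lemma~\ref{lem:drawPath}), but it is exactly the ``small case analysis'' you defer, and without it the claimed height $\max\{h(\alpha),h(\beta)\}+3$ and the $TC_{\sigma,\tau}$ type for both choices of $\{\sigma,\tau\}$ are not established.

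The paper avoids this case analysis with one observation your proposal misses: the cap is itself an outer-1-path, namely $P_{\pi'}$ for the dual path $\pi'$ running from the face at $(x_1,s)$ through $f_{st}$ to the face at $(t,y_1)$. Applying Lemma~\ref{lem:drawPath} to $C$ with reference edge $(x_1,s)$ (and suitably chosen $\sigma',\tau'$) places the boundary consisting of $s$ and $t$ along one horizontal side of the path-drawing; since $(s,t)$ lies on the outer face of $G$ there are no hanging subgraphs on that side, so the top two rows contain only $s$, $t$ and the edge $(s,t)$, one row can be deleted, and the heights of $P(s)$ and $P(t)$ can then be adjusted to $\sigma$ and $\tau$. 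All crossing-handling, box heights and merging of hanging subgraphs are inherited from Lemma~\ref{lem:drawPath}, which is precisely the machinery your direct construction would have to rebuild by hand.
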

\begin{proof}
The path-drawing from Lemma~\ref{lem:drawPath} satisfies this, as long as we
re-define the path that we use and modify boxes a bit.
Note the cap can be viewed as $P_{\pi'}$ for the path $\pi'$ that begins at the inner
face of $\skel{G}$ incident to $(x_1,s)$ and ends at the inner face of
$\skel{G}$ incident to $(t,y_1)$.
Choose $\tau'$ as follows:  If $f_{st}$ has a crossing, then set $\tau'=\sigma+1$, otherwise set $\tau'=3$.
Apply Lemma~\ref{lem:drawPath} with respect to this path $\pi'$,   reference-edge $(x_1,s)$, and values $\tau'$ and $\sigma'=\{2,3\}\setminus \tau'$
hence $P(x_1)$ has height $\sigma'$ and $P(s)$ has height $\tau'$.
Since $(s,t)$ is on the outer-face, there are no hanging subgraphs to be
merged above, and the top two rows of the drawing contain
only $s$, $t$, and edge $(s,t)$,     we can delete one of these rows.   If $f_{st}$ had no crossing, then both $P(s)$ and $P(t)$ now have height 2; reduce the height of one of them to 1 (after extending vertical edges) as dictated by $\{\sigma,\tau\}$.  If $f_{st}$ had a crossing, then $P(s)$ now has height $\sigma$ and the height of $P(t)$ is different from $P(s)$, hence $P(t)$ has height $\tau$.
All
other conditions are easily verified.
\end{proof}

\subsection{Drawing an umbrella.}
\label{sec:sqrt_umbrella}

Now we need to combine the cap-drawing of Lemma~\ref{lem:draw_cap} with the path-drawing of Lemma~\ref{lem:drawPath} (applied to the handle).
This combination step is
the most challenging part of our construction and illustrated in
Fig.~\ref{fig:drawUmbrella}. 
Assume $\{\sigma,\tau\}=\{1,2\}$ has been given.  
We split the explanation of how to draw $G$ into two steps.

\paragraph{Step 1: Draw cap, handle, and most hanging subgraphs.}
We have two subcases, depending on the configuration of $(x_i,y_j)$ in handle $P_h$.
Let $f,f'$ be the first two faces of $\skel{P_h}$, enumerated starting
at transition edge $(x_i,y_j)$.
Not both $x_i$ and $y_{j}$ can belong to $f'$.  We assume here that $x_i$
does {\em not} belong to $f'$, the other case is symmetric (we would
merge the handle in leftward direction).

Using Lemma~\ref{lem:draw_cap},
obtain a $TC_{\sigma,\tau}$-drawing $\Gamma^+_C$ of cap $C$ and its hanging subgraphs except $H_{XY}$.
We remove the drawings of hanging subgraphs $H_{x_{i-1}x_i}$ and $H_{y_{j-1}y_{j}}$ from $\Gamma_C^+$; they will be handled later. 
Recall that as part of creating $\Gamma_C^+$ we expanded every second box of $x_1,\dots,x_i,y_j,\dots,y_1$ downward by one unit; we assume here that the choice has been done such that $x_i$ is extended downward and $y_j$ is not.

Using Lemma~\ref{lem:drawPath}, obtain a drawing $\Gamma_P^+$ of $P_h$ and its hanging subgraphs (i.e., graph $H_{XY}$), using $(x_i,y_j)$ as reference-edge and $\sigma'=3$ so that $P(x_i)$ has height 3.  We use $\Gamma_P$ to denote the drawing of $P_h$ within $\Gamma_P^+$.  
Omit from $\Gamma_P^+$ the drawings of hanging subgraphs  $H_{x_ix_{i+1}}$ and $H_{y_jy_{j+1}}$ for special handling later. 

To merge the two drawings, insert sufficiently many new columns between $y_j$ and $y_{j-1}$ in $\Gamma_C^+$, and place $\Gamma_P^+$ sufficiently far below the horizontal line segment $(y_j,y_{j-1})$.  Here, ``sufficiently far'' means that there are at least $h(\beta)-1$ rows between $(y_j,y_{j-1})$ and $\Gamma_P$; therefore the right hanging subgraphs of $P_h$ fit without overlapping $\Gamma_C^+$.  (We will actually need even more rows later.)


We now let $P(y_{j})$ be the minimum box that includes both  of boxes of $y_j$ in $\Gamma^+_C$ and $\Gamma^+_P$; with our placement of $\Gamma^+_P$ and because we removed hanging subgraphs this does not overlap  other vertices.    
To unify the two copies of $x_i$, we need to be more careful.  Expand the copy of $x_i$ in $\Gamma^+_C$ vertically downward until the rows containing the copy of $x_i$ in $\Gamma^+_P$; this is $P(x_i)$.  Delete the copy in $\Gamma^+_P$, and expand all its incident horizontal edges leftward until $P(x_i)$.  By case-assumption $x_i$ belongs only to the first face of $P_h$ and has height 3.  
One verifies that therefore (in $\Gamma_P^+$) all its incident vertically drawn edges are either $(x_i,y_j)$ (which we can ignore because it is also realized in $\Gamma^+_C$) or belong to the hanging subgraph $H_{x_ix_{i+1}}$ (whose drawing was omitted).  So extending only horizontal edges suffices to draw all incident edges at $x_i$. 

\begin{figure}[htp!]
\centering
\includegraphics[scale=0.39,page=9]{example.pdf}
\caption{Drawing an umbrella.  Two hanging subgraphs are not yet included.
}
\label{fig:drawUmbrella}
\end{figure}


We have four hanging subgraphs that were omitted.  Two of them are easily merged as follows.  For graph $H_{x_{i-1}x_i}$, use a $TC_{1,h}$-drawing, where $h\geq 2$ is the height of $P(x_i)$; recall that we can create this from a $T_{1,2}$-drawing by inserting rows.
For graph $H_{x_ix_{i+1}}$, the bottom sides of $P(x_i)$ and $P(x_{i+1})$ are one row apart and so we can merge a $TC_{1,2}$-drawing almost as it would have been done in
$\Gamma^+_P$ (the only change is that we stretch it horizontally
so that it uses the new location of $P(x_i)$).

\paragraph{Step 2: Drawing hanging subgraphs at $y_j$.}

We still need to merge hanging subgraphs $H_{y_{j}y_{j-1}}$ and
$H_{y_{j+1}y_{j}}$, and this
causes major difficulties because the endpoints of their reference-edge
are not in a position that we can match in a drawing of the subgraph.
We solve this by breaking $H_{y_{j}y_{j-1}}$ and $H_{y_{j+11}y_j}$ down
further, and using one extra row for each edge incident to $y_{j}$ in these subgraphs.  Specifically,
let $u_1,\dots,u_d$ be neighbours of $y_{j}$ in $H_{y_{j}y_{j-1}}$ indexed in clockwise order, beginning  with $y_{j-1}{=}u_1$.
Insert $d$ new columns to the right of $\Gamma_P^+$ to widen $y_{j-1}$, and place $u_2,\dots,u_{d}$ as left-aligned bars of length $d{-}1,\dots,1$ in the newly added columns and below $u_1$.  We can connect them horizontally to $y_j$.
For any edge $(u_h,u_{h+2})$ that may exist, 
shorten the left end of $P(u_{h+1})$ by one unit so that a vertical visibility between $u_h$ and $u_{h+2}$ is formed. 
Assuming we left at least $d+h(\beta)-2$ rows between $(y_j,y_{j-1})$ and $\Gamma_P$, we can now merge the hanging subgraphs $H_{u_hu_{h+1}}$ in the rows below the staircase formed by the bars, using an $TB_{2,1}$-drawing. Note that crucially there is no hanging subgraph at $(y_{j},u_d)$, otherwise $y_j$ would have further neighbours in $H_{y_jy_{j-1}}$. 
So this merges all of $H_{y_{j}y_{j-1}}$ as required.

Similarly, to merge $H_{y_{i+1}y_i}$, let $z_1,\dots,z_{d'}$ be neighbours of $y_j$ in $H_{y_jy_{j+1}}$, indexed in counter-clockwise order, beginning at $y_{j+1}$.  Place $z_2,\dots,z_{d'}$ as bars forming a staircase, in the $d'{-}1$ rows above $y_{i+1}$. Shorten left ends if needed so that edges among them drawn.  Merge hanging subgraph $H_{z_hz_{h+1}}$ (for $h=1,\dots,d'-1$) in the rows above, and note that this merges all of $H_{y_{j+1}y_j}$ and creates no overlap as long as there are at least $d{-}1+d'{-}1+h(\beta){-}1=d+d'+h(\beta)-3$ rows between $(y_j,y_{j-1})$ and $\Gamma_P$.  See Fig.~\ref{fig:VRDelta}.
} 


\later{
\begin{figure}[htp]
\hspace*{\fill}
\includegraphics[scale=0.5,page=3,trim=250 180 110 180,clip]{example.pdf}
\hspace*{\fill}
\caption{Closeup: Inserting the two remaining hanging subgraphs.
}
\label{fig:VRDelta}
\end{figure}

For any
vertex $v$ in $U_\pi$, let $d(v)$ be the number of edges 
incident to $v$ that do {\em not} belong to $U_\pi$.
The height of our construction depends on $d(y_j)$ as follows.

\begin{lemma}
\label{lem:drawUmbrella}
For any $\{\sigma,\tau\}=\{1,2\}$, $G$ has an embedding-preserving 
visibility representation $\Gamma$ 
that is a $TC_{\sigma,\tau}$-drawing
with height $h(\alpha)+h(\beta)+D+7$, where 
$D=\max\{d(x_i),d(y_j)\}$  if the umbrella has transition edge $(x_i,y_j)$ and  $D=0$ otherwise.
\end{lemma}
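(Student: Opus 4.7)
My plan is to case-split on whether the $\pi$-path $P_\pi$ coincides with the cap $C$ (the degenerate umbrella with no handle). If so, there is no transition edge, $D=0$, and I simply apply Lemma~\ref{lem:draw_cap} to the cap with the given parameters $\{\sigma,\tau\}$; the resulting $TC_{\sigma,\tau}$-drawing has height $\max\{h(\alpha),h(\beta)\}+3\leq h(\alpha)+h(\beta)+7$, which meets the target bound.

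Otherwise I would execute the two-step construction already sketched in Section~\ref{sec:sqrt_umbrella}. Let $f,f'$ denote the first two faces of $\skel{P_h}$ starting at the transition edge $(x_i,y_j)$; by symmetry I may assume $x_i$ does not belong to $f'$ (in the mirror case the roles of $x_i$ and $y_j$ are swapped and $d(x_i)$ plays the role $d(y_j)$ plays below). I would call Lemma~\ref{lem:draw_cap} with $\{\sigma,\tau\}$ to obtain $\Gamma_C^+$, delete the subdrawings of $H_{x_{i-1}x_i}$ and $H_{y_{j-1}y_j}$, then call Lemma~\ref{lem:drawPath} on the handle $P_h$ with reference-edge $(x_i,y_j)$ and $\sigma'=3$ to obtain $\Gamma_P^+$, and delete the subdrawings of $H_{x_ix_{i+1}}$ and $H_{y_jy_{j+1}}$. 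Next I would splice the two drawings together: insert enough fresh columns between $y_j$ and $y_{j-1}$ in $\Gamma_C^+$, place $\Gamma_P^+$ the prescribed number of rows below the horizontal segment of $(y_j,y_{j-1})$, unify the two copies of $y_j$ into one box, and stretch the box of $x_i$ from $\Gamma_C^+$ downward while extending its incident horizontal edges leftward so that the single $x_i$-box realizes all its incident edges. Finally I would reinsert $H_{x_{i-1}x_i}$ and $H_{x_ix_{i+1}}$ from suitably stretched $TC_{1,h}$- and $TC_{1,2}$-drawings, and realize $H_{y_{j-1}y_j}$ and $H_{y_{j+1}y_j}$ via the two staircases of bars at $y_j$, merging each sub-hanging subgraph in $TB_{2,1}$-form below/above the appropriate bar.

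For the height budget, the cap-path portion at the top uses at most four rows, with the edge $(y_j,y_{j-1})$ on its bottom row. Below this row I need a gap of $d(y_j)+h(\beta)-1$ rows (to accommodate the two staircases of $d(y_j)$ bars together with the $h(\beta)-1$ rows of right hanging subgraphs of $\Gamma_P^+$ that sit above $\Gamma_P$), followed by the five rows of $\Gamma_P$, followed by the $h(\alpha)-1$ rows used by the left hanging subgraphs of $\Gamma_P^+$ below $\Gamma_P$. Summing gives total height at most $4+(d(y_j)+h(\beta)-1)+5+(h(\alpha)-1)=h(\alpha)+h(\beta)+d(y_j)+7\leq h(\alpha)+h(\beta)+D+7$. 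The remaining cap-hanging subgraphs on the far left and far right of the middle region sit horizontally separated from $\Gamma_P^+$ and use at most $\max\{h(\alpha),h(\beta)\}-1$ rows below the cap-path, hence are well within the available vertical space; the two reinserted hanging subgraphs $H_{x_{i-1}x_i}$ and $H_{x_ix_{i+1}}$ fit because $P(x_i)$ was stretched to span the full relevant region. Since $s$ and $t$ remain at the top corners of $\Gamma_C^+$ with heights $\sigma$ and $\tau$, the overall drawing is still a $TC_{\sigma,\tau}$-drawing.

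The main obstacle I expect is the bookkeeping around the vertex $x_i$: I must argue that after deleting $H_{x_ix_{i+1}}$ from $\Gamma_P^+$, the only edges incident to $x_i$ that were drawn vertically in $\Gamma_P^+$ are $(x_i,y_j)$ (already realized in $\Gamma_C^+$) and edges into the removed hanging subgraph, so that stretching $x_i$ downward and extending only its horizontal edges leftward suffices to merge the two copies without breaking the embedding or introducing edge overlaps. This relies on the case assumption that $x_i$ belongs to only the first face $f$ of $\skel{P_h}$, which is exactly why the case split on $f'$ was made. A secondary obstacle is verifying that the staircase constructions at $y_j$ do not conflict with the left hanging subgraphs of $\Gamma_P^+$ whose top rows sit at the same vertical level; horizontally the staircases live in the freshly inserted columns between $y_j$ and $y_{j-1}$, hence remain separated.
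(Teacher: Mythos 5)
Your proposal is correct and follows essentially the same route as the paper: handle the no-transition-edge case directly via Lemma~\ref{lem:draw_cap}, otherwise run the cap-plus-handle merging construction of Section~\ref{sec:sqrt_umbrella} (with the symmetric case swapping the roles of $x_i$ and $y_j$) and count rows as $4$ for the cap path, $d(y_j)+h(\beta)-1$ for the gap accommodating the staircases and right hanging subgraphs, $5$ for $\Gamma_P$, and $h(\alpha)-1$ below, giving $h(\alpha)+h(\beta)+D+7$. This matches the paper's accounting ($9$ rows for cap and handle boxes plus the same two terms), including the key observation about the vertical edges at $x_i$ and the non-interference of the cap's remaining hanging subgraphs.
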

\begin{proof}
This holds by Lemma~\ref{lem:draw_cap} if there is no transition edge (and hence $H_{XY}$ is empty), so assume $(x_i,y_j)$ exists.
The drawing construction was already given above, so we only need to analyze
the height.  The left hanging subgraphs use at most
$h(\alpha){-}1$ rows below $\Gamma_P$.  
The right hanging subgraphs can be merged as long as we leave $d+d'+h(\beta)-3$ rows between $(y_j,y_{j-1})$ and $\Gamma_P$.
By $d(y_j)=d+d'-2$ this equals $d(y_j)+h(\beta)-1$ rows.
Recall that we made an assumption on $y_j$; in the
other symmetric case where we merge leftward we would use $x_i$ rather than $y_j$ here. 
Finally we need 9 rows for the boxes of the cap and the handle, 
and so the total height is as required.
\end{proof}

\subsection{Drawing a $k$-cap umbrella. }
\label{sec:sqrt_kumbrella}

If our graph has small maximum degree, then Lemma~\ref{lem:drawUmbrella}
gives us the required recursion and we are done.  We now show that even for
larger maximum degree we can achieve height $O(\sqrt{n})$ by extracting
multiple caps before drawing the remaining part of $P_\pi$.

Roughly speaking, a $k$-cap umbrella consists of taking caps $k$ times (always along the path $\pi$) and using the rest as handle.  Formally, 
let $C_1$ be the cap with respect to reference-edge $(s,t)$.
If all of $P_\pi$ belongs to $C_1$ then the recursive procedure stops, $(X_k,Y_k)$ is undefined and $C_2,\dots,C_k$ are the empty set.
Otherwise, let $(X_1,Y_1)$ be the transition edge of $C_1$
and let $H_1$
be the hanging subgraph at $(X_1,Y_1)$.  Repeat in $H_1$, i.e., let $C_2$ be the cap of $H_1$ (with respect to reference-edge $(X_1,Y_1)$), let $(X_2,Y_2)$ be its transition edge, let $H_2$ be the hanging subgraph at $(X_2,Y_2)$ etc., until we obtain cap $C_k$ and its transition edge $(X_k,Y_k)$. 
(If all of $P_\pi$ belongs to $C_1\cup \dots \cup C_i$ for some $i\leq k$ then $C_{i+1},\dots, C_k$ are empty sets and $(X_k,Y_k)$ is undefined.)  The \emph{$k$-cap umbrella} $U^{(k)}$ is then $C_1\cup \dots \cup C_k\cup P_\pi$, and its \emph{last transition edge} is $(X_k,Y_k)$ (which may be undefined).
See Fig.~\ref{fig:2umbrella}.

\begin{lemma}
\label{lem:drawkUmbrella}
For any $\sigma,\tau=\{1,2\}$,  and any $k\geq 1$,
$G$ has an embedding-preserving
visibility representation $\Gamma$ that is a $TC_{\sigma,\tau}$-drawing
with height $h(\alpha)+h(\beta)+3k+D+4$, where 
$D=\max\{d(X_k),d(Y_k)\}$ if the last transition edge is $(X_k,Y_k)$ and $D=0$ if the last transition edge is undefined.
\end{lemma}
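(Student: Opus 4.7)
The plan is induction on $k$. The base case $k = 1$ coincides with Lemma~\ref{lem:drawUmbrella}: the $1$-cap umbrella is $U_\pi$ itself, its only (hence last) transition edge is $(X_1, Y_1) = (x_i, y_j)$, and the claimed height $h(\alpha) + h(\beta) + 3 + D + 4$ agrees with the bound $h(\alpha) + h(\beta) + D + 7$ of that lemma.

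For $k \geq 2$, the plan is to combine Lemma~\ref{lem:draw_cap} applied to $C_1$ with the inductive hypothesis applied to $H_1 := H_{X_1 Y_1}$. First I would use Lemma~\ref{lem:draw_cap} to produce a $TC_{\sigma,\tau}$-drawing $\Gamma_{C_1}$ of $C_1$ together with every hanging subgraph of $C_1$ except $H_1$; this drawing has height $\max\{h(\alpha), h(\beta)\} + 3$. Next I view $H_1$ as an outer-$1$-plane graph with reference-edge $(X_1, Y_1)$: by construction its $(k{-}1)$-cap umbrella consists of $C_2 \cup \dots \cup C_k$ together with the handle of $U^{(k)}$, its last transition edge is still $(X_k, Y_k)$, and its root-to-leaf path is a suffix of $\pi$, so the relevant subtree sizes $\alpha', \beta'$ for $H_1$ satisfy $\alpha' \leq \alpha$ and $\beta' \leq \beta$. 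By induction, for any $\sigma', \tau' \in \{1,2\}$ that I choose, $H_1$ admits a $TC_{\sigma',\tau'}$-drawing $\Gamma_{H_1}$ of height at most $h(\alpha) + h(\beta) + 3(k{-}1) + D + 4$.

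My third step is to merge $\Gamma_{C_1}$ and $\Gamma_{H_1}$, playing the same role that the cap-plus-handle merge played in Lemma~\ref{lem:drawUmbrella}, but with $\Gamma_{H_1}$ substituted for the handle drawing $\Gamma_P^+$. This substitution should actually be easier than the original: in Lemma~\ref{lem:drawUmbrella}, $\Gamma_P^+$ placed $X_1, Y_1$ on the left side of the drawing (forcing a non-trivial staircase construction to reconnect the hanging subgraphs at $Y_1$ that crossed the boundary), whereas here $\Gamma_{H_1}$ is a $TC_{\sigma',\tau'}$-drawing that already places $X_1, Y_1$ in its top corners. I would pick $\sigma',\tau'$ so that the heights of $X_1, Y_1$ in $\Gamma_{H_1}$ agree with the extensions of their boxes coming down from $\Gamma_{C_1}$, and then glue the two drawings by identifying the two copies of $P(X_1)$ and of $P(Y_1)$. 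Crucially, all hanging subgraphs at $Y_1$ that produced the $D$-term in Lemma~\ref{lem:drawUmbrella} now sit entirely inside $\Gamma_{H_1}$, so no new $D$-contribution arises at this level of the induction; this is exactly why the $D$ in the statement depends only on the last transition edge.

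For the height, $\Gamma_{H_1}$ contributes its full $h(\alpha) + h(\beta) + 3(k{-}1) + D + 4$ rows below the horizontal segment for $(X_1,Y_1)$, while $\Gamma_{C_1}$ adds only $3$ further rows above it: the constant rows for $s, t$ and the cap's central path of height $3$. The remaining $\max\{h(\alpha), h(\beta)\}$ rows of $\Gamma_{C_1}$, occupied by $C_1$'s hanging subgraphs, are to be shared with the top rows of $\Gamma_{H_1}$, yielding total height $h(\alpha) + h(\beta) + 3k + D + 4$. The hard part I expect is verifying this last sharing claim, i.e., that $C_1$'s hanging subgraphs and $\Gamma_{H_1}$ coexist in the same rows without collision. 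This should work because $\Gamma_{H_1}$ lives in (a horizontal expansion of) the column range of $(X_1, Y_1)$, while the remaining hanging subgraphs of $C_1$ hang off outer-face edges of $C_1$ other than $(X_1, Y_1)$ and therefore sit in column ranges strictly to the left and right; a horizontal insertion of columns analogous to the one used in the cap-plus-handle merge of Lemma~\ref{lem:drawUmbrella} should separate them cleanly.
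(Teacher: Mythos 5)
Your proposal is correct and follows essentially the same route as the paper's proof: induction on $k$ with base case the $1$-cap umbrella (Lemma~\ref{lem:drawUmbrella}), and for $k\geq 2$ a cap drawing from Lemma~\ref{lem:draw_cap} merged with a recursively obtained $TC$-drawing of $H_1$ (parameter $k-1$) below the transition edge $(X_1,Y_1)$, with the same height bookkeeping. The only omission is the degenerate situation where $P_\pi$ is exhausted by the caps (last transition edge undefined, $D=0$), which the paper dispenses with in one sentence and which your argument covers trivially via Lemma~\ref{lem:draw_cap}.
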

\begin{proof}
For $k=1$ this holds by Lemma~\ref{lem:draw_cap}.
So assume $k>1$, and let $C_1,(X_1,Y_1),H_1$ be as above.
Using Lemma~\ref{lem:draw_cap},
create a drawing $\Gamma^+_C$ of cap $C_1$ and its hanging subgraphs (except $H_1$).
We are done if $P_\pi\subseteq C$, so assume not.
Recursively obtain a drawing $\Gamma_H$ of $H_1$ with respect to reference-edge $(X_1,Y_1)$ and parameter $k-1$, i.e., using the $(k-1)$-cap umbrella $U^{(k-1})$ of $H_1$.
By choosing a $C_{2,1}$-drawing or $C_{1,2}$-drawing as required by $\Gamma_C^+$, drawing $\Gamma_H$ can be merged easily below $(X_1,Y_1)$.
See Fig.~\ref{fig:draw2umbrella}.  The last transition edge $(X_k,Y_k)$ of $U^{(k)}$ (in $G$) is the same as the last transition edge of $U^{(k-1)}$ (in $H_1$),
so $\Gamma_H$ has height at most $h(\alpha)+h(\beta)+3(k-1)+D+4$.
Drawing $\Gamma_C^+$ uses three additional rows above $\Gamma_H$, 
and at most $3+\max\{h(\alpha),h(\beta)\}$ rows everywhere else.
The height-bound follows.
\end{proof}

\begin{figure}[htp!]
\centering
\includegraphics[scale=0.39,page=4]{example.pdf}
\caption{Drawing the complete example using its 2-cap umbrella.}
\label{fig:draw2umbrella}
\end{figure}
} 

\onlyabstract{
\begin{figure}[htp!]
{{\includegraphics[width=\textwidth,page=4]{example.pdf}}} 
\caption{Drawing the complete example except for two hanging subgraphs $H_{y_jy_{j+1}}$ and $H_{y_{j-1}y_j}$.  Here $j=5$.} 
\label{fig:draw2umbrellaAbstract}
\label{fig:complete_example}
\end{figure}

\begin{figure}[htp!]
\centering
{\includegraphics[scale=0.48,page=3,trim=250 180 110 180,clip]{example.pdf}}  
\caption{Closeup on breaking up and merging $H_{y_jy_{j+1}}$ and $H_{y_jy_{j-1}}$.
}
\label{fig:closeup}
\end{figure}
}

\later{
With this we are finally ready for the proof of Lemma~\ref{lem:sqrt}.

\begin{proof}
Set $N=\lceil \sqrt{n} \rceil \leq \sqrt{2F+2}$.
If the last transition edge of the $N$-cap umbrella 
is undefined, then Lemma~\ref{lem:drawkUmbrella} gives a drawing
of height $h(\alpha)+h(\beta)+3N+4\leq h(F)$ as desired.

If the last transition edge is defined, then
consider the caps $C_1,\dots,C_N$ and their transition edges $(X_1,Y_1),\dots,(X_N,Y_N)$. For $k=1,\dots,N$,
let $E_k$ be the set of edges that are incident to $X_k$
or $Y_k$ and do not belong to the $k$-cap umbrella $U^{(k)}$.  Note that for $k<N$, the edges
in $E_k$ either  are in a hanging subgraph of $C_{k}$
or belong to $C_{k+1}$;
either way they do not belong to $E_{k+1}$ because the former edges are not
incident to $X_{k+1}$ or $Y_{k+1}$ and the latter edges belong to $U^{(k+1)}$.
Thus $E_1,\dots,E_N$ are $N\geq \sqrt{n}$ disjoint edge sets.  Since all
of $G$ has at most $3n-6$ edges, therefore $|E_{k^*}|\leq 3\sqrt{n}$ for some 
$1\leq k^*\leq N\leq \sqrt{n}+1$.
\
Apply Lemma~\ref{lem:drawkUmbrella} with $k=k^*$, hence 
and $\max\{d(X_{k^*}),d(Y_{k^*})\}\leq |E_{k^*}|\leq 3\sqrt{n}$.  The resulting embedding-preserving
visibility representation 
has height at most 
\begin{eqnarray*}
h(\alpha)+h(\beta)+3k^*+|E_{k^*}|+4 & \leq &
    h(\alpha)+h(\beta)+3(\sqrt{n}+1) + 3\sqrt{n}+4 \\
    &\leq & h(\alpha)+h(\beta)+6\sqrt{n}+7 \\
    & \leq & h(\alpha)+h(\beta)+11\sqrt{F} + 7 \leq h(F) 
\end{eqnarray*}
since $n\leq 2F+2$ and $6\sqrt{2F+2} \leq 11\sqrt{F}$ for $F\geq 2$.
\end{proof}

So as desired we have constructed embedding-preserving visibility representations of height $h(F)\in O(\sqrt{n})$.  Their width is $O(n)$ can be assumed to be $O(n)$ by Obs.~\ref{obs:widthHeightUpperTrivial}, so the area is $O(n^{1.5})$, and this is asymptotically optimal by Thm.~\ref{thm:lowerGeneral}.  

} 
\onlyfull{We summarize:}

\begin{theorem}
\label{thm:sqrt}
Every $n$-vertex outer-1-plane graph has an embedding-preserving visibility representation of area $O(n^{1.5})$, which is worst-case optimal.
\end{theorem}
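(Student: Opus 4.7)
The plan is short given the lemmas built up in this section. The lower bound is immediate from Theorem~\ref{thm:lowerGeneral}, so I concentrate on the matching upper bound $O(n^{1.5})$. By Observation~\ref{obs:widthHeightUpperTrivial}, since outer-1-plane graphs have $O(n)$ edges, any visibility representation can be compacted to width and height $O(n)$; hence it suffices to produce an embedding-preserving visibility representation of height $O(\sqrt n)$.

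First I would preprocess $G$ into a plane-maximal outer-1-plane graph $G'$ by greedily inserting uncrossed edges. Since $G' \supseteq G$ as an embedded graph, any drawing of $G'$ restricts to one of $G$ of the same area (just delete the dummy edge-segments), so it is enough to draw $G'$. After this step every inner face of $\skel{G'}$ is a triangle or a quadrilateral, so the number $F$ of inner faces of $\skel{G'}$ satisfies $F \in \Theta(n)$. I would then fix an arbitrary reference edge $(s,t)$ on the outer face of $G'$.

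The main step is a direct appeal to Lemma~\ref{lem:sqrt}: for any admissible $\{\sigma,\tau\} \subseteq \{1,2\}$ it produces an embedding-preserving visibility representation of $G'$ of height at most $h(F)$ (indeed a $TC_{\sigma,\tau}$-drawing, though the stronger corner condition is not needed here). What remains is to solve the recurrence
\[
h(F) \;=\; \max_{\alpha^p+\beta^p \le (1-\delta)F^p}\, h(\alpha) + h(\beta) + 11\sqrt{F} + 7,
\]
with fixed $p=0.48<\tfrac12$ and $\delta>0$ from Lemma~\ref{lem:pathTernary}. I would induct on $F$ with the ansatz $h(F) \le c\sqrt{F} - 7$. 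The inductive step reduces to verifying that $\sqrt{\alpha} + \sqrt{\beta} \le (1-\epsilon)\sqrt{F}$ for some constant $\epsilon > 0$ depending only on $p$ and $\delta$; this is a standard power-mean comparison, since for $p < \tfrac12$ the constraint $\alpha^p+\beta^p \le (1-\delta)F^p$ is strictly tighter than $\alpha+\beta \le F$ by a multiplicative factor bounded away from $1$. Taking $c \ge 11/\epsilon$ then closes the induction and yields $h(F) = O(\sqrt{F}) = O(\sqrt{n})$, hence area $O(n^{1.5})$.

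The genuinely hard part is not this final assembly but Lemma~\ref{lem:sqrt} itself: designing a recursive drawing schema whose non-recursive overhead per split is only $O(\sqrt{F})$ rather than $O(F)$. The split is along a balanced root-to-leaf path $\pi$ in the inner dual tree $\dual{G}$, supplied by the generalized Chan-path lemma of~\cite{biedl2021generalized}; the crucial geometric ingredient is the $k$-cap umbrella, which peels caps repeatedly along $\pi$ so that its $k$ transition vertices acquire disjoint external edge-sets. Since $G$ has $O(n)$ edges, choosing $k = \Theta(\sqrt{n})$ and pigeonholing yields a transition vertex of external degree $O(\sqrt{n})$, which keeps the otherwise awkward staircase-merge around that vertex inside the $O(\sqrt{n})$ overhead budget; without this balancing trick the recurrence would only close at height $\tilde O(n^{1/(2-2p)}) \gg \sqrt{n}$.
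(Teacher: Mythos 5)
Your proposal is correct and follows essentially the same route as the paper: reduce to height $O(\sqrt{n})$ via Observation~\ref{obs:widthHeightUpperTrivial}, make the graph plane-maximal, invoke Lemma~\ref{lem:sqrt} (built on the balanced path of Lemma~\ref{lem:pathTernary} and the $k$-cap umbrella with the pigeonhole choice $k=\Theta(\sqrt{n})$), solve the recurrence $h(F)\le c\sqrt{F}-7$, and get optimality from Theorem~\ref{thm:lowerGeneral}. Your power-mean justification that $\sqrt{\alpha}+\sqrt{\beta}\le(1-\delta)^{1/(2p)}\sqrt{F}$ is a valid way to close the induction that the paper leaves as "easily shown."
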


\section{Breaking the $\sqrt{n}$-barrier}
\label{sec:breaking}
\abstractlater{\section{Missing details from Section~\ref{sec:breaking}}
}


We know that the height-bound of Theorem~\ref{thm:sqrt} is asymptotically
tight due to Theorem~\ref{thm:lowerGeneral}.  But the lower bound only holds for
embedding-preserving visibility representations---can we get better 
height-bounds if we relax this restriction?  
\onlyfull{The answer turns out to be yes; we sketch a few constructions  here that also use the path of Lemma~\ref{lem:pathTernary}, and Section~\ref{sec:bar} has more constructions that use an entirely different path $\pi$.}

\both{
\begin{theorem}
\label{thm:breaking}
Any outer-1-planar graph $G$ has
\begin{itemize}
\item an embedding-preserving OPVR of complexity 1, and 
\item an embedding-preserving 1-bend orthogonal box-drawing, and 
\item a visibility representation that is not necessarily embedding-preserving and has at most one crossing per edge,
\end{itemize}
and the drawings have area $O(n^{1.48})$.
\end{theorem}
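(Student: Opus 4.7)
My plan is to adapt the construction of Section~\ref{sec:sqrt} so that the umbrella drawing together with all its hanging subgraphs fits in $h(\alpha) + h(\beta) + O(1)$ rows rather than $h(\alpha) + h(\beta) + O(\sqrt{F})$. Using the same path $\pi$ from Lemma~\ref{lem:pathTernary} and the same inequality $\alpha^p + \beta^p \leq (1-\delta)F^p$ with $p = 0.48$, the recursion $h(F) \leq h(\alpha) + h(\beta) + c$ resolves by induction to $h(F) \leq C F^p$ for a suitable constant $C$ (one checks $C(1-\delta)F^p + c \leq C F^p$ once $F$ is above a constant base case). Hence the height is $O(n^{0.48})$, and combined with the $O(n)$ width from Obs.~\ref{obs:widthHeightUpperTrivial}, the area is $O(n^{1.48})$ uniformly in all three drawing models.

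In the $O(n^{1.5})$ construction the $\sqrt{F}$ additive term came entirely from the step where the two hanging subgraphs $H_{y_jy_{j-1}}$ and $H_{y_jy_{j+1}}$ at the transition-edge endpoint $y_j$ were broken up along a staircase, costing $\Theta(d(y_j))$ rows and forcing $\Theta(\sqrt{n})$ caps to amortize. I would eliminate this by taking a single cap ($k=1$) and merging these two hanging subgraphs using only $O(1)$ additional rows, exploiting the relaxation offered by each model. For the complexity-$1$ OPVR, $P(y_j)$ is promoted to an L-shaped polygon whose short arm reaches down into the $h(\beta)$-row strip already reserved for right-hanging subgraphs; the two hanging subgraphs are then attached as $TC$-drawings along this arm without requiring new rows. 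For the $1$-bend orthogonal box-drawing, each edge from $y_j$ into the two hanging subgraphs is drawn with a single bend that routes it sideways past the handle-boxes into a $TC$-drawing placed in the existing $h(\beta)$-row strip. For the non-embedding-preserving visibility representation, I would reflect $H_{y_jy_{j-1}}$ and $H_{y_jy_{j+1}}$ to the opposite side of $P_\pi$ and treat them as ordinary left-hanging subgraphs; the only edges whose status changes are $(y_{j-1},y_j)$ and $(y_j,y_{j+1})$, and a direct check shows each gains at most one new crossing so the ``at most one crossing per edge'' condition survives.

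The main obstacle is the OPVR construction, where the L-shape must simultaneously respect the cyclic order of edges around $y_j$, avoid overlapping nearby handle- and hanging-vertex-polygons, and fit inside the $h(\beta)$-row budget already consumed by the right-hanging subgraphs. I would orient the L so that its reflex corner separates the edges going into $H_{y_jy_{j-1}}$ from those going into $H_{y_jy_{j+1}}$ in the cyclic order at $y_j$, and use the $TB_{2,1}$-variant of the recursive drawing so that the top bars of the two hanging-subgraph drawings align horizontally with the arm of the L; this makes both the embedding-preservation and the absence of spurious crossings automatic. Once these three merge constructions are verified, the single recursion above applies uniformly to all three drawing models and yields the claimed $O(n^{1.48})$ area.
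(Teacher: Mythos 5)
Your high-level plan coincides with the paper's: keep the cap/handle construction of Section~\ref{sec:sqrt}, use a single cap, replace the staircase at $y_j$ (the source of the additive $\Theta(\sqrt{F})$ term) by an $O(1)$-row merge enabled by each relaxed model, and let the recursion $h(F)\leq h(\alpha)+h(\beta)+O(1)$ with Lemma~\ref{lem:pathTernary} resolve to height $O(n^{0.48})$ and area $O(n^{1.48})$. The gap is in the merge step itself. Of the two problematic subgraphs, only $H_{y_jy_{j-1}}$ can be attached as a $TC_{\sigma,\tau}$-drawing: its reference endpoints $y_{j-1}$ and $y_j$ both sit at the top of the free strip. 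For $H_{y_{j+1}y_j}$ the endpoint $y_{j+1}$ is a handle vertex lying at the \emph{bottom} of the strip (top row of $\Gamma_P$) while $y_j$ is at the top, so no placement of a drawing with both reference endpoints in the top corners fits between them without overlapping the handle or rerouting many edges. Your claim that, for the OPVR, ``the two hanging subgraphs are then attached as $TC$-drawings along this arm'' therefore fails exactly at the point the paper identifies as the main difficulty: it introduces new drawing types (a drawing in which $y_j$ occupies the entire top and left side as a $\Gamma$-shape and $y_{j+1}$ the entire bottom row for the OPVR, and an analogous top-bar/bottom-bar type for the other two models), and proving that such drawings exist with height $h'(\beta)+2$ requires its own recursive construction with a case analysis over the five configurations of the face $f_{st}$. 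This existence proof is the bulk of the paper's argument and is entirely absent from your proposal.

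The fixes you sketch for the other two models do not close this hole either. For the 1-bend box-drawing you propose bending \emph{every} edge from $y_j$ into the two hanging subgraphs; there can be $\Theta(n)$ such edges, and routing them sideways past the handle both destroys embedding-preservation (which the theorem requires for this model) and does not explain where the rest of $H_{y_{j+1}y_j}$ is drawn relative to the fixed box of $y_{j+1}$. The paper instead bends only the single edge $(y_j,y_{j+2})$ and again relies on the new drawing type for $H_{y_{j+1}y_j}$. For the non-embedding-preserving visibility representation, reflecting $H_{y_jy_{j-1}}$ and $H_{y_jy_{j+1}}$ to the other side of $P_\pi$ does not work as stated: their reference endpoints $y_{j-1}$, $y_j$, $y_{j+1}$ stay where the cap/handle drawing puts them (on the upper side), so every edge from the reflected subgraph to these vertices would have to pass through the five rows of $\Gamma_P$, intersecting vertex boxes and creating far more than one crossing per edge; in particular your claim that only $(y_{j-1},y_j)$ and $(y_j,y_{j+1})$ change status and gain at most one crossing each is incorrect. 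To repair the proposal you would need to introduce and recursively construct drawings whose reference-edge endpoints span the top and the bottom of the bounding box, which is precisely the missing ingredient.
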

} 
\onlyabstract{We again give the proof mostly in figures.  $(\star)$
We assume as in Section~\ref{sec:sqrt} that the graph is planar-maximal, a reference-edge $(s,t)$ is given, and we construct a $TC_{\sigma,\tau}$-drawing for any given $\{\sigma,\tau\}=\{1,2\}$.
We use $k=1$, i.e., we draw one cap and use the rest of $P_\pi$ as handle.
\todo{TB: added this sentence since we otherwise never said how many caps to use.}
Recall that the main difficulty in Section~\ref{sec:sqrt} was that two hanging subgraphs could not be merged using $TC_{\sigma,\tau}$-drawings since no suitable space was available.  

If we change the drawing model (using a $\Gamma$-shape or a box in the cap-drawing for $y_j$) then one of these hanging subgraphs can use a $TC_{\sigma,\tau}$-drawing, and all edges can still be drawn, perhaps after adding a bend or changing the embedding.  See Fig.~\ref{fig:breaking}.
The other hanging subgraph uses a new drawing-type (i.e., different restrictions on shapes and locations of the endpoints of the reference-edge).
It is not obvious that this exists, but we can show that it can be constructed by adding two rows.  With this, the recursion for the height-function becomes $h(\alpha)+h(\beta)+O(1)$, which resolves to $O(n^{0.48})$ \cite{Chan02}.
}
\begin{figure}[t]
\hspace*{\fill}
\includegraphics[page=10,scale=0.32,trim=100 130 40 60,clip]{example.pdf}
\hspace*{\fill}
\includegraphics[page=11,scale=0.32,trim=100 130 40 60,clip]{example.pdf}
\hspace*{\fill}
\includegraphics[page=12,scale=0.32,trim=100 130 40 60,clip]{example.pdf}
\caption{Inserting the two remaining hanging subgraphs when permitting ortho-polygons, or bends in edges, or after changing the embedding.
}
\label{fig:breaking}
\end{figure}

\later{
\begin{proof}
By Observation~\ref{obs:widthHeightUpperTrivial} it suffices to achieve height $O(n^{0.48})$.
We give the construction for all three drawing-models simultaneously since they use the same techniques. Let $h'(\cdot)$ be the recursive
function that satisfies $h'(1)=3$ and
$h'(F)=\max_{\alpha^p+\beta^p\leq (1-\delta)F^p} h'(\alpha)+h'(\beta)+9$
(where $\delta>0$ and $p=0.48$ are as in Lemma~\ref{lem:pathTernary}).
We have $h'(n)\in O(n^{0.48})$ \cite{Chan02},  so it suffices
to construct drawings of height $h'(F)$.  

We assume as in Section~\ref{sec:sqrt} that the graph is planar-maximal, a reference-edge $(s,t)$ is given, and we construct a $TC_{\sigma,\tau}$-drawing for any given $\{\sigma,\tau\}=\{1,2\}$.
We use exactly
the same construction as in Section~\ref{sec:sqrt} up until 
Fig.~\ref{fig:drawUmbrella}, so we have an 
embedding-preserving visibility representation for the umbrella and 
merged all hanging subgraphs except $H_{y_{i+1}y_i}$ and $H_{y_iy_{i-1}}$.  We leave $h'(\beta){+}1$ rows free between $(y_j,y_{j-1})$ and the path-drawing $\Gamma_P$; with this the height is $(h'(\alpha){-}1)+(h'(\beta){+}1)+9=h'(F)$.

Let $B_R,B_h$ and $B_U$ be the three boxes that we previously had for $y_j$ in the roof-drawing $\Gamma_R^+$, the handle-drawing $\Gamma_P^+$ and the drawing of the umbrella that had been created.  We define the polygon $P(y_j)$  for $y_j$ and merge hanging subgraphs as follows (we only briefly sketch the approach here and mostly rely on Fig.~\ref{fig:breaking}):  
\begin{itemize}
\item For OPVR-drawings, let $P(y_j)$ by the union of $B_U$ and $B_R$ (widened suitably); this is a $\Gamma$-shaped
	polygon of complexity 1.    With this, $H_{y_jy_{j-1}}$ can
	be merged using a $TC_{2,1}$-drawing.  For $H_{y_{j+1}y_j}$,
	we use what we call a \emph{$\underline{\Gamma}$-drawing}:  $y_{j}$
	is a $\Gamma$-shape that occupies the entire top and left side,
	while $y_{j+1}$ occupies the entire bottom row. 
	(Exactly one of them occupies the bottom left corner, but we will not specify which one as we can easily modify the drawings to achieve either one.)    See Fig.~\ref{fig:drawingTypes}.
	It is not trivial that $\underline{\Gamma}$-drawings exist; we will explain how to create one of height $h'(\beta)+2$ below.  The top and bottom row of the drawing of $H_{y_{j+1}y_j}$ can re-use rows of $\Gamma_R^+$ and $\Gamma_P$, so this fits (after stretching vertically, if needed) into the rows available. 
\item For 1-bend orthogonal box-drawings, 
	$P(y_{j})$ is $B_R$ (widened), and we re-route the edges incident to $B_h$. 
	No edges attach at the left or top side of $B_h$ since we removed $H_{y_jy_{j+1}}$.  Any edges at the bottom side of $B_h$ can simply be extended vertically upward to reach $P(y_j)$.  As for edges on the right side of $B_h$, observe first that $B_h$ has height 2 since $x_i$ was drawn with height 3.
    So there can be at most two edges on the right side of $B_h$; one is $(y_j,y_{j+1})$ while the other (if any) is $(y_j,y_{j+2})$.
	Edge $(y_j,y_{j+1})$ can be redrawn vertically.  If $(y_j,y_{j+2})$ exists,
	then we draw it with a bend placed in $B_h$.  We can merge $H_{y_{j},y_{j-1}}$ easily, but for $H_{y_{j+1}y_j}$ we use what
	we call a \emph{$BB$-drawing}:  $y_j$ and $y_{j+1}$ are bars that occupy the entire top
	row and the entire bottom row, respectively.  Again we will need
	to argue below that this actually exists with height $h'(\beta)+2$.
\item For visibility representations, we proceed as above,
	but route $(y_j,y_{j+2})$ (if it exists) by going vertically
	upward from $y_{j+2}$.  (Note that this changes the embedding.)
\end{itemize}

Consider the face $f_{st}$ of $\skel{G}$, which as discussed in the proof of Lemma~\ref{lem:drawPath} can have one of five configurations.
We explain one easy case in detail and rely on Fig.~\ref{fig:cases} for the others.  
Let us assume that $f_{st}$ does not contain a crossing 
(so it is a triangle $\{s,t,x\}$) and that $(t,x)$ belongs to 
the next face $f'$ of path $\skel{P_\pi}$.  Let $G'$ be the hanging subgraph
at $(x,t)$, recursively obtain a $TC_{2,1}$-drawing $\Gamma'$ of $G'$, and modify it into a $TB_{2,1}$-drawing.    
Insert a new row below $\Gamma'$ into which we insert $P(s)$, and expand $P(x)$ leftward.
This gives an embedding-preserving $BB$-drawing after connecting edges vertically and
inserting a (recursively obtained) $BB$-drawing of the hanging subgraph $H_{sx}$.
If instead we want an $\underline{\Gamma}$-drawing, then we expand $t$
and also $x$ into a $\Gamma$-shape and merge an $\underline{\Gamma}$-drawing
of $H_{sx}$.
See Fig.~\ref{fig:cases}(a-c) for this and all other cases.

\def\scalefactor{0.52}
\begin{figure}[ht!]
\centering
\begin{tabular}{ccccc}
& $\underline{\Gamma}$-drawing, & $BB$-drawing, & $BB$-drawing, & 
\\
Case& ortho-polygon & permit a bend& change embedding & 
\\
\includegraphics[page=10,scale=\scalefactor]{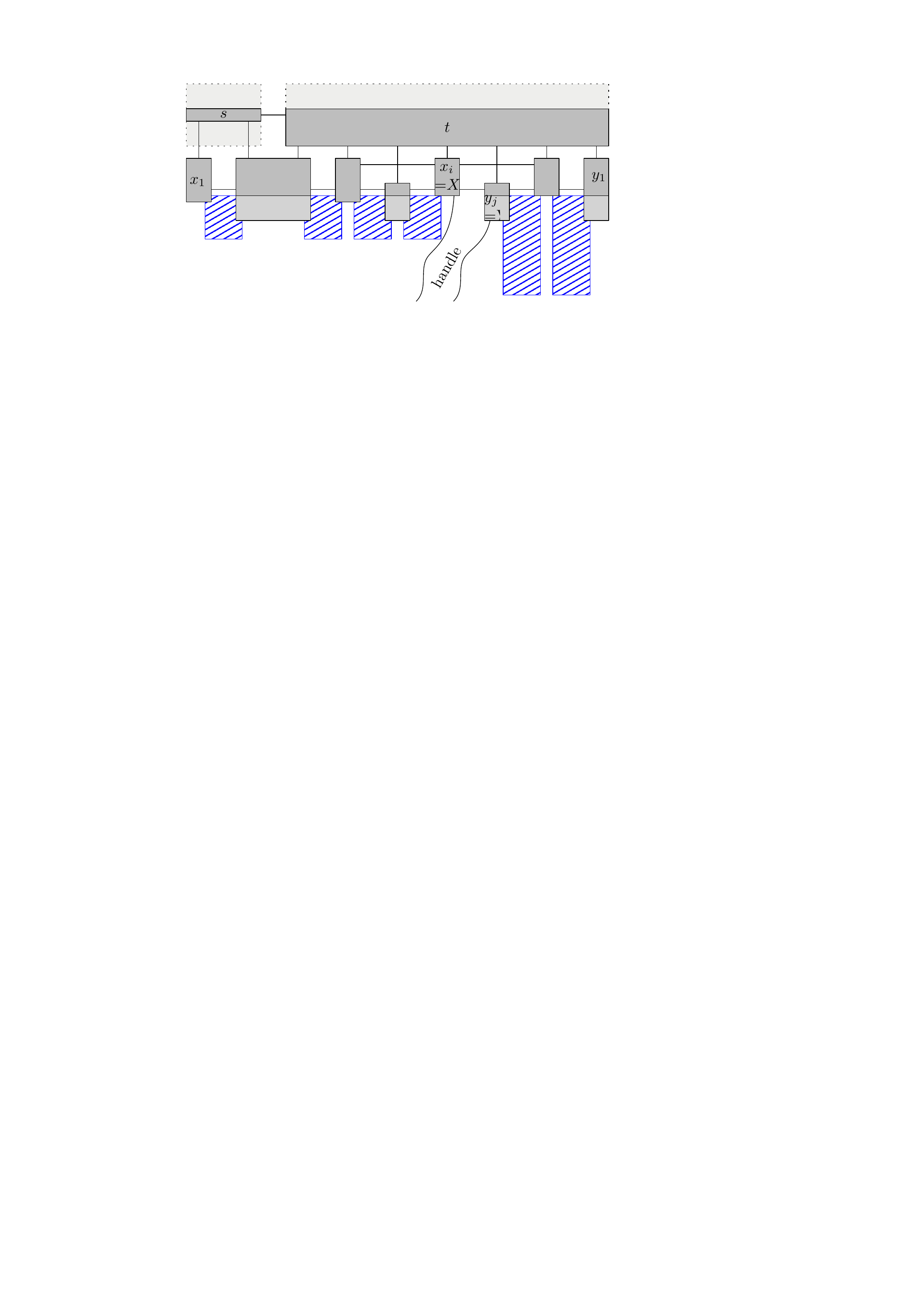}
&  ~~~\includegraphics[page=11,scale=\scalefactor]{sqrt_fig.pdf}~~~
&  \includegraphics[page=12,scale=\scalefactor]{sqrt_fig.pdf}~~~
&  \includegraphics[page=12,scale=\scalefactor]{sqrt_fig.pdf}
\\[1.5ex]
\includegraphics[page=13,scale=\scalefactor]{sqrt_fig.pdf} 
& \includegraphics[page=14,scale=\scalefactor]{sqrt_fig.pdf}
& \includegraphics[page=15,scale=\scalefactor]{sqrt_fig.pdf}
& \includegraphics[page=15,scale=\scalefactor]{sqrt_fig.pdf}
\\[1.5ex]
\includegraphics[page=16,scale=\scalefactor]{sqrt_fig.pdf} 
& \includegraphics[page=17,scale=\scalefactor]{sqrt_fig.pdf}
& \includegraphics[page=18,scale=\scalefactor]{sqrt_fig.pdf}
& \includegraphics[page=19,scale=\scalefactor]{sqrt_fig.pdf}
\\[1.5ex]
\includegraphics[page=20,scale=\scalefactor]{sqrt_fig.pdf} 
&\includegraphics[page=21,scale=\scalefactor]{sqrt_fig.pdf} 
&\includegraphics[page=22,scale=\scalefactor]{sqrt_fig.pdf} 
&\includegraphics[page=23,scale=\scalefactor]{sqrt_fig.pdf} 
\\[1.5ex]
\includegraphics[page=24,scale=\scalefactor]{sqrt_fig.pdf} 
&\includegraphics[page=25,scale=\scalefactor]{sqrt_fig.pdf} 
&\includegraphics[page=26,scale=\scalefactor]{sqrt_fig.pdf} 
&\includegraphics[page=27,scale=\scalefactor]{sqrt_fig.pdf} 
\\
& (a) & (b) & (c) & 
\end{tabular}
\caption{Creating (a) $\underline{\Gamma}$-drawings, (b) and (c) $BB$-drawings.
Upward-striped hanging subgraphs use a $TB_{\sigma,\tau}$-drawing, while downward-striped hanging subgraphs use an $\underline{\Gamma}$-drawing or a $BB$-drawing, respectively.
}
\label{fig:cases}
\end{figure}

The height of $\Gamma'$ is at most $h'(F)$.
We always add at most two rows above/below $\Gamma'$, so the height requirement in the columns containing $\Gamma'$ is at most $h'(F)+2$.  We may also use up to $\max\{h'(\alpha),h'(\beta)\}+4$ rows at the hanging subgraph(s), but this is at most $h'(F)$.  So the drawing has height at most $h'(F)+2$ as required.
%
\end{proof}
} 


\section{Optimum-height drawings in other drawing models}
\label{sec:bar}

\abstractlater{
\section{Missing details from Section~\ref{sec:bar}}
\label{app:bar}
}

In this section we give drawings whose height (and area) is also optimal, but they are in a different drawing model (hence different lower bounds apply).

\subsection{Embedding-preserving bar visibility representations}
\abstractlater{
\subsection{Embedding-preserving bar visibility representations}
}
We proved in Theorem~\ref{thm:lowerBVR} that any embedding-preserving bar-visibility-represen\-ta\-tion has height $\Omega(n)$ for some outer-1-plane graphs.  
A fairly straight-forward greedy-construction shows that we can match this.   The main difficulty is showing that such a drawing exists as all; the area-bound then follows from Obs.~\ref{obs:widthHeightUpperTrivial}). $(\star)$

\both{
\begin{theorem}
\label{thm:barVR}
Any outer-1-planar graph $G$ has an embedding-preserving  bar-visibility representation of area $O(n^2)$ which is worst-case optimal. 
\end{theorem}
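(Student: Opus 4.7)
The lower bound $\Omega(n^2)$ is already Theorem~\ref{thm:lowerBVR}, so the task reduces to producing some embedding-preserving bar visibility representation of $G$: by Obs.~\ref{obs:widthHeightUpperTrivial}, combined with the fact that outer-1-planar graphs have $O(n)$ edges, any such representation can be assumed to have width and height $O(n)$, hence area $O(n^2)$. As the remark accompanying the theorem suggests, essentially all of the work is in showing that the representation exists; I would carry this out by a greedy incremental construction along the outer face.

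My plan is induction on $n$, assuming (as throughout Section~\ref{sec:sqrt}) that $G$ is plane-maximal with a distinguished reference-edge $(s,t)$ on the outer face. Walk the outer face from $s$ to $t$ and let $v$ be the vertex immediately before $t$; then $v$ is the private vertex of an inner face $f$ of $\skel{G}$ that contains $(v,t)$, and either $f=\{v,t,u\}$ is a triangle or $f$ is a crossed quadrilateral $\{v,t,u,w\}$ with crossing $(v,u){\times}(t,w)$. In the triangle case, $v$ has precisely two incident edges (to $t$ and $u$), both uncrossed; in the quadrilateral case, $v$ has three incident edges ($(v,t),(v,u),(v,w)$) with $(v,u)$ crossed by $(t,w)$ in the embedding. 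Removing $v$ (and, in the quadrilateral case, adding the edge $(t,u)$ to restore plane-maximality) yields a smaller plane-maximal outer-1-plane graph $G'$, for which we obtain a representation $\Gamma'$ by induction. We then place $v$ as a new bar in a fresh top row of $\Gamma'$, extending it horizontally far enough to span the $x$-range of its needed attachment points, and reconnect its edges as axis-parallel segments.

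The main obstacle, and the reason existence is non-trivial, is routing the edges at $v$ so that the outer-1-plane embedding is preserved. In the triangle case, one can attach $(v,t)$ and $(v,u)$ as vertical segments dropping from $v$'s new bar onto the bars of $t$ and $u$; this is possible because $t$ and $u$ were consecutive along the part of the outer face of $G'$ that formed the "top'' of $\Gamma'$, so their bars can be widened slightly to expose columns that are unobstructed by intermediate bars. In the quadrilateral case, the vertical drop $(v,u)$ must meet the previously-drawn edge $(t,w)$ at exactly one interior point: we achieve this by drawing $(t,w)$ horizontally in $\Gamma$ (modifying its routing in $\Gamma'$, if necessary, by rerouting the substituted edge $(t,u)$ of $G'$ into a horizontal segment along the same row before reinserting $v$), so that the vertical $(v,u)$ crosses it cleanly. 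The remaining edges $(v,t)$ and $(v,w)$ are drawn as verticals on either side of this crossing, in the cyclic order around $v$ dictated by the embedding.

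Verifying that (i) the cyclic order of edges at every vertex is preserved, (ii) no edge-segment overlaps another, and (iii) the single allowed crossing per edge is realized is a routine case analysis on the two configurations above, and each inductive step consumes exactly one new row and $O(1)$ new columns. Hence the resulting drawing has height $O(n)$, width $O(n)$ by Obs.~\ref{obs:widthHeightUpperTrivial}, and therefore area $O(n^2)$, matching the lower bound.
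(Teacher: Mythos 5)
There is a genuine gap, in two places. First, your structural claim is false: the vertex $v$ immediately before $t$ on the outer face need not lie in a single inner face of $\skel{G}$, and need not have degree $2$ or $3$. Take, e.g., the maximal outerplanar graph with outer cycle $s,a,b,c,v,t$ and interior edges $(s,v),(a,v),(b,v)$: here $v$ is the vertex before $t$, has degree $5$, and lies in four inner faces, so your case distinction (triangle ear vs.\ crossed quadrilateral ear) does not apply to it. One can always find \emph{some} ear (a leaf of the inner dual), but it need not be adjacent to $t$, which undermines the specific induction you set up. (As an aside, in the crossed-quadrilateral case of a plane-maximal graph the boundary edges $(t,u)$ and $(u,w)$ of the skeleton face already exist, so ``adding $(t,u)$ to restore plane-maximality'' indicates a mix-up in the face structure, though that by itself is harmless.)

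Second, and more fundamentally, your induction carries no positional invariant, so the reinsertion step cannot be justified. The inductive hypothesis only gives \emph{some} embedding-preserving bar visibility representation $\Gamma'$ of $G'$; nothing guarantees that the neighbours of the removed vertex are visible from a fresh top row, that ``widening their bars slightly'' exposes unobstructed columns (widening can destroy existing vertical visibilities or hit other bars, and a neighbour's bar may be buried arbitrarily deep in $\Gamma'$), that the new vertical attachments land in the correct position of the rotation at those neighbours, or that $(t,w)$ can be rerouted horizontally ``along the same row''. This is exactly the difficulty the paper's proof is designed around: it fixes a root-to-leaf path in the inner dual and recurses on the hanging subgraphs at the face $f_{st}$, maintaining the strong invariant that every recursively obtained drawing is a $TB_{1,2}$- or $TB_{2,1}$-drawing (reference-edge endpoints occupying the top two rows, with $t$ spanning the entire top row), so that the $O(1)$ new vertices of $f_{st}$ and the subgraph drawings can always be attached in the five skeleton-face configurations (Fig.~\ref{fig:barVR}). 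To repair your argument you would need to state and prove an analogous invariant about where the ear's neighbours sit in $\Gamma'$; without it, the key step ``place $v$ in a new top row and drop verticals'' is an unproven assertion, and the existence claim --- which, as the paper notes, is the entire content of Theorem~\ref{thm:barVR} beyond Obs.~\ref{obs:widthHeightUpperTrivial} and Theorem~\ref{thm:lowerBVR} --- remains unestablished.
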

}
\later{
\begin{proof}
By Obs.~\ref{obs:widthHeightUpperTrivial} the area-bound holds trivially, as long as we show that there always \emph{exists} an embedding-preserving bar visibility representation for outer-1-planar graphs.  (This is not trivial: some 1-planar graphs have no embedding-preserving visibility representations at all \cite{BFL-DCG} and in particular no bar visibility representations.)

Fix an arbitrary reference-edge $(s,t)$ and an arbitrary root-to-leaf path $\pi$.  
We show that $G$ in fact has two such drawings, one is a $TB_{1,2}$-drawing while the other is a $TB_{2,1}$-drawing.  
Consider the five possible configurations at the face $f_{st}$ incident to $(s,t)$ in $\skel{G}$ that we saw in Fig.~\ref{fig:cases}.  Let $G'$ be the hanging subgraph at the edge that $f_{st}$ has in common with the next face of $\skel{P_\pi}$.  Recursively obtain either a $TB_{1,2}$-drawing or a $TB_{2,1}$-drawing $\Gamma'$ of $G'$ (this choice is dictated by the configuration of $f_{st}$).   In all cases, we can add the missing vertices of $f_{st}$ and insert (recursively obtained) drawings of hanging subgraphs at edges of $f_{st}$ to obtain the desired bar visibility representation.
Fig.~\ref{fig:barVR} shows all cases for creating a $TB_{2,1}$-drawing; creating a $TB_{1,2}$-drawing is symmetric.
\end{proof}
}

\later{
\def\scalefactor{0.55}
\begin{figure}[ht]
\includegraphics[page=1,scale=\scalefactor]{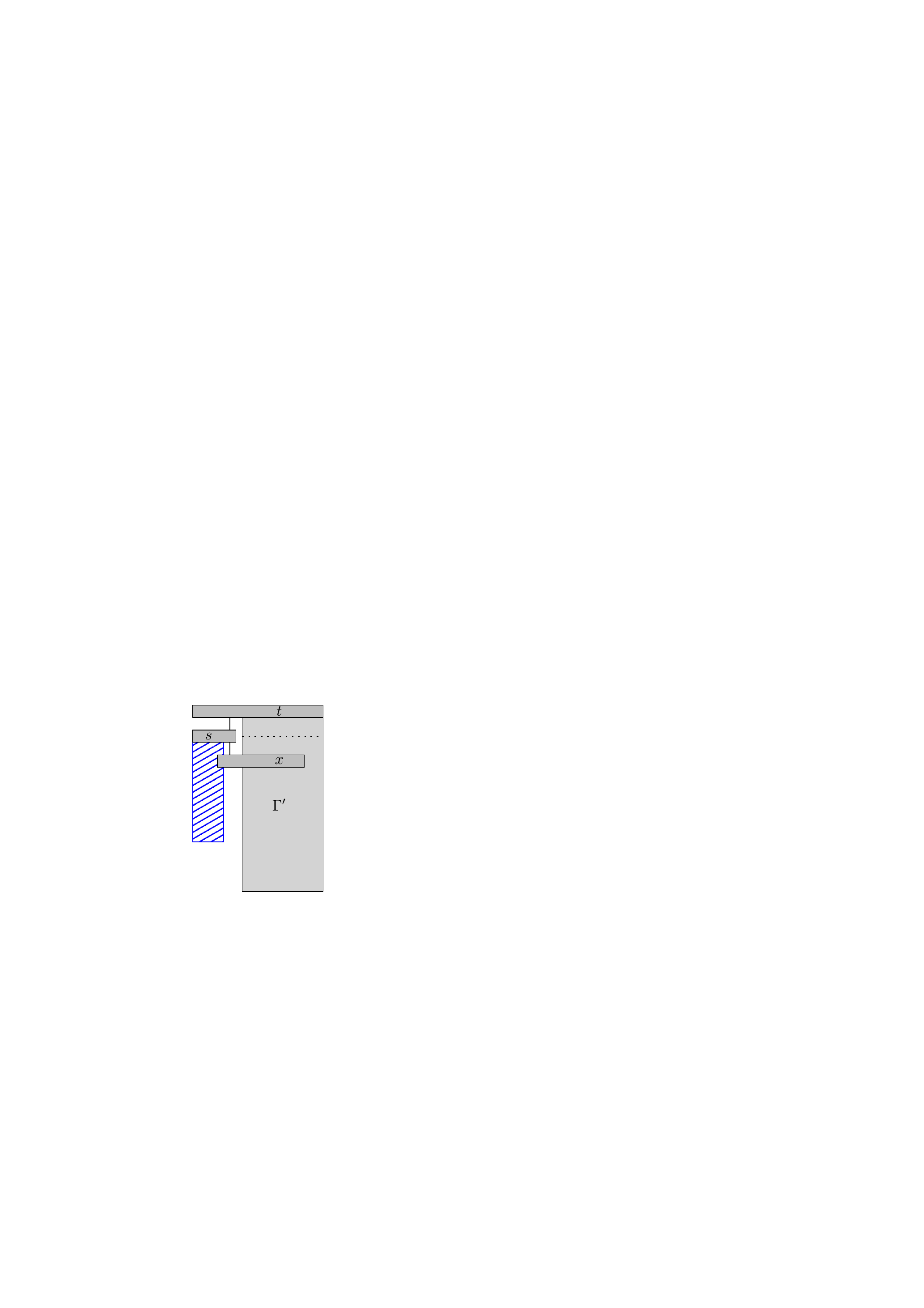}
\hspace*{\fill}
\includegraphics[page=2,scale=\scalefactor]{barVR.pdf}
\hspace*{\fill}
\includegraphics[page=3,scale=\scalefactor]{barVR.pdf}
\hspace*{\fill}
\includegraphics[page=4,scale=\scalefactor]{barVR.pdf}
\hspace*{\fill}
\includegraphics[page=5,scale=\scalefactor]{barVR.pdf}
\caption{Constructing an embedding-preserving visibility representation.}
\label{fig:barVR}
\end{figure}
}

\subsection{More lower bounds} 

We now prove other lower bounds on the height that depend
on the \emph{pathwidth} $pw(G)$ and the number of crossings $\chi(G)$ of the outer-1-plane graph.

\onlyfull{\subsubsection{Lower bound based on pathwidth. }}
We recall that a {\em path decomposition} of a graph $G$
consists of a collection $B_1,\dots,B_\xi$ of vertex-sets
(``bags'') such that every vertex belongs to a consecutive set of bags, and
for every edge $(u,v)$ at least one bag contains both $u$ and $v$.
The {\em width} of such a path decomposition is $\max_i \{|B_i|-1\}$,
and the {\em pathwidth} $pw(G)$ is the minimum width of a path
decomposition of $G$.    Any outer-1-planar graph has pathwidth $O(\log n)$, since it has treewidth 3 \cite{Auer16}.

For planar drawings, the width and height of a drawing is lower-bounded by the pathwidth of the graph \cite{FLW03}. 
%

Less is known for non-planar drawings.  It follows from the proof
of Corollary~3 in \cite{BCK+-MFCS} that any bar-1-visibility representation of graph $G$ has height at least $pw(G)+1$.  
\onlyabstract{Roughly speaking,  we can extract a path decomposition of $G$ by scanning left-to-right with a vertical line and attaching a new bag whenever the set of intersected vertices changes.}
We use the same proof-idea here to show a lower bound for \emph{all} OP-$\infty$-drawings.  $(\star)$

\abstractlater{\subsection{Lower bounds}}
\both{
\begin{theorem}
\label{thm:pwLowerBounds}
Any OP-$\infty$-drawing of a graph $G$ (not necessarily outer-1-planar) has height and width $\Omega(pw(G))$.
\end{theorem}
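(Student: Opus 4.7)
The plan is to construct a path decomposition of $G$ directly from any given OP-$\infty$-drawing by a left-to-right sweep, with bag-size bounded linearly in the height $H$ of the drawing. A $90^\circ$ rotation then gives the analogous bound in terms of the width $W$, so that $pw(G) = O(\min\{W,H\})$, which is exactly what is claimed.

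Fix an OP-$\infty$-drawing $\Gamma$ of $G$ on a $W \times H$ grid. For each $i = 0, 1, \ldots, W-1$, let $\ell_i$ denote the vertical line $x = i + \tfrac{1}{2}$, and define the bag $B_i$ to be the set of vertices $v$ such that either $P(v)$ meets $\ell_i$, or some edge-polyline incident to $v$ meets $\ell_i$. I would first verify that $(B_0, B_1, \ldots, B_{W-1})$ is a valid path decomposition. For the consecutivity condition, note that for a fixed $v$ the set of indices $i$ with $v \in B_i$ is the $x$-projection of the union of $P(v)$ with all edge-polylines incident to $v$; this union is connected (each incident polyline touches $P(v)$), hence its $x$-projection is a single interval. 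For the edge-coverage condition, given any edge $(u,v)$ and any $x = i + \tfrac{1}{2}$ in the $x$-projection of its polyline, both $u$ and $v$ lie in $B_i$ by construction.

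The core estimate is an upper bound on $|B_i|$. A vertex contributes to $B_i$ either because its polygon meets $\ell_i$ or because an incident edge-polyline does. Since all polygon corners and bends sit on integer grid points, each polygon meeting $\ell_i$ does so in a disjoint union of integer-length vertical intervals of total length at least $1$; as the polygons are pairwise disjoint, the sum of intersection lengths is at most $H$, so at most $H$ distinct polygons meet $\ell_i$. Similarly, an edge-polyline meeting $\ell_i$ must cross it via a horizontal segment spanning columns $i$ and $i{+}1$, and since no two edge-segments overlap, at most one such segment lies in any given row; thus at most $H$ edges cross $\ell_i$, contributing at most $2H$ endpoint vertices. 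Hence $|B_i| \le 3H$, and $(B_0, \ldots, B_{W-1})$ is a path decomposition of width at most $3H - 1$. This gives $pw(G) = O(H)$, and rotating $\Gamma$ by $90^\circ$ gives $pw(G) = O(W)$.

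The main obstacle I anticipate is dealing with the liberal nature of the OP-$\infty$ model: vertex polygons have no complexity bound, edges can thread through arbitrarily many polygons, and polygons may reappear at $\ell_i$ in several disjoint intervals. The saving observation is the integer-grid assumption, which forces every object crossing the sweep line (a polygon, or a horizontal edge-segment piercing $\ell_i$) to consume a distinct unit of vertical room, so $|B_i|$ is controlled by $H$ uniformly, independently of vertex or bend complexity.
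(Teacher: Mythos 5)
Your approach is essentially the paper's: both proofs sweep a vertical line across the drawing and read off a path decomposition whose bags have size at most $3H$, giving $pw(G)=O(H)$ and, after rotation, $O(W)$. The only structural difference is that the paper first subdivides each edge once per horizontal segment (so that in the auxiliary graph $G'$ every edge is a vertical segment, edge-coverage is immediate, and one concludes via $pw(G)\le pw(G')$), whereas you keep $G$ and instead put \emph{both} endpoints of an edge into a bag whenever the edge's polyline crosses the sweep line. That variant is fine in principle and even slightly more direct.

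There is, however, a corner case your write-up does not survive as stated, caused by placing the sweep lines at half-integer abscissae. The model allows degenerate vertex polygons (points or vertical segments) and edges drawn as single vertical segments, all confined to one integer column $x=c$; such objects meet no line $x=i+\tfrac12$ at all. Then that vertex lies in no bag (already violating the definition of a path decomposition) and its incident edges are uncovered -- e.g.\ a drawing of $K_2$ as two points joined by a vertical segment produces only empty bags. Relatedly, your size estimate argues that every polygon meeting $\ell_i$ contributes vertical intervals of total length at least $1$; for a bar (a degenerate horizontal segment) the intersection is a single point of length $0$, so the ``lengths sum to at most $H$'' argument does not literally apply -- the bound $\le H$ polygons per line still holds, but because each nonempty intersection contains a point with integer $y$-coordinate and distinct disjoint polygons use distinct such grid points. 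Both issues disappear if you sweep at the $x$-coordinates where vertical edge-segments and vertical polygon sides actually lie (as the paper does) and bound bag sizes by noting that each grid point of a column is shared by at most one vertex polygon and at most two edge segments; with that repair your argument is complete and matches the paper's bound $pw(G)\le 3H-1$.
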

}%
\later{%
\begin{proof}
We only prove the bound on the height $h$; the bound on the width then holds after rotating $\Gamma$ by $90^\circ$.
Define a new graph $G'$ as follows:  For any edge $e=(u,v)$ of $G$ that is drawn with $k$ horizontal segments, replace $e$ by a path from $u$ to $v$ with $k$ new vertices.  For each such new vertex $x$, let $P(x)$ be the horizontal segment that $x$ corresponds to.  Since vertex-polygons are disjoint and no edge-segment overlap, any grid point now belongs to (at most) three polygons $P(\cdot)$; one from a vertex of $G$ and up to two from horizontal edge-segments of $\Gamma$.  

We obtain a path decomposition $\mathcal{P}$ of $G'$ by sweeping a vertical 
line $\ell$ from left to right.  We interrupt the sweep whenever $\ell$ 
reaches the $x$-coordinate of a vertical edge-segment or a vertical
side of a vertex-polygon.  At this time-point, attach a new bag $B$
at the right end of $\mathcal{P}$ and insert all vertices $v$ for which $P(v)$ is intersected by $\ell$.  The properties of a path decomposition are easily 
verified: For any vertex $v$ of $G'$, polygon $P(v)$ span a contiguous set of 
$x$-coordinates, and hence $v$ belongs to a contiguous set of bags.  
Any edge $(v,w)$ of $G'$ is represented by a vertical segment, and hence
covered by the bag created when sweeping the $x$-coordinate of this segment.
Finally any bag has size at most $3h$ since at most three polygons occupy each grid-point in a column.  Since $G'$ is a subdivision of $G$, therefore $pw(G)\leq pw(G')\leq 3h-1$ as required.
\end{proof}

With more care (and after inserting columns and thicken vertex-polygons to have no vertical segments) one can actually show $pw(G')\leq 2h-1$, and $pw(G')\leq h$ if no horizontal edge-segment intersects a vertex-polygon without ending there. We leave those details to the reader.
}
By the remark after Obs.~\ref{obs:widthHeightLowerTrivial} hence some outer-1-planar graphs require area $\Omega(n\, pw(G))$  in all OP-$\infty$-orthogonal drawings.

\onlyfull{\subsubsection{Lower bound based on crossings}}

If we specifically look at drawings that have no crossings, then we can also create a lower bound based on the number of crossings.
\onlyabstract{This is easily obtained by modifying the lower-bound example from \cite{Bie-GD20}. $(\star)$}
\both{
\begin{theorem}
\label{thm:crossingLowerBounds}
For any $k$ and $n\geq 4k$, there exists an outer-1-plane graph with 
$n$ vertices and $k$ crossings that requires at least $2k$ height and width in any planar drawing.
\end{theorem}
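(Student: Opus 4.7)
The plan is to construct an outer-$1$-plane graph $G$ with $n \ge 4k$ vertices and exactly $k$ crossings whose planar drawings are forced to contain $k$ strictly nested triangles on the integer grid, yielding width and height at least $2k$.

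Following the nested $K_4$ family from~\cite{Bie-GD20}, we chain $k$ copies of $K_4$ along the outer face so that each copy contributes one crossing to the outer-$1$-plane embedding, and then attach pendant leaves to reach $n$ vertices (possible because $n\ge 4k$, while the chained $K_4$s need only $O(k)$ vertices). The chaining is designed so that in any planar drawing of $G$, each $K_4$'s triangular face must lie strictly inside the previous one's: informally, the outer three vertices of $K_4^{(i+1)}$, together with its interior vertex, are joined to the rest of $G$ only through structure that lies inside the triangle $T_i$ determined by $K_4^{(i)}$, so planarity offers no escape route to the outside of $T_i$.

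In any planar drawing $\Gamma$ of $G$, each $K_4$ has exactly four triangular faces, so one of its vertices lies strictly inside the triangle $T_i$ formed by the other three. By the chained-nesting property, the triangles satisfy $T_1\supset T_2\supset\dots\supset T_k$, strictly. On the integer grid, a triangle strictly enclosing another requires at least one additional row above, one below, one column to the left, and one to the right (since the inner triangle's vertices must have $y$-coordinates strictly between the outer's minimum and maximum $y$, and likewise for $x$). Hence each nesting level increments both width and height by $2$. The innermost triangle $T_k$ itself has width and height at least $2$, since it must contain its interior integer-coordinate vertex and an integer-grid triangle of height (or width) $1$ has no interior integer points. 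Summing over the $k$ levels yields width and height at least $2+2(k-1)=2k$ for $\Gamma$.

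The main obstacle is the design of the chaining so that strict nesting, rather than mere adjacency or side-by-side placement of consecutive triangles $T_i, T_{i+1}$, is forced in \emph{every} planar drawing. This is exactly the topological argument carried out in~\cite{Bie-GD20}; here we reuse it while scaling the number of $K_4$-copies to $k$ and padding the vertex count with pendant leaves, neither of which disturbs the nesting chain.
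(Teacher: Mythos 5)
Your proposal is correct and follows essentially the same route as the paper: chain $k$ copies of $K_4$ (each drawn with one crossing) into an outer-1-path, invoke the nested-triangles argument of~\cite{Bie-GD20} to force $k$ strictly nested triangles in any planar drawing (hence width and height at least $2k$), and pad with extra vertices to reach $n$. Your explicit grid-counting of the nesting is a minor elaboration of what the paper asserts directly, and like the paper you defer the key nesting argument to the cited construction.
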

}
\later{
\begin{proof}
This essentially follows from the proof of \cite[Theorem 1]{Bie-GD20}. Consider the graph with $4k$ vertices $\{x_1, x_2, \ldots x_{2k}, y_{2k}, y_{2k-1},\ldots, y_1\}$ as the outer-face
where $\{x_{2i-1},x_{2i},y_{2i-1},y_{2k}\}$ form a complete graph $K_4$ 
drawn with one crossing for $i=1,\dots,k$.  
This is an outer-1-path
with crossing edges in every second inner face of the planar skeleton. As argued in \cite{Bie-GD20} any planar drawing of this graph contains $k$
nested triangles, and thus must have height at least $2k$.
If $4k<n$, then adding an arbitrary $n{-}4k$ vertices gives the desired graph.
\end{proof}

\begin{figure}[ht]
\hspace*{\fill}
\includegraphics[page=3,width=0.7\linewidth]{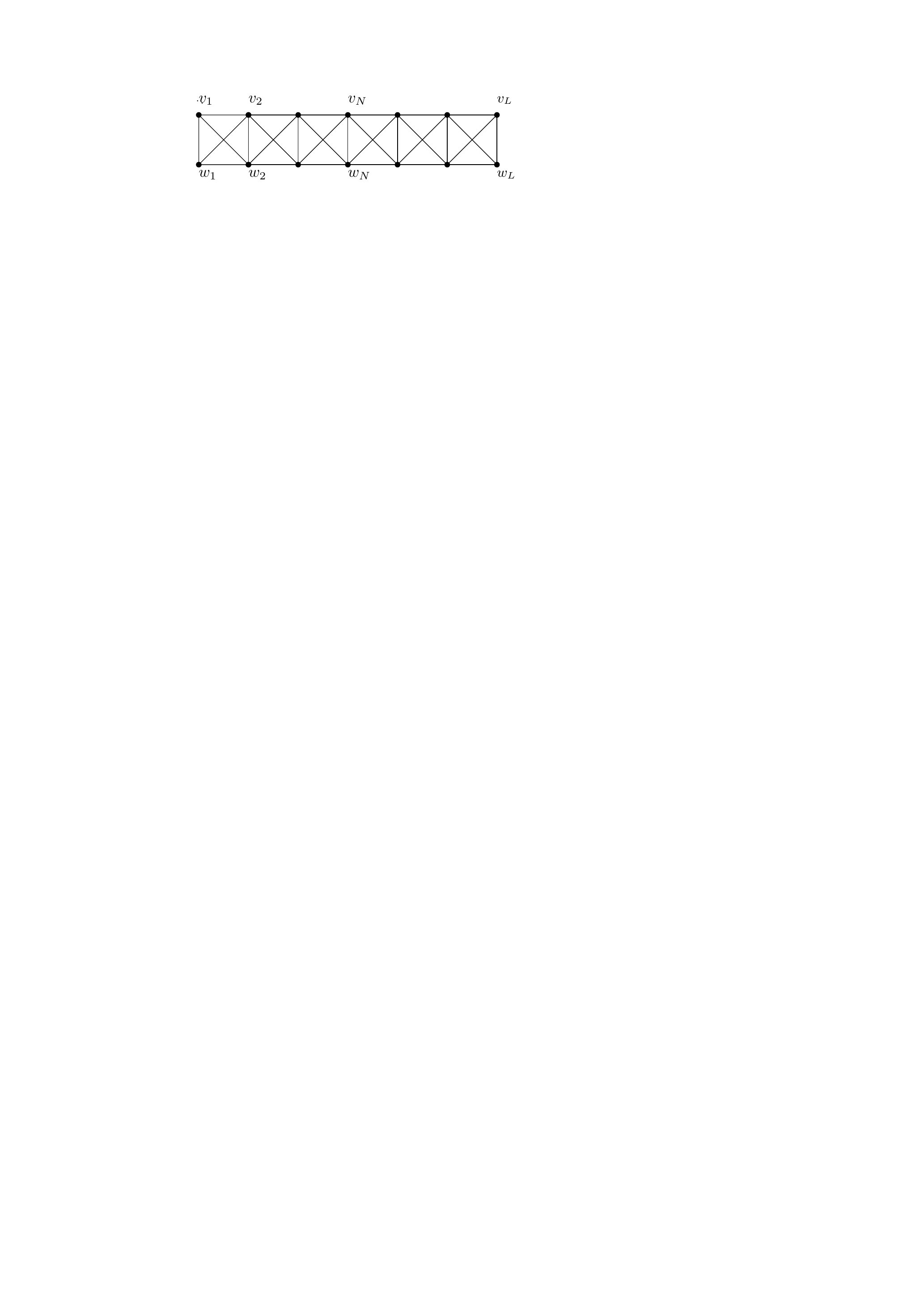}
\hspace*{\fill}
\caption{A graph with $k$ crossings that requires height at least $2k$ in any planar drawing.  }
\end{figure}

Note that the graph used in the example above has pathwidth $3$; thus this crossing number lower bound is not subsumed by the pathwidth lower bound.
}
In particular, the lower bound on the height of a planar OP-$\infty$-orthogonal drawing of $G$  is $\Omega(\max\{\allowbreak pw(G),\allowbreak \chi(G)\})$, which is the same as $\Omega(pw(G)+\chi(G))$.

\todo{For long version or revised version, re-insert complexity-4 result here.}

\subsection{More constructions}

We now turn towards creating bar representations that prove that Theorem~\ref{thm:pwLowerBounds} and \ref{thm:crossingLowerBounds} are tight.  

\abstractlater{\subsection{More constructions}}
\abstractlater{
We now give the details that lead to Theorem~\ref{thm:bar1VR}.
} 

\later{
As in Section~\ref{sec:sqrt} we assume that our input graph $G$ is plane-maximal with a reference-edge $(s,t)$, and we choose some root-to-leaf path $\pi$ in $\dual{G}$.
Let $P_\pi$ be the outer-1-path whose inner dual is $\pi$ and let $f_{st}$ be the face of 
$\skel{P_\pi}$ incident to $(s,t)$.   At most one of $s$ and $t$ can belong to the face $f'$ after $f_{st}$ on $\skel{P_\pi}$; up to symmetry we assume that $s$ does not belong to $f'$.  
The \emph{enhanced $\pi$-path} $P_\pi^+$ is 
the plane-maximal outer-1-path 
formed by $P_\pi$ together with all neighbours of $s$.
Let $\hat{e}\neq (s,t)$ be an arbitrary outer-face edge on the leaf-face of $\pi$, and enumerate the outer face of $P_\pi^+$ as $x_0,\dots,x_\ell,y_r,\dots,y_0$ where $\hat{e}=(x_\ell,x_r)$ and $(s,t)=(y_0,y_1)$.  Note that (in contrast to Section~\ref{sec:sqrt}) the end-edge is \emph{not} $(s,t)$; instead it is the other outer-face edge incident to $s$.

\begin{figure}[ht]
\hspace*{\fill}
\includegraphics[scale=0.4,page=2]{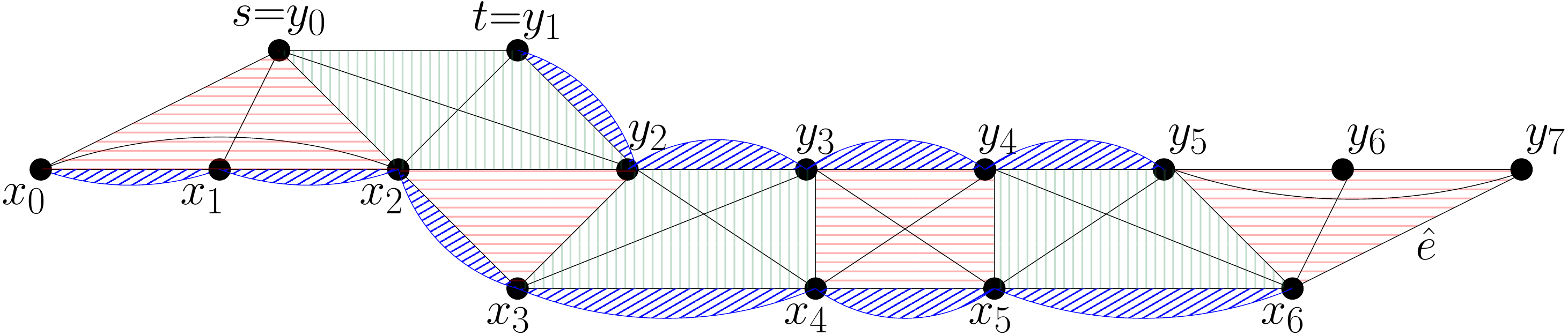}
\hspace*{\fill}
\caption{Example-graph for constructing bar visibility representations.  $P_\pi$ is shaded.
}
\end{figure}

\subsubsection{Drawing $P_\pi^+$.}

}

\begin{figure}[htp!]
\centering
\subfigure[\label{fig:bar_ex}]
{\includegraphics[scale=0.35,page=1]{bar_example.pdf}}\\
\subfigure[\label{fig:outerpathAbstract}]{\includegraphics[scale=0.4,page=5]{bar_example.pdf}}\\
\subfigure[\label{fig:draw_bar_ex}] {\includegraphics[scale=0.4,page=3]{bar_example.pdf}}\\
\subfigure[\label{fig:draw_planar_ex}] {\includegraphics[scale=0.4,page=8]{bar_example.pdf}}\\
\caption{(a) Example-graph.  (b) Drawing its skeleton and merging hanging subgraphs inward.  Some vertex-boxes are artificially wide to match (c).
(c) The bar-1-visibility representation obtained by moving some bars and sometimes traversing bars.
(d) The planar bar-visibility representation obtained by moving some bars and extending them rightwards.
Arrows indicate vertices that get moved outward beyond their neighbour on the right.
}
\end{figure}

\later{
We already discussed in Lemma~\ref{lem:drawPath} how to create a visibility representation of an outer-1-path, but now the situation is different because we want bars, rather than boxes.  
The \emph{canonical drawing} of $\skel{P_\pi^+}$ is defined as follows (we adopt ideas from \cite{Bie-WAOA12}, but use an extra row). 
All bars for $x_0,\dots,x_\ell$ are in row $-1$; all bars for $y_1,\dots,y_r$ are in row $1$.  We assign $x$-coordinates to the endpoints of these bars by parsing 
the faces $f_1,\dots,f_k$ of $\skel{P_\pi^+}$ in order, beginning with the one incident to $(x_0,y_0)$.  
We use $b(u)$ for the bar representing $u$.
Both $b(x_0)$ and $b(y_0)$ begin at $x$-coordinate 1.  If face $f_h$ (for $h\geq 1$) contains no crossing, say $f_h=\{x_i,x_{i+1},y_j\}$, then $b(x_i)$ ends, $b(x_{i+1})$ begins and $b(y_j)$ extends further rightwards.  See e.g.~$\{x_2,x_3,y_2\}$ in Fig.~\ref{fig:outerpath}.  If $f_h$ contains an opposite-boundary crossing, say $f_h=\{x_i,x_{i+1},y_j,y_{j+1}\}$, then 
$b(x_i)$ ends and $b(x_{i+1})$ begins, and one unit further right $b(y_j)$ ends end $b(y_{j+1})$ begins, see e.g.~$\{x_4,x_5,y_3,y_4\}$ in Fig.~\ref{fig:outerpath}.
Finally if $f_h$ contains a same-boundary crossing, say $f_h=\{x_i,y_j,y_{j+1},y_{j+2}\}$, then $b(y_j)$ ends, $b(y_{j+1})$ begins and ends immediately again, and then $b(y_{j+2})$ begins,   see e.g.~$\{x_6,y_5,y_6,y_7\}$ in Fig.~\ref{fig:outerpath}.

\begin{figure}[ht]
\hspace*{\fill}
\includegraphics[scale=0.4,page=6]{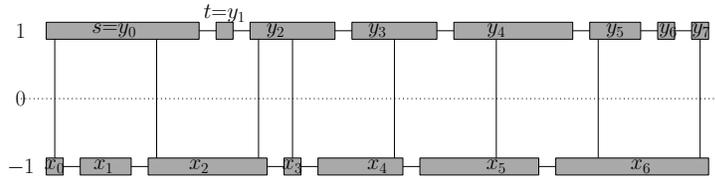}
\hspace*{\fill}
\caption{Drawing the planar skeleton.}
\label{fig:outerpath}
\end{figure}


Observe that row 0, which we call the {\em center row}, remains {\em empty}, i.e.,  intersects no vertex.  (We can delete empty rows in the final drawing, but maintaining it during the construction gives us a place to add further rows as needed.)

The resulting drawing $\Gamma_P$ represents $\skel{P_\pi^+}$, and, in order to draw $P_\pi^+$, we shall draw each pair of crossing edges. The method to do this depends on whether we have same-boundary crossings or opposite-boundary crossings and on the drawing model, and 
will be detailed in the following sections.

\subsubsection{Inward merging.}
As in Section~\ref{sec:sqrt}, the idea is to merge hanging subgraphs of the path, but this time we merge ``inward''.  Assume that for each hanging subgraph we can recursively obtain a $TC_{1,1}$-drawing $\Gamma_H$, i.e., 
bars $b(u)$ and $b(v)$ occupy the top corners.
Expand the center row into as many rows as needed for the hanging subgraphs.  Then for each	hanging subgraph $H$, we insert $\Gamma_H$ (possibly after rotation) 	in these new rows in such a way that the two drawings of the reference-edge of $H$ coincide.  
See Fig.~\ref{fig:inward_merging}.
We remark that, when performing an inward merging, the outer-1-planar embedding is not respected (some vertices will not be on outer-face anymore).
	Inward merging was used in \cite{Bie-WAOA12} to create 
	drawings of outer-planar graphs with height $O(pw(G))$.

\begin{figure}[ht]
\hspace*{\fill}
\includegraphics[scale=0.4,page=5]{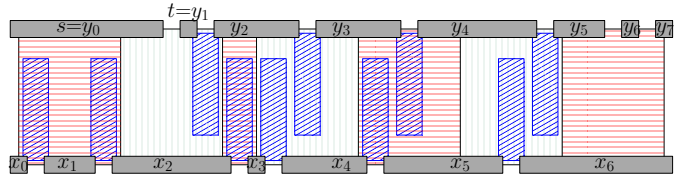}
\hspace*{\fill}
\caption{Inward merging.  We shade the faces of $\skel{P_\pi^+}$ for easier identification later on.}
\label{fig:inward_merging}
\end{figure}



\subsubsection{Handling most same-boundary crossings.}

Let $\Gamma_P$ be the canonical drawing of $\skel{P_\pi^+}$; here vertices $s,t$ are drawn at the left end of row 1 since $(s,t)=(y_0,y_1)$.
We now show how to handle all same-boundary crossings except the one that may occur at
 face $f_{st}$.  Add a new row each above
and below the center-row.  For a same-boundary crossing, let the \emph{middle vertex} be the middle of the three vertices that are on the same boundary.   For each same-boundary crossing (say at face $f\neq f_{st}$), move
the middle vertex into the adjacent new row and re-route the incident edges of 
$f$ vertically  after extending vertex bars as needed.  The two crossing
edges can now both be inserted; one horizontally in row 1 or $-1$, and the
other one vertically.    See Fig.~\ref{fig:sameSideCrossing}.

\begin{figure}[ht] \centering
\centering
\includegraphics[scale=0.4,page=7]{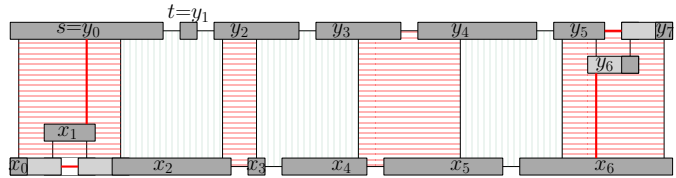}
\caption{Handling most same-boundary crossings.  Edges in crossings are
red (bold).  Light gray boxes are extensions of existing vertex-boxes.}
\label{fig:sameSideCrossing}
\end{figure}

We must be careful to keep space available for merging hanging subgraphs later.  We say that an edge $(u,v)$ \emph{permits merging} in one of the following two cases.  Either $(u,v)$ is drawn horizontally, and the axis-aligned rectangle between $(u,v)$ and the center-row is empty.  Alternatively, $(u,v)$ is drawn vertically, both its ends are on the same side of the center-row, and if (say) $v$ is closer to the center-row than $u$, then $b(v)$ ends at $(u,v)$, and there is an axis-aligned empty rectangle where one horizontal side is $u$, the other horizontal side is on the center, and one vertical side contains $(u,v)$.
In the canonical drawing all edges that might have hanging subgraphs were drawn horizontally and permitted merging.
One verifies that edges that are now drawn vertically (due to a same-boundary crossing) likewise permit merging.

\subsubsection{Handling the remaining crossings.}

Now consider an opposite-boundary crossing, say at face $\{x_i,$ $x_{i+1},$ $y_j,$ $y_{j+1}\}$.
None of $\{x_i,$ $x_{i+1},$ $y_j,$ $y_{j+1}\}$ is the middle vertex
of a same-boundary crossing, so these vertices are in rows 2 and $-2$, respectively.
Handling opposite-boundary crossings, as well as the same-boundary crossing
at $f_{st}$, is the only part where the drawing-algorithm depends on the drawing-model.

\paragraph{Planar bar visibility representations.}

For planar bar visibility representations, we handle opposite-boundary crossings
by moving vertices outward to new rows, and inserting one of each pair
of crossings at the extreme right end of the drawing.  
Specifically, 
at any opposite-boundary crossing at a face $\{x_h,x_{h+1},y_z,y_{z+1}\}$,
	mark $x_h$ and $y_{z+1}$.
Also, if $f_{st}$ is a same-boundary crossing, then it is of the form  $f_{st}=\{s,t,y_1,x_\rho\}$ for some index $\rho$ and we  mark $y_1$ and $x_\rho$.
Finally, mark $t$ if it was not marked yet.

See Fig.~\ref{fig:oppositeBoundaryCrossingBig} for the following process.
For any vertex $v$, let $m(v)$ be the total number of marks at $v$ or at vertices further right on the same boundary as $v$.
We move $y_z$ upwards by $m(y_z)$ rows and we move $x_h$ downwards by $m(x_h)$ rows (inserting new rows as needed and re-routing outer-face edges so that they permit merging).
If $\chi_P$ is the number of crossings in $P_\pi^+$,  then the total number of rows is now at most $4+2\chi_P$:  We started
with 3 rows originally, added one row because we marked $t$, add two rows total for same-boundary crossings (but need those only if there are actually such crossings), and then two more rows for each remaining crossings.

We restore edges and insert the missing crossings 
in order from right to left, i.e., by decreasing index on the boundary.
Consider an opposite-boundary-crossing at face $f=\{x_h,x_{h+1},y_z,y_{z+1}\}$.
Vertices $x_h$ and  $y_{z+1}$ was marked, hence $b(y_{z+1})$ is higher
than the bars of $y_{z+2},\dots,y_r$ while $b(x_h)$ is lower than the bars of $x_{h+1},\dots,x_\ell$.  We can hence expand both $y_{z+1}$ and $x_h$ rightward beyond everything drawn in the rows between them and insert $(x_{h},y_{z+1})$ as a vertical edge at the very right end of the drawing.  We can also draw $(x_{h+1},y_{z})$ vertically within $f$, after expanding $x_h$ and $y_{z+1}$ towards each other if needed.

\begin{figure}[ht] \centering
\hspace*{\fill}
\includegraphics[scale=0.4,page=8]{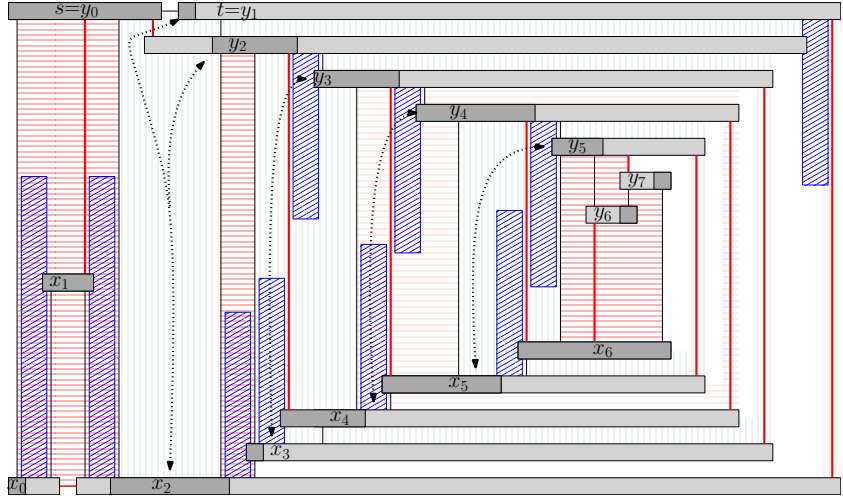}
\hspace*{\fill}
\caption{Handling the remaining crossings by adding rows for a planar bar visibility representation. Dotted arrows indicate vertices that are marked.
}
\label{fig:oppositeBoundaryCrossingBig}
\end{figure}

If $f_{st}$ is a same-boundary crossing, then
$t$ and $x_{\rho}$ were both marked,
and so can be expanded rightward and $(t,x_\rho)$ can be drawn vertically at the very right end of the drawing. 
Also, edge $(s,y_2)$ is drawn  vertically within $f_{st}$, by expanding $s$ and $y_2$ towards each other. 
We also marked $y_2$, and did this so that $b(y_2)$ likewise can be expanded
to the right; then outer-face edge $(y_2,t)$  can be drawn at the right end to permit merging.  

Note that $t$ is always marked and that $s$ is never marked, so $m(t)=m(s)$ and $s,t$ 
are both
in the topmost row and no other vertex is in that row.
If needed, we can expand $t$ to cover the rightmost endpoint.  We hence have:

\begin{lemma}
\label{lem:pathFewCrossings}
For any root-to-leaf path $\pi$ in $\dual{G}$,
graph $P_\pi^+$
has a planar bar visibility representations on at most $4+2\chi(P_\pi^+)$ rows that is a $C_{1,1}$-drawing.

Furthermore, the center row is empty, and
any edge that could have an attached hanging subgraph permits merging.
%
\end{lemma}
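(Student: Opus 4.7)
The plan is to build the drawing in three stages that exactly follow the progression sketched before the lemma, and then verify the three claims (row count, being a $C_{1,1}$-drawing, and the permits-merging condition). In the first stage I would produce the canonical drawing $\Gamma_P$ of $\skel{P_\pi^+}$ on three rows, with $x_0,\dots,x_\ell$ in row $-1$, $y_0,\dots,y_r$ in row $1$, and the center row empty. All outer-face edges of $P_\pi^+$ come out horizontal and each permits merging, because the axis-aligned rectangle between it and the center row is empty by construction.

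In the second stage I would dispose of every same-boundary crossing at a face $f\neq f_{st}$ by inserting one new row just above and one just below the center row. For each such crossing I would move the middle vertex into the adjacent new row and re-route the incident edges of $f$ vertically after extending bars as necessary, so that one crossing edge can be drawn horizontally in the original outer row while the other runs vertically across the inserted row. A direct case inspection of the constant-size local configurations confirms that every new vertical edge still permits merging and that the center row remains empty.

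The third and most intricate stage handles opposite-boundary crossings together with a possible same-boundary crossing at $f_{st}$. I would mark $x_h$ and $y_{z+1}$ for each opposite-boundary crossing at $\{x_h,x_{h+1},y_z,y_{z+1}\}$; mark $y_1$ and $x_\rho$ if $f_{st}=\{s,t,y_1,x_\rho\}$ is a same-boundary crossing; and always mark $t$. For every vertex $v$ I then shift it outward by $m(v)$ rows, where $m(v)$ counts marks at $v$ or further right along the same boundary, inserting rows as needed and re-routing the outer-face edges so that they still permit merging. Processing the marked crossings from right to left then allows me to expand the two marked bars of each crossing rightward past every bar between them, insert the shared crossing edge vertically at the rightmost column, and draw the other crossing edge vertically inside the original face. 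The $f_{st}$-case is handled in the same way, and the additional outer-face edge $(y_2,t)$ is re-drawn at the right end so that it still permits merging.

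The step I expect to require the most care is this last one: the outward shifts must not clash with each other, and the permits-merging property must be maintained through every re-routing. Accounting stage by stage yields at most $3$ rows from the canonical drawing, $2$ more whenever stage two is invoked, $1$ extra from the mandatory mark on $t$, and $2$ per remaining crossing, giving a total of at most $4+2\chi(P_\pi^+)$. Because $s$ and $t$ are the only vertices in the topmost row (after $t$ is shifted outward) and the center row is never written to, the result is a $C_{1,1}$-drawing with empty center row, and the invariant carried through the construction guarantees that every outer-face edge of $P_\pi^+$ still permits merging.
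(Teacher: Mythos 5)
Your proposal is correct and follows essentially the same route as the paper's proof: the canonical three-row drawing of $\skel{P_\pi^+}$, the two extra rows for same-boundary crossings away from $f_{st}$, the marking scheme with outward shifts by $m(v)$ and right-to-left insertion of one crossing edge per pair at the right end (with the special treatment of $f_{st}$, $t$, and the edge $(y_2,t)$), and the identical row accounting yielding $4+2\chi(P_\pi^+)$.
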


\paragraph{Bar-1-visibility representations.}

For bar-1-visibility representations, 
we change the drawing as illustrated in Fig.~\ref{fig:oppositeBoundaryCrossingBar1VR}.
First we add three new rows above and
one new row below, so that we now have rows $-3,\dots,5$.
Parse the bars in row $-2$ left to right (these are the bars $b(x_0),\dots,b(x_\ell)$, except those of middle vertices of opposite-boundary crossings).  Move every second of these bars into row $-3$.  We can choose to move $b(x_0)$ or not, and do this choice as follows.  If $f_{st}$ contains an opposite-boundary crossing, say $f_{st}=\{x_\rho,x_{\rho+1},s,t\}$ then we do the choice such that $x_\rho$ ends in row $-3$; otherwise the choice is arbitrary.  Any outer-face edge $(x_h,x_{h+1})$ can be re-routed vertically (after extending bars of its endpoints towards each other) and then still permits merging.

As for the bars in row 2,
we move {\em both} $b(s)$ and $b(t)$ into row $5$.  We move $b(y_2)$ (if it is in row 2) into row $4$, and move every other of the remaining bars of row 2 into row 3.  This 
makes {\em all} edges (except $(s,t)$) vertical, as required for a
bar-1-visibility representation.

We first handle the same-boundary crossing at $f_{st}$ (if it exists).  Assume that this is $f_{st}=\{s,t,y_2,x_\rho\}$.
Since $b(t)$ is on row 5 and $b(y_2)$ is on row 4, we can route $(t,x_\rho)$ by traversing $b(y_2)$, while $(s,y_2)$ can be routed inside 
$f_{st}$ after extending the vertices towards each other.  
As  
for planar bar visibility representations,
we also expand both $b(t)$ and $b(y_2)$ rightwards and route
outer-face edge $(t,y_2)$ at the right end so that it permits merging.
See Fig.~\ref{fig:oppositeBoundaryCrossingBar1VR}.

Now parse the opposite-boundary crossings from left to right.  
Say we handle crossing edges $(x_h,y_{z+1})$ and $(y_z,x_{h+1})$.
If $b(x_h)$ is in row $-3$, then we expand $b(x_h)$ rightwards 
beyond edge $(x_h,x_{h+1})$ and draw
$(x_h,y_{z+1})$ by traversing $b(x_{h+1})$ (which is in
a higher row since we alternate).    
See e.g.~$\{x_4,x_5,y_3,y_4\}$ in Fig.~\ref{fig:oppositeBoundaryCrossingBar1VR}.
Note that this case always applies if
$f_{st}$ contains a opposite-boundary crossing because $b(x_\rho)$ is placed in row $-3$.

If $b(x_h)$ is in row $-2$, but $b(y_z)$ is higher than $b(y_{z+1})$, then similarly we can
expand $b(y_z)$ rightwards beyond $(y_z,y_{z+1})$ and draw
$(y_z,x_{h+1})$ by traversing $b(y_{z+1})$.
See e.g.~$\{x_3,x_4,y_2,y_3\}$ in Fig.~\ref{fig:oppositeBoundaryCrossingBar1VR}.

The only case that is more difficult is when $b(x_h)$ is on row $-2$
and $y_z$ is on row $2$.   
See e.g.~$\{x_5,x_6,y_4,y_5\}$ in Fig.~\ref{fig:oppositeBoundaryCrossingBar1VR}.
Bars $b(x_h)$ and/or $b(y_z)$ may already
have been traversed when handling an opposite-boundary crossing further left,
so we must be careful not to traverse it again.  However, $b(x_h)$ is traversed  by an edge only if the last handled opposite-boundary crossing was at face $\{x_{h-1},x_h,y_{\ell},y_{\ell+1}\}$
for some $\ell<j$. In this case, none of $b(y_{\ell+1}),\dots,b(y_z)$ are traversed, because $b(y_{\ell+1})$ is not traversed at the crossing if $b(x_h)$ is, and there are no other opposite-boundary crossings involving $y_{\ell+1},\dots,y_z$.
So at least one of $b(x_h),b(y_z)$ has not been traversed. Route one of the crossing edges to traverse  that bar (after extending either $b(x_{h+1})$ or $b(y_{z+1})$ leftward). 

So in all cases we can draw one of the two crossing edges
$(x_h,y_{z+1})$ and $(y_z,x_{h+1})$ by traversing one bar
of the four vertices.  The other edge of the crossing can
be inserted within the face vertically,
by extending its endpoints towards each other.

\begin{figure}[t] 
\centering
{\includegraphics[scale=0.4,page=4]{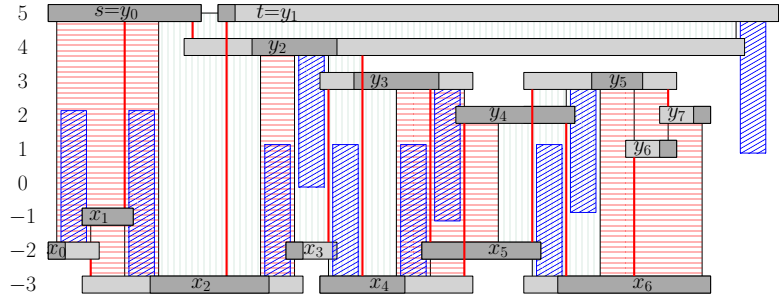}}
\caption{Handling the remaining crossings for bar-1-visibility representations.
}
\label{fig:oppositeBoundaryCrossingBar1VR}
\end{figure}

Note that by construction $s$ and $t$ are the only vertices in the topmost row, and we can expand
$b(t)$ to cover the rightmost corner.  We hence have:

\begin{lemma}
\label{lem:pathBar1VR}
For any root-to-leaf path $\pi$ in $\dual{G}$,
$P_\pi^+\setminus (s,t)$ has a bar-1-visibility representation on 8 rows that is a $C_{1,1}$-drawing.


Furthermore, the center row is empty, and
any edge that could have an attached hanging subgraph permits merging.
\end{lemma}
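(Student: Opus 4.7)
The plan is to verify that the explicit construction sketched in the preceding paragraphs produces a bar-1-visibility representation of $P_\pi^+\setminus(s,t)$ with all four claimed properties, processing the three ``layers'' of the construction (skeleton, bar relocation, crossings) in the order given. I would start from the canonical drawing of $\skel{P_\pi^+}$, which uses only rows $-1,0,1$ with row $0$ empty, and then perform the row additions (three above, one below) and the bar relocations as prescribed: alternate bars among $b(x_0),\dots,b(x_\ell)$ drop to row $-3$, with the parity chosen so that $b(x_\rho)$ drops whenever $f_{st}$ hosts an opposite-boundary crossing through $x_\rho$; $b(s),b(t)$ rise to row $5$, $b(y_2)$ (if present) to row $4$, and the remaining right-boundary bars alternate between rows $2$ and $3$. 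For each outer-face edge along the two boundaries I would check, after horizontally extending the endpoint-bars towards each other, that the edge can be realized as a single vertical segment inside its face and still permits merging on the side facing the (still empty) center row.

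Next I would dispatch the crossings. The same-boundary crossing at $f_{st}=\{s,t,y_2,x_\rho\}$ is handled by the row choice: because $b(t)$ is on row $5$ and $b(y_2)$ on row $4$, the edge $(t,x_\rho)$ can be drawn as a single vertical segment traversing only $b(y_2)$, while $(s,y_2)$ fits inside $f_{st}$; extending $b(t)$ and $b(y_2)$ rightwards then places outer-face edge $(t,y_2)$ at the right end in a merging-permissive position. The opposite-boundary crossings are processed left to right. The two easy subcases, $b(x_h)$ on row $-3$ and $b(x_h)$ on row $-2$ with $b(y_z)$ above $b(y_{z+1})$, are handled by traversing a single bar that is two rows away from the crossing edge's endpoints, with the second crossing edge routed vertically inside the face after extending endpoints towards each other.

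The main obstacle is the subcase where $b(x_h)$ is on row $-2$ \emph{and} $b(y_z)$ is on row $2$: any crossing edge here must pass through either $b(x_h)$ or $b(y_z)$, and I must certify that at least one of these bars has not already been traversed by an earlier crossing. The invariant I would maintain during the left-to-right sweep is that $b(x_h)$ can only have been traversed at a face $\{x_{h-1},x_h,y_\ell,y_{\ell+1}\}$ with $\ell<z$, and that in that event it was $b(x_h)$ (not $b(y_{\ell+1})$) that was traversed; combined with the fact that no opposite-boundary crossing in between can involve indices $\ell+1,\dots,z$ on the right boundary (it would otherwise share a bar with the earlier crossing), this forces $b(y_z)$ to be untouched so far, and the current crossing edge may safely traverse it. A symmetric argument handles the reverse sides.

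Finally I would tally the resources: the active rows are $-3,-2,-1,1,2,3,4,5$, i.e.\ $8$ rows once the empty center row $0$ (used only as a gap and for later hanging-subgraph insertion) is set aside, which matches the bound in the statement. Since $b(s)$ and $b(t)$ are the only bars in row $5$, expanding $b(t)$ rightwards to the last column yields the $C_{1,1}$-drawing property. The center row $0$ is never occupied or crossed by any move or edge in the construction, and each outer-face edge of $P_\pi^+$ (the only candidates for carrying hanging subgraphs) has been argued either to remain horizontal in row $\pm 1$ or to become vertical with an endpoint bar ending at it, so all hanging-subgraph edges permit merging as required.
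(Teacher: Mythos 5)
Your proposal follows the paper's own construction essentially step for step: the canonical drawing of the skeleton, the same bar relocations (the alternation into row $-3$ with the parity choice at $x_\rho$, moving $b(s),b(t)$ to the top row and $b(y_2)$ one below), the same treatment of the crossing at $f_{st}$ by traversing $b(y_2)$, and the same left-to-right sweep with the untraversed-bar invariant in the hard opposite-boundary case, arriving at the identical 8-row, $C_{1,1}$, merging-permitting conclusion. So it is correct and takes the same approach as the paper; the only small slip is your closing remark that some merging edges ``remain horizontal'' in row $\pm 1$ --- in this model every edge except the removed $(s,t)$ is made vertical, exactly as you state earlier in the proposal.
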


\subsubsection{Putting it all together.}

We now put these path-drawings together with drawings of the hanging
subgraphs by merging inward.
} 
\both{
\begin{theorem}
\label{thm:pathDrawing}
\label{thm:bar1VR}
Every outer-1-plane graph $G$ has a planar bar visibility 
representation of area $O((pw(G){+}\chi(G))n)$ and a bar-1-visibility representation
of area $O(n{\cdot}pw(G))$.
\end{theorem}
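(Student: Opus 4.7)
The plan is to prove both area bounds together by structural induction on $G$, using the path-drawings of Lemmas~\ref{lem:pathFewCrossings} and~\ref{lem:pathBar1VR} as the induction step and inward merging for the recursion. As usual, I would first augment $G$ to be plane-maximal (removing the dummy edges at the end), fix a reference edge $(s,t)$ on the outer-face, and choose a root-to-leaf path $\pi$ in $\dual{G}$ rooted at the face incident to $(s,t)$. In contrast to Section~\ref{sec:sqrt}, $\pi$ will not be chosen to balance subtree-\emph{sizes}, but to decrease \emph{pathwidth}: I would pick $\pi$ so that every maximal subtree of $\dual{G}\setminus V(\pi)$ corresponds to a subgraph $H$ with $pw(H)\le pw(G)-1$. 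This is the classical ``spine removal'' idea used in~\cite{Bie-WAOA12} for outer-planar graphs, and it transfers to outer-1-planar graphs because each hanging subgraph of $P_\pi^+$ is itself outer-1-plane and its inner dual is a proper subtree of $\dual{G}$.

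Given such a $\pi$, I would apply Lemma~\ref{lem:pathFewCrossings} (for the planar case) or Lemma~\ref{lem:pathBar1VR} (for the bar-$1$ case) to obtain a $C_{1,1}$-drawing $\Gamma_\pi$ of $P_\pi^+$ (respectively $P_\pi^+\setminus(s,t)$, with $(s,t)$ added horizontally along the top row) using at most $4+2\chi(P_\pi^+)$ rows or $8$ rows respectively. For each hanging subgraph $H$ of $P_\pi^+$, I would recursively compute a $TC_{1,1}$-drawing $\Gamma_H$ in the appropriate model, then inward-merge it into $\Gamma_\pi$ at the edge that permits merging for $H$: expand the empty center-row of $\Gamma_\pi$ into as many rows as $\Gamma_H$ requires within the $x$-range of that edge, and paste $\Gamma_H$ there after identifying the two copies of the reference-edge endpoints of $H$. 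Because each merging edge is guaranteed to permit merging and the drawings are $TC_{1,1}$, the pasting introduces no overlap and preserves the bar/bar-$1$ structure.

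The height recurrences are then immediate. In the bar-$1$-visibility case,
\[
    h(G) \;\le\; 8 + \max_H h(H),
\]
where the maximum is over hanging subgraphs of $P_\pi^+$; since $pw(H)\le pw(G)-1$, induction yields $h(G)\in O(pw(G))$. In the planar bar-visibility case,
\[
    h(G) \;\le\; 4 + 2\chi(P_\pi^+) + \max_H h(H).
\]
Here the crossings $\chi(P_\pi^+)$ encountered along any single branch of the recursion come from disjoint edge-sets of $G$, hence telescope to at most $\chi(G)$, while the $+4$ contributes at most $4\,pw(G)$ over the at most $pw(G)$ levels of recursion. This gives $h(G)\in O(pw(G)+\chi(G))$. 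Observation~\ref{obs:widthHeightUpperTrivial} bounds the width by $O(n)$ in both models, yielding the claimed area bounds.

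The main obstacle is (i) establishing existence of a root-to-leaf path $\pi$ in $\dual{G}$ whose removal strictly drops the pathwidth in each residual piece, and (ii) checking that inward merging of a $TC_{1,1}$-drawing into $\Gamma_\pi$ actually produces another valid $TC_{1,1}$-drawing of $G$ (so that the induction hypothesis can be applied again). Point (i) follows from a greedy choice of $\pi$ together with the standard fact that a graph whose inner dual is a tree has pathwidth controlled by the ``depth'' of its spine decomposition; point (ii) is a matter of verifying that $s$ and $t$ still occupy top corners after all merges, which is forced by the $C_{1,1}$ property of $\Gamma_\pi$ and the fact that the merges happen strictly below the top row.
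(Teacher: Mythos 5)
Your overall architecture (choose a root-to-leaf path in $\dual{G}$, draw $P_\pi^+$ by Lemmas~\ref{lem:pathFewCrossings}/\ref{lem:pathBar1VR}, inward-merge recursively obtained $TC_{1,1}$-drawings of the hanging subgraphs into the center row, bound the width by Observation~\ref{obs:widthHeightUpperTrivial}) is exactly the paper's, and your crossing accounting for the planar case (crossings along one recursion branch are edge-disjoint, hence telescope to $\chi(G)$) matches the paper's as well. The gap is in the induction parameter, i.e.\ your obstacle~(i), which you dismiss too quickly: it is \emph{not} true in general that there is a root-to-leaf path $\pi$ such that every hanging subgraph $H$ satisfies $pw(H)\le pw(G)-1$. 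The obstruction is already visible in the dual tree: if $\dual{G}$ is a complete binary tree of even height $h$ (pathwidth about $h/2$), then \emph{every} root-to-leaf path leaves behind a complete binary subtree of height $h-1$, whose pathwidth is the same, not smaller; the corresponding primal hanging subgraph likewise need not lose pathwidth. So your recurrence $h(G)\le 8+\max_H h(H)$ does not have recursion depth $O(pw(G))$ under your choice of $\pi$, and both the $O(pw(G))$ bound for bar-1-visibility and the ``$+4$ over at most $pw(G)$ levels'' step for the planar case are unsupported. The invocation of ``classical spine removal'' from~\cite{Bie-WAOA12} conflates two different statements: the classical tree fact yields a path (not necessarily root-to-leaf) whose removal drops pathwidth, whereas your construction needs a \emph{root-to-leaf} path because the recursion is anchored at the reference-edge face.

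The paper resolves exactly this issue by inducting on a different parameter: the \emph{rooted pathwidth} $rpw(T)$ of the dual tree $T=\dual{G}$, defined by minimizing over root-to-leaf paths, for which a spine with strict decrease $rpw(T')<rpw(T)$ in every residual subtree exists by definition. The induction then gives heights $6\,rpw(T)+1$ and $2\,rpw(T)+2\chi+1$, and only at the end one converts back via $rpw(T)\le 2\,pw(T)+1$ and $pw(T)\le pw(G)$, losing a constant factor. Your argument becomes correct if you replace ``$pw(H)\le pw(G)-1$'' by this rooted-pathwidth invariant; as written, the key inductive step is false. (Two smaller points: in the bar-1 model the final edge $(s,t)$ cannot simply lie horizontally in the top row, since all edges must be vertical --- the paper inserts it and then lifts one endpoint into a new row; and when the merging edge of a hanging subgraph is drawn vertically you must shift one endpoint bar of $\Gamma_H$, as the paper does, rather than pasting directly.)
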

}
\onlyabstract{We can again only sketch the proof  $(\star)$.  
We first draw the planar skeleton of some outer-1-path and the hanging subgraphs much as was done for outer-planar graphs in \cite{Bie-WAOA12}.  
Based on the pathwidth (or actually the closely related parameter \emph{rooted pathwidth}), extract a root-to-leaf path $\pi$ in the dual tree $\dual{G}$ such that the rooted pathwidth of all subtrees is smaller.    Expand $P_\pi$ by adding all neighbours of $s$ to get $P_\pi^+$. Create a bar visibility representation of $\skel{P_\pi^+}$ on three rows.   
See Fig.~\ref{fig:outerpathAbstract}.   Now merge hanging subgraphs ``inward'', i.e., inside the faces of $\skel{P_\pi}$. They hence share rows and the height is only $O(1)$ more than the one of the subgraphs and works out to $O(pw(G))$.  For the merging we need $TC_{1,1}$-drawings, but with our placement of $(s,t)$ this can easily be achieved. 

However, we have not yet drawn the crossings in $P_\pi^+$.  One of each pair of crossing edges can be realized inside a face of $\skel{P_\pi^+}$.  For bar-1-visibility representations, we realize the other edges by moving vertex-bars inward or outward by one unit (plus some special handling near $s$ and $t$). After suitable lengthening of bars the other edge in a crossing can then be realized, sometimes by traversing a bar.  See Fig.~\ref{fig:draw_bar_ex}. For planar drawings, we move bars outward sufficiently far (proportionally to the number of crossings on the right) such that they can be extended rightward without intersecting other elements of the drawing.  The other edge in a crossing can then be drawn on the right.  See Fig.~\ref{fig:draw_planar_ex}.
}
\later{
\begin{proof}
We first need a small detour to explain how to choose path $\pi$.  
Pick an arbitrary reference-edge $(s,t)$, and
let $T$ be the inner tree $\dual{G}$, rooted at $f_{st}$.
We define the {\em rooted pathwidth}
$rpw(T)$ to be 1 if $T$ has exactly one leaf, and $rpw(T)=
\min_\pi \max_{T'\in \mathcal{T}(T,\pi)} \{1+rpw(T')\}$ otherwise, where
the minimum is over all root-to-leaf paths $\pi$, and $\mathcal{T}(T,\pi)$
is the set of rooted subtrees obtained by deleting the vertices of $\pi$ from
$T$.    
The path $\pi$ that achieves
the minimum is called a {\em spine} and satisfies that $rpw(T')<rpw(T)$ for
all $T'\in \mathcal{T}(T,\pi)$.  
See \cite{Bie-OPTI} for details.   

We now prove (by induction on $rpw(T)$) that we have
\begin{itemize}
\item a planar bar visibility representation 
with height at most $2rpw(T)+2\chi+1$,
\item a bar-1-visibility representation of $G\setminus (s,t)$ with  height at most $6rpw(T)+1$.
\end{itemize}
Furthermore, these are $C_{1,1}$-drawings.
In the base case, $rpw(T)=1$, so $G$ is an outer-1-path and the result holds by Lemma~\ref{lem:pathFewCrossings} and \ref{lem:pathBar1VR}  since we can delete the empty center-row.  
So assume $rpw(T)>1$, and let $\pi$ be a spine of $T$.
Apply Lemma~\ref{lem:pathFewCrossings} and \ref{lem:pathBar1VR}, respectively,
to obtain drawing $\Gamma_P$ of $P_\pi^+$.
Apply induction to any hanging subgraph $H$ to obtain its drawing $\Gamma_H$.

In $\Gamma_H$, bars $b(u),b(v)$ of the reference-edge $(u,v)$ occupy the top corners.  In $\Gamma_P$, $(u,v)$ permits merging.
If $(u,v)$ is drawn horizontally in $\Gamma_P$, then we can simply merge
the drawing inward, after adding sufficiently many new rows at the center-row.
If $(u,v)$ is drawn vertically in $\Gamma_P$, 
say with (up to symmetry) both $u,v$ above the center-row and $u$ above $v$ by $Y\geq 0$ rows,  then simply raise $u$ in $\Gamma_H$ by $Y$ rows and call the result $\Gamma_H'$.
Then the placement of $u$ and $v$ is the same in $\Gamma_P$ and $\Gamma_H'$ (up to rotation) and so again after adding new rows as needed at the center row  $\Gamma_H'$ can be merged.  

The number of new rows that we need for the subgraphs is no more than
$\max_H \{ \mathit{height}(\Gamma_H) -2\}$:   we already had the (empty)
center-row that we can use, and we do not need space row of $b(u)$ and $b(v)$ (or all other rows added for $\Gamma_H'$) since these reuse rows that already exist in $\Gamma_P$.

For bar-1-visibility representations, $\Gamma_P$ used 8 rows 
and each $\Gamma_H$ has height at most $6(rpw(T)-1)+1$ by induction
since the rooted pathwidth of $\dual{H}$ is smaller.    So the
height is at most $8+(6rpw(T)-5)-2=6rpw(T)+1$ as desired.
For bar visibility representations,
$\Gamma_P$ used $4+2\chi_P$ rows, and each hanging
subgraph has at most $\chi-\chi_P$ crossings and hence height at most
$2(rpw(T)-1)+2(\chi-\chi_P)+1$.  
So the total height is at most $2rpw(T)+2\chi+1$ as desired.

\medskip
By induction our claim holds.
It is well-known that 
$rpw(T)\leq 2pw(T)+1$ \cite{Bie-OPTI}
and that $pw(T)\leq pw(G)$ \cite{BF02}.
So we create drawings of height proportional to $pw(G)$ and $pw(G)+\chi$, respectively.
For bar-1-visibility representations, currently edge $(s,t)$
is missing, but it can be inserted horizontally, and then be made 
vertical by moving one of $s$ and $t$ up into a new row.  This proves the theorem since the width is $O(n)$
by Obs.~\ref{obs:widthHeightUpperTrivial}.
\end{proof}

} 

\iffull
\noindent By Theorems~\ref{thm:pwLowerBounds} and \ref{thm:crossingLowerBounds} the heights of these 
drawings are asymptotically optimal.  
\fi
\section{Conclusions and open problems}
\label{sec:conclusion}

In this paper, we studied visibility representations of outer-1-planar graphs.  We showed that if the embedding must be respected, then $\Omega(n^{1.5})$ area is sometimes required, and $O(n^{1.5})$ area can always be achieved.  We also studied numerous other drawing models, showing that $o(n^{1.5})$ area can be achieved as soon as we allow bends in the vertices or the edges or can change the embedding.  We also achieve optimal 
area for bar-1-visibility representations and planar visibility representations.
Following the steps of our proofs, it is clear that the drawings can be constructed in polynomial time; with more care when handing subgraph-drawings (and observing that path $\pi$ can be found in linear time \cite{biedl2021generalized,Bie-OPTI}) the run-time can be reduced to linear.
A number of open problems remain:
\begin{itemize}
\item Our drawings of height $O(n^{0.48})$ were based on the idea of so-called LR-drawings of trees \cite{Chan02}, which in turn were crucial ingredients for obtaining small embedding-preserving straight-line drawings of outer-planar graphs. With a different approach, 
Frati et al.~\cite{FPR20} achieved height $O(n^{\varepsilon})$ for drawing outer-planar graphs.   Can we achieve height $O(n^\varepsilon)$ (hence area $O(n^{1+\varepsilon})$ in some of our constructions as well?
\item Our bar-1-visibility representations do not preserve the embedding, both because the edges that go through some vertex-bar are not in the right place in the rotation, and because we merge hanging subgraphs inward.  What area can we achieve if we require the embedding to be preserved? 
\item We achieved height $O(n^{0.48})$ in complexity-1 OPVRs.  It is not hard  to achieve the optimal height $O(pw(G))$ if we allow higher complexity (complexity 4 is enough; we leave the details to the reader).  What is the status for complexity 2 or 3, can we achieve height $o(n^{0.48})$?
\end{itemize}

Finally, are there other significant subclasses of 1-planar graphs for which we can achieve $o(n^2)$-area drawings, either straight-line or visibility representations?



\bibliographystyle{splncs04.bst}
\bibliography{refs}

\clearpage
\appendix

\magicappendix

\end{document}